\let\epsilon\varepsilon
\let\phi\varphi
\theoremstyle{definition}
\newtheorem{claim}[theorem]{Claim}
\tikzset{->,>=stealth',shorten >=1pt,shorten <=1pt,auto,node distance=1cm,
every loop/.style={looseness=6},
initial text={},
el/.style={font=\scriptsize},
every fit/.style={draw,densely dotted,rectangle},
inner sep=2mm,
loopright/.style={loop,looseness=6,out=-45, in=45},
loopleft/.style={loop,looseness=6,out=135, in=225},
loopabove/.style={loop,looseness=6,out=45, in=135},
loopbelow/.style={loop,looseness=6,out=-135, in=-45},
}
\providecommand\st{}
\renewcommand{\st}{\:\mid\:}
\newcommand{\defeq}{\mathrel{\overset{\makebox[0pt]{\mbox{\normalfont\tiny\sffamily def}}}{=}}}
\newcommand{\defiff}{\mathrel{\overset{\makebox[0pt]{\mbox{\normalfont\tiny\sffamily
def}}}{\iff}}}
\newcommand{\pow}{\mathcal{P}}
\newcommand{\ie}{\textit{i.e.}\xspace}
\newcommand{\eg}{\textit{e.g.}\xspace}
\newcommand{\supfun}{\mathsf{Sup}}
\newcommand{\inffun}{\mathsf{Inf}}
\newcommand{\lsupfun}{\mathsf{LimSup}}
\newcommand{\linffun}{\mathsf{LimInf}}
\newcommand{\mpfun}{\mathsf{MP}}
\def\newdef{\emph}
\newcommand{\G}{\texttt{G}}
\newcommand{\F}{\texttt{F}}
\newcommand{\X}{\texttt{X}}
\newcommand{\U}{\mathrel{\texttt{U}}}
\newcommand{\calG}{\mathcal{G}}
\newcommand{\calA}{\mathcal{A}}
\newcommand{\calD}{\mathcal{D}}
\newcommand{\lcp}{\ensuremath{\mathrm{lcp}}}
\newcommand{\last}{\ensuremath{\mathrm{last}}}
\newcommand{\first}{\ensuremath{\mathrm{first}}}
\newcommand{\prefix}{\ensuremath{\subseteq_{\mathsf{pref}}}}
\newcommand{\vinit}{{\ensuremath{v_{\textrm{\sf init}}}}}
\newcommand{\payoff}[1]{\textrm{\sf payoff}_{#1}}
\newcommand{\pfunction}[1]{{\payoff{#1}}}
\newcommand{\vwdom}[1]{\ensuremath{\succcurlyeq_{#1}}}
\newcommand{\wdom}[1]{\ensuremath{\succ_{#1}}}
\newcommand{\switch}[2]{\ensuremath{\left[#1\leftarrow#2\right]}}
\def\Agt{\ensuremath{P}}
\newcommand{\outcome}{\ensuremath{\mathbf{Out}}}
\newcommand{\history}{\ensuremath{\mathbf{Hist}}}
\newcommand{\stratset}{\ensuremath{\Sigma}}
\newcommand{\admstratset}{\ensuremath{\mathfrak{A}}}
\newcommand\player[1]{player~\ensuremath{#1}\xspace}
\newcommand\Player[1]{Player~\ensuremath{#1}\xspace}
\newcommand{\aVal}{\mathbf{aVal}}
\newcommand{\acVal}{\mathbf{acVal}}
\newcommand{\cVal}{\mathbf{cVal}}
\def\Gaih{\ensuremath{\calG^{\geq \aVal_i(h)}}}
\def\wco{\textrm{\sf wco}}
\def\sco{\textrm{\sf sco}}
\newcommand\propadm[2]{\star(#1,#2)}
\def\ltlpayoff{\ensuremath{\textrm{\sf \LTL{}}_\textrm{\sf payoff}}}
\def\LTL{\textrm{\sf LTL}}
\newcommand{\labeling}{\lambda}
\newcommand{\ap}{\mathsf{AP}}
\newcommand{\gAlt}{\mathsf{gAlt}^i}
\newcommand{\aValProp}{\mathsf{aVal}^i}
\newcommand{\acValProp}{\mathsf{acVal}^i}
\newcommand{\ownProp}{\mathsf{V_i}}
\newcommand*{\Cdot}{\raisebox{.8ex}{$\smallfrown$}}
\title{Admissibility in Quantitative Graph Games\footnote{This work was partially
supported by the ERC Starting Grant inVEST (279499) and EPSRC grant EP/M023656/1.}}
\author[1]{Romain Brenguier}
\author[2]{Guillermo A. P\'{e}rez\thanks{Author supported by F.R.S.-FNRS
fellowship.}}
\author[2]{Jean-Fran\c{c}ois Raskin}
\author[3]{Ocan Sankur}
\affil[1]{University of Oxford, Oxford, UK}
\affil[2]{Universit\'{e} Libre de Bruxelles (ULB), Brussels, Belgium}
\affil[3]{CNRS, Irisa, Rennes, France}
\authorrunning{R. Brenguier, G. A. P\'{e}rez, J.-F. Raskin, and O. Sankur}
\subjclass{F.1.1 Automata; D.2.4 Formal methods}
\keywords{Quantitative games, Verification, Reactive synthesis,
Admissibility}
\begin{document}

\maketitle

\begin{abstract}
	{\em Admissibility}
	has been studied  for games
	of infinite duration with Boolean objectives. We extend here
	this study to games of infinite duration with {\em quantitative} objectives.
	First, we show that, under the assumption that optimal worst-case and
	cooperative strategies exist, admissible strategies are guaranteed to
	exist. Second, we give a characterization of admissible strategies using
	the notion of adversarial and cooperative values of a history, and we
	characterize the set of outcomes that are compatible with admissible
	strategies. Finally, we show how these characterizations can be used to
	design algorithms to decide relevant verification and synthesis problems. 
\end{abstract}

\section{Introduction}

 Two-player {\em zero-sum} graph games are the most studied mathematical model
 to formalize the reactive synthesis problem~\cite{PnRo89,Thomas95}.
 Unfortunately, this mathematical model is often an abstraction that is too
 coarse.  Realistic systems are usually made up of several components, all of
 them with their {\em own} objectives. These objectives are not necessarily
 antagonistic.  Hence, the setting of {\em non-zero sum} graph games is now
 investigated in order to {\em unleash} the full potential of automatic
 synthesis algorithms for reactive systems, see
 \eg~\cite{KHJ06,berwanger07,BRS14,CDFR14,KPV14,FismanKL10}.

For a player with objective $\phi$, a
strategy $\sigma$ is said to be {\em dominated} by a strategy $\sigma'$ if $\sigma'$
does as well as $\sigma$ with respect to $\phi$ against all the strategies of
the other players and strictly better for some of them. A strategy $\sigma$ is {\em admissible}
for a player if it is {\em not} dominated by any other of his strategies.
Clearly, playing a strategy which is not admissible is sub-optimal and a
{\em rational} player should only play admissible strategies. 
The elimination of dominated strategies can be
{\em iterated} if one assumes that each player knows
the other players know that only admissible strategies are played, and so on.

While admissibility is a classical notion for finite games in normal form, see
\eg~\cite{Gretlein1983} and pointers therein, its generalization to infinite
duration games is challenging and was only considered more recently. In 2007,
Berwanger was the first to show~\cite{berwanger07} that admissibility, \ie the
avoidance of dominated strategies, is well-behaved in infinite duration
$n$-player non-zero sum turn-based games with perfect information and Boolean
outcomes (two possible payoffs: win or lose). This framework encompasses games
with omega-regular objectives. The main contributions of Berwanger were to show
that $(i)$ in all $n$-player game structures, for all objectives, players have
{\em admissible strategies},
(Berwanger even shows the existence of strategies
that survive the iterated elimination of strategies)
$(ii)$ every strategy
that is dominated by a strategy is dominated by an admissible strategy, and
$(iii)$ for finite game structures, 
the set of
admissible strategies forms a regular set.

While the iterated admissibility formalizes a strong notion of
rationality~\cite{adam2008admissibility}, it has been shown recently that the
non-iterated version is strong enough to synthesize relevant strategies for
non-zero sum games of infinite duration modelling reactive systems.
In~\cite{Faella09}, Faella considers games played on finite graphs and focuses
on the states from which one designated player cannot force a win. He compares
several criteria for establishing what is the preferable behavior of this player
from those states, eventually settling on the notion of admissible strategy.
In~\cite{brs15}, starting from the notion of admissible strategy, we have
defined a novel rule for the compositional synthesis of reactive systems,
applicable to systems made of $n$ components which have each their own
objective. We have shown that this synthesis rule leads to solutions which are
robust and resilient.

Here, we study the notion of admissible strategy in infinite horizon
$n$-player turn-based \emph{quantitative} games played on a finite game
structure. We give a comprehensive picture of the properties related to the
existence of such strategies and to their characterization.  Contrary to the Boolean case, the number of payoffs
in our setting is potentially infinite making the characterization challenging. As in~\cite{berwanger07}, we assume all players have
perfect information. 

\subparagraph{Main contributions.} First, contrary to the Boolean case, we
show that in the quantitative setting, there are dominated strategies that are
not dominated by any admissible strategy (Example~\ref{exa:infinite-chain}).
Second, we show that the existence of worst-case optimal and cooperatively
optimal strategies for all players is a sufficient condition for
the existence of admissible strategies (Thm.~\ref{thm:existence}). Additionally, we show that there are games without worst-case optimal or without cooperative optimal strategies that do not have admissible strategies (Lem.~\ref{lem:sometimes-no-adm}). Third,
we provide a characterization of admissible strategies in terms of antagonistic
and cooperative values---that are classical values defined for quantitative
games---(Thm.~\ref{thm:lhd-admissibles}) and a characterization of the
outcomes compatible with admissible strategies
(Thm.~\ref{thm:ltl-characterization}).  While the first characterization
allows one to precisely describe admissible strategies, the characterization of
the set of outcomes is given in linear temporal logic, and is a useful tool to
reason about the outcomes that can be generated by such strategies.
Finally, we show how to use the aforementioned characterizations to
obtain algorithms to solve relevant decision problems for games with classical
quantitative measures such as $\inffun$, $\supfun$, $\linffun$, $\lsupfun$ and
mean-payoff (Thms.~\ref{thm:strat-admissible}, \ref{thm:mc-admissible}, and
\ref{thm:aa-synthesis}).

\subparagraph{Example.}
Let us consider the game from Fig.~\ref{fig:example2} to illustrate several
notions and decision problems introduced and solved in this paper. The game is
played by two players: Player $1$, who owns the square vertices, and Player
$2$, owner of the round vertices. The measure that we consider here is the
mean-payoff. (But note that, the arguments we will develop in this example are
applicable to the limit inferior and limit superior measures as well.)

First, we note that the (best) worst-case value (or, the antagonistic value)
that Player $1$ can force is equal to $1$, while the antagonistic value for
Player $2$ is equal to $0$. The latter values are meaningful under the hypothesis
that the other player is playing fully antagonistically and not pursuing their
own objective. Now, if we account for the fact that Player~2 aims at maximizing his own payoff and so plays
only admissible strategies towards this goal, then we conclude
that he will never play the edge $(v_2,v_1)$. This is because, from vertex
$v_2$, Player $2$ has a strategy to enforce value $2$ and taking edge
$(v_2,v_1)$ is unreasonable because, in the worst case, from $v_1$ he will only
obtain $0$. As we show in Sec.~\ref{sec:applications}, this kind of reasoning
can be made formal and automated. We will show that, for games with classical quantitative measures,
 it can indeed be decided algorithmically if a finite memory strategy given, for instance, as a finite
state Moore machine, is admissible or not.

Second, a similar but more subtle reasoning to the one presented above allows us
to conclude that Player $1$ will eventually play the edge $(v_1,v_2)$. Indeed,
from vertex $v_1$, Player $1$ can force a payoff equal to $1$ by either taking
edge $(v_1,v_3)$ or $(v_1,v_2)$. Nevertheless, it is not reasonable for him to
play edge $(v_1,v_3)$ because, while this choice enforces a worst-case payoff
equal to $1$ (the antagonistic value), playing edge $(v_1,v_2)$ is better
because it ensures the same worst-case payoff and additionally leaves a
possibility for Player $2$ to help him by taking the cycle $v_2$--$v_4$, giving
him a payoff of $2$.
If we take into account that the adversary is playing
admissible strategies, then, in the words of~\cite{brs15}, we can solve the
assume-admissible synthesis problem. In this example, we conclude that Player
$1$ has a strategy to enforce a payoff of $2$ against all admissible strategies
of Player $2$. A strategy which eventually chooses edge $(v_1,v_2)$ ensures this
payoff. The formalization of this reasoning and elements necessary for its
automation are presented in Sec.~\ref{sec:applications}.
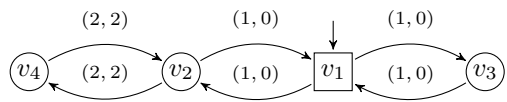
\begin{figure}
	\centering
	\begin{tikzpicture}
    \tikzstyle{every state}=[node distance=2cm,minimum size=5mm, inner sep=1pt];
    \node[state,initial above,rectangle] (s1) {$v_1$};
    \node[state,circle,right of=s1] (s3) {$v_3$};
    \node[state,circle,left of=s1] (s2) {$v_2$};
    \node[state,circle,left of=s2] (s4) {$v_4$};
    \path
    (s1) edge[bend left] node[el]{$(1,0)$} (s3)
    (s3) edge[bend left] node[el,swap]{$(1,0)$} (s1)
    (s1) edge[bend left] node[el,swap]{$(1,0)$} (s2)
    (s2) edge[bend left] node[el]{$(1,0)$} (s1)
    (s2) edge[bend left] node[el,swap]{$(2,2)$} (s4)
    (s4) edge[bend left] node[el]{$(2,2)$} (s2)
    ;
  \end{tikzpicture}
  \caption{Player~1 controls the square vertices, and Player 2 the round
	vertices. The payoff of Player~$i$ is the mean-payoff of the dimension~$i$
	of the weights seen along the run.}
  \label{fig:example2}
 \end{figure}

\subparagraph{Structure of the paper.}
Sec.~\ref{sec:preliminaries} contains definitions.
In Sec.~\ref{sec:existence},
we study conditions under which the existence of
admissible strategies is guaranteed. In
Sec.~\ref{sec:val-characterization}, we give a characterization
of admissible strategies, and in 
Sec.~\ref{sec:ltl-characterization}, a description of the set of
outcomes compatible with admissible strategies.
In Sec.~\ref{sec:applications}, we apply our results to solve relevant decision
problems on games with classical quantitative measures.

\section{Preliminaries}\label{sec:preliminaries}
We denote by
$\mathbb{R}$ the set of \newdef{real numbers}, $\mathbb{Q}$ the set
of \newdef{rational numbers}, $\mathbb{N}$ the set of \newdef{natural
numbers}, and $\mathbb{N}_{>0}$ the set of \newdef{positive
integers}.

	A \newdef{game}  is a tuple $\calG =
	\langle \Agt, $ $(V_i)_{i\in\Agt}, \vinit, E, (\pfunction{i})_{i\in\Agt}
	\rangle$ where:
	\begin{inparaenum}[$(i)$]
		\item $\Agt$ is the non-empty and finite set of players.
		\item $V \defeq \biguplus_{i\in \Agt} V_i$ where for every $i
			\in \Agt$, $V_i$ is the finite set of \player{i}'s
			vertices, and~$\vinit \in V$ is the \newdef{initial
			vertex}.
		\item $E \subseteq V \times V$ is the set of edges ({it
			is assumed, w.l.o.g., that each vertex in $V$ has at
			least one outgoing edge.})
		\item For every $i$ in $\Agt$, $\pfunction{i}$ is a
			\newdef{payoff function} from infinite paths in
			the digraph $\langle V,E \rangle$ to 
			$\mathbb{R}$ that, intuitively, \player{i} will attempt
			to maximize.
	\end{inparaenum}

An \newdef{outcome}~$\rho$ is an infinite path in the digraph $\langle V,E\rangle$,
\ie~an infinite sequence of vertices
$(\rho_j)_{j \in \mathbb{N}_{>0}}$ such that $(\rho_j,\rho_{j+1}) \in E$, for
all $j \in \mathbb{N}_{>0}$.
A finite prefix of an outcome is called a \emph{history}.  The
\newdef{length}~$|h|$ of a history $h = (\rho_j)_{1 \le j \le n}$ is $n$.  Given
an outcome~$\rho= (\rho_j)_{j \in \mathbb{N}_{>0}}$ and an integer $k$, we write
$\rho_{\le k}$ for the history $(\rho_j)_{1\le j\le k}$, that is, the prefix of
length $k$ of~$\rho$.  For a history $h$ and a history or outcome~$\rho$, we
write  $h \prefix \rho$ if~$h$ is a prefix of~$\rho$.  If $h \prefix \rho$, we
write $h^{-1} \cdot \rho$ for the unique history (resp. outcome) that satisfies
$\rho = h \cdot (h^{-1} \cdot \rho)$.  The \newdef{first} (resp.  \newdef{last})
vertex of a history $h$ is $\first(h) = h_1$ (resp.  $\last(h) \defeq
h_{|h|}$).  The \newdef{longest common prefix} of two outcomes or
histories~$\rho,\rho'$ is denoted $\lcp(\rho,\rho')$.  Given vertex $v$ from
$\calG$, let us denote the set of \newdef{successors of $v$} by $E_v \defeq \{
v' \in V \st (v,v') \in E\}$.  

	A \newdef{strategy} of \player{i} is a function $\sigma_i$ that maps any
	history $h$ such that $\last(h) \in V_i$ to a vertex from $E_{\last(h)}$.
	A \newdef{strategy profile} for the set of 
	players~$P' \subseteq \Agt$ is a tuple of strategies, one for each
	player of~$P'$.

Let $\stratset_{i}(\calG)$ be the set of all strategies of \player{i} in
$\calG$.  We write $\stratset(\calG) \defeq \prod_{i\in\Agt}\stratset_i(\calG)$
for the set of all strategy profiles for $\Agt$ in $\calG$, and
$\stratset_{-i}(\calG)$ for the set of strategy profiles for all players but $i$
in $\calG$. We omit $\calG$ when it is clear from the context.
Given $\sigma_i \in \stratset_i$ and $\sigma_{-i} = (\sigma_j)_{j\in
\Agt\setminus\{i\}} \in \stratset_{-i}$, we write
$(\sigma_i,\sigma_{-i})$ for $(\sigma_j)_{j\in \Agt}$.

A strategy profile $\sigma_\Agt \in \stratset$ defines a unique \newdef{outcome}
from any given history~$h$. Formally,
$\outcome_h(\calG,\sigma_\Agt)$ is the outcome $\rho
= (\rho_j)_{j \in \mathbb{N}_{>0}}$ 
such that $\rho_{\leq |h|} = h$ and for $j >
|h|$, if $\rho_j \in V_i$, then $\rho_{j+1} = \sigma_i(\rho_{\le j})$.  Notice
that when~$h$ is a vertex, then this corresponds to starting the game at that
vertex. When $\calG$ is clear from the context we shall omit it and write
simply $\outcome_h(\sigma_\Agt)$.
If $S_i$ is a set of strategies for
\player{i}, we write $\outcome_h(S_i)$ for $\{\rho \st \exists \sigma_i \in S_i,
\sigma_{-i}\in \stratset_{-i} : \outcome_h(\sigma_i,\sigma_{-i}) = \rho
\}$. Here, $\outcome_h(S_i)$ is the set of outcomes that are \newdef{compatible
with} $S_i$.
All notations for outcomes are lifted to histories in the
obvious way. For a strategy profile $\sigma_\Agt \in \stratset$, we write
$\history_h(\sigma_\Agt)$ for the set $\{ \rho_{\le j} \st \rho \in
\outcome_h(\sigma_\Agt), j \ge |h| \}$.

Consider two strategies $\sigma$ and $\tau$ for \player{i}, and a history~$h$.
We denote by $\sigma\switch{h}{\tau}$ the strategy that follows
strategy~$\sigma$ and \newdef{shifts} to~$\tau$ at history~$h$.

Formally, given a history $h'$ such that $\last(h') \in V_i$:
\[
	\sigma\switch{h}{\tau}(h') \defeq
		\begin{cases}
			\tau(h^{-1} \cdot h') & \text{if } h
				\prefix h' \\
			\sigma(h') & \text{otherwise;}
		\end{cases}
\]

We now formally define dominance and admissibility.  We recall the intuition: a
player's strategy~$\sigma$ is dominated by another strategy $\sigma'$ of his if
$\sigma'$ yields a payoff which is as good as that of~$\sigma$ against all
strategies for the other players, and is strictly better against some of them.
A strategy is admissible if no other strategy dominates it. More formally, we
have:

\subparagraph{Dominance.}
	A strategy $\sigma_i \in \Sigma_i$ \newdef{very weakly dominates}
	strategy $\sigma_i' \in \Sigma_i$, written $\sigma_i \vwdom{}
	\sigma'_i$, if
	\(
		\forall \sigma_{-i} \in \Sigma_{-i},
		\pfunction{i}\big(\outcome_\vinit(\sigma'_i,\sigma_{-i})\big)
		\le
		\pfunction{i}\big(\outcome_\vinit(\sigma_i,\sigma_{-i})\big).
	\)
	Strategy $\sigma_i$ \newdef{weakly dominates} strategy $\sigma'_i$,
	written $\sigma \wdom{} \sigma'$, if $\sigma \vwdom{} \sigma'$ and
	$\neg(\sigma' \vwdom{} \sigma)$.  A strategy $\sigma \in \Sigma_i$ is weakly
	dominated if there exists $\sigma' \in \Sigma_i$ such that $\sigma'
	\wdom{} \sigma$.  A strategy that is not weakly dominated is
	\newdef{admissible}. We denote by $\admstratset_i(\calG)$ the set of all
	admissible strategies for \player{i} in $\calG$.

Our characterizations and algorithms are based on the notions of
\newdef{cooperative} and \newdef{antagonistic values} of a history. The
antagonistic value, denoted~$\aVal_i(\calG,h)$, is the maximum payoff that
\player{i} can secure from~$h$ in the worst case, \textit{i.e.}
against all strategies of other players.
The cooperative value, denoted~$\cVal_i(\calG,h)$, is the
best value \player{i} can achieve from~$h$ with the help of other players.
We also define a third type of value: the \emph{antagonistic-cooperative value},
denoted~$\acVal_i(\calG,h)$, which is the maximum value \player{i} can achieve
in $\calG$ with the help of other players while guaranteeing the antagonistic
value of the current history $h$.  Formal definitions follow.

\subparagraph{Antagonistic \& Cooperative Values.}
	The \newdef{antagonistic value of a strategy} and the
	\newdef{cooperative value of a strategy} $\sigma_i$ of \player{i}
	in $\calG$, for a history $h$ are\\
	\begin{minipage}[b]{0.49\linewidth}
	\[
		\aVal_i(\calG,h,\sigma_i) \defeq \inf_{ \tau \in
		\Sigma_{-i}} \pfunction{i}\big( 
		\outcome_{h}(\sigma_i,\tau) \big);
	\]
	\end{minipage}
	\hfill
	\begin{minipage}[b]{0.49\linewidth}
	\[
		\cVal_i(\calG,h,\sigma_i) \defeq \sup_{\tau \in
		\Sigma_{-i}} \pfunction{i}\big(
		\outcome_{h}(\sigma_i,\tau) \big).
	\]
	\end{minipage}
	The \newdef{antagonistic value of a
	history}~$h$ for \player{i}, and the \newdef{cooperative value of a
	history}~$h$ for \player{i} are defined as
	\(
		\aVal_i(\calG,h) \defeq \sup_{\sigma_i \in \Sigma_i}
		\aVal_i(\calG,h,\sigma_i),
	\)
	and
	\(
		\cVal_i(\calG,h) \defeq \sup_{\sigma_i \in \Sigma_i}
		\cVal_i(\calG,h,\sigma_i),
	\)
	respectively.
	Finally, the \newdef{antagonistic-cooperative value of a history}~$h$
	for \player{i} is
	\[
		\acVal_i(\calG,h) \defeq \sup \{\cVal_i(\calG, h,\sigma_i) \mid
		\sigma_i \in \Sigma_i, \aVal_i(\calG, h,\sigma_i) \geq
		\aVal_i(\calG, h)\}.
	\]  
	We omit $\calG$ when it is clear from the context. 

Observe that $\aVal_i(h)$ of a history is the value of a zero-sum 
two-player game where \player{i} is playing against players~$-i$;
while~$\cVal_i(h)$ is the value in a one-player game, when all
players play together. $\acVal_i(h)$ is a new notion which is
the supremum of the values \player{i} can obtain when he plays
\emph{worst-case optimal strategies}. A strategy $\sigma_i \in \stratset_i$ is
said to be worst-case optimal for \player{i} at history $h$ if
$\aVal_i(h,\sigma_i) = \aVal_i(h)$; it is said to be \emph{cooperatively
optimal} for him at history $h$ if $\cVal_i(h,\sigma_i) = \cVal_i(h)$.
Observe that $\acVal_i(h) = -\infty$ if there are no worst-case optimal
strategies from~$h$.

\begin{example}[Local conditions are not sufficient]
The game in Fig.~\ref{fig:relB-insufficient} shows
that admissibility requires one to consider the values of the histories both in
the past and in the future of the current history. This shows that a local condition
cannot capture admissibility. 
In fact, consider strategy~$\sigma_1$ of \player{1} (who controls all square vertices) that takes
the edges $(s_1,s_2), (s_4,s_6)$. If the game starts at~$s_2$, $\sigma_1$ is
admissible, since the choice $(s_4,s_5)$ could yield a payoff of~$2$ which is worse
than any payoff from~$s_6$. Indeed, we have that $\aVal_1(s_5) < \aVal_1(s_6)$.
However, when the game starts at~$s_1$, $\sigma_1$ is weakly dominated by the
strategy that chooses $(s_1,s_3)$ since the worst payoff in the latter case
is~$5$.  In fact, when a strategy takes the edge $(s_1,s_2)$, the antagonistic
value decreases from~$\aVal_1(s_1)=5$ to~$\aVal_1(s_2)=3$; so to be admissible,
it should have a better cooperative value than~$5$,
which is not the case if $(s_4,s_6)$ is taken. The strategy taking
$(s_1,s_2), (s_4,s_5)$ is admissible. Indeed, in one outcome, the
payoff is~$9$, which is greater than~$5$ as required. Thus, an admissible
strategy from~$s_1$ either goes to~$s_3$, or goes to $s_2$ but commits to
taking~$(s_4,s_5)$ later.
\end{example}

\begin{figure}
\begin{minipage}[b]{0.5\linewidth}
  \centering
  \resizebox{0.66\textwidth}{!}{%
  \begin{tikzpicture}
    \tikzstyle{every state}=[node distance=1.4cm,minimum size=7mm, inner sep=1pt];
    \node[state,rectangle] at (0,0) (s1){$s_1$};
    \node[state,right of=s1] (s2) {$s_2$};
    \node[state,node distance=1.4cm, above right of=s1] (s3) {$s_3$};
    \node[state, rectangle, right of=s2] (s4) {$s_4$};
    \node[state,node distance=1.4cm, above right of=s4] (s5) {$s_5$};
    \node[state,node distance=1.4cm, right of=s4] (s6) {$s_6$};
    \node[inner sep=1pt,above of=s5](s7) {2};
    \node[inner sep=1pt,above right of=s5](s8) {9};
    \node[inner sep=1pt,right of=s6](s9) {4};
    \node[inner sep=1pt,above right of=s6](s10) {3};
    \node[inner sep=1pt,right of=s3](s11) {5};
    \node[inner sep=1pt,above right of=s3](s12) {10};

    \path[-latex'] (s1) edge (s2)
      (s1) edge (s3)
      (s2) edge (s4)
      (s4) edge (s5)
      (s4) edge (s6)
      (s5) edge (s7)
      (s5) edge (s8)
      (s6) edge (s9)
      (s6) edge (s10)
      (s3) edge (s11)
      (s3) edge (s12);
  \end{tikzpicture}
  }
  \caption{Example game where local conditions fail to capture
  admissibility.}
  \label{fig:relB-insufficient}
\end{minipage}
\hfill
\begin{minipage}[b]{0.45\linewidth}
  \centering
  \resizebox{0.75\textwidth}{!}{%
  \begin{tikzpicture}
    \tikzstyle{every state}=[node distance=2cm,minimum size=7mm, inner sep=1pt];
    \node[state,rectangle] at (0,0) (s1){$s_1$};
    \node[state,right of=s1] (s2) {$s_2$};
    \node[left=1cm of s1] (s3) {1};
    \node[right=1cm of s2] (s4) {2};

    \path
    (s1) edge[bend left] (s2)
    (s1) edge (s3)
    (s2) edge[bend left] node[above]{$0$} (s1)
    (s2) edge (s4) ;
 \end{tikzpicture}
 }
 \caption{Example game with an infinite dominance chain and no admissible
   strategy as witness of their being dominated.}
 \label{fig:infinite-dom-chain}
 \end{minipage}
\end{figure}

We use temporal logic to describe sets of outcomes.  We consider an extension
of standard \LTL{} with inequality conditions on payoffs for each player as
in~\cite{BCHK-acm14}.
The logic, denoted $\ltlpayoff$, extends \LTL{}, and its syntax
is defined as follows.
\[
	\phi::= Q
	\mid \lnot \phi 
	\mid \X \phi
	\mid \G \phi
	\mid \F \phi
	\mid \phi_1 \U \phi_2
	\mid \phi_1 \lor \phi_2
	\mid \phi_1 \land \phi_2
	\mid \pfunction{i} \bowtie v,
\]
where~$Q \in \ap$ is a set of atomic propositions on edges, $\G$ and~$\F$ are
the standard \LTL{} modalities, $\bowtie \in \{\mathrel{\leq},
\mathrel{\geq},\mathrel{<},\mathrel{>}\}$, and~$v \in \mathbb{Q}$.
A formula is interpreted over an outcome~$\rho$ at index~$k$ as follows. We
have, for instance, $(\rho,k) \models Q$ if, and only if, $(\rho_k,\rho_{k+1})$ is
labelled with $Q$. For convenience, we write $\rho \models \phi$ instead of
$(\rho,1)
\models \phi$.  Note that we define our predicates on edges rather than vertices;
this simplifies our presentation.  The semantics of the \LTL{} modalities are
standard; we refer to \textit{e.g.} \cite{BCHK-acm14}. For payoff conditions, we
have
\(
	(\rho,k) \models \pfunction{i} \bowtie v 
	\defiff \pfunction{i}(\rho_{\ge k}) \bowtie v.
\)

\subparagraph{Residual Games.}
Given game~$\calG$, and history~$h$, let us define $\calG_h$ as the
\emph{residual game of~$\calG$ from~$h$} by modifying the initial state
to~$\last(h)$, and the payoff functions to $\pfunction{i}'$ defined as follows.
For all outcomes~$\rho$ that start in $\last(h)$, $\pfunction{i}'(\rho) =
\pfunction{i}(h \Cdot \rho)$, where $h\Cdot \rho = h_{\leq |h|-1} \cdot \rho$.
Notice that the strategy sets of~$\calG$ and~$\calG_h$ are identical, and that
for any~$\sigma_\Agt \in \stratset_\Agt$, we have~$\outcome(\calG_h,\sigma_\Agt)
= \outcome_{\last(h)}(\calG, \sigma_\Agt)$.

\begin{lemma}
  \label{lemma:residual}
  For all~$h' \in \history_{\last(h)}(\calG)$, it holds that
  $\aVal_i(\calG_h,h') = \aVal_i(\calG,h \Cdot h')$, $\acVal_i(\calG_h,h') =
  \acVal_i(\calG,h \Cdot h')$, and~$\cVal_i(\calG_h, h') = \cVal_i(\calG, h\Cdot
  h')$.
\end{lemma}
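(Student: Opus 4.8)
The plan is to prove the three equalities by unwinding the definitions and exploiting a natural bijection between strategies in $\calG$ and $\calG_h$ that preserves the relevant payoffs. The key observation is the one already recorded in the definition of the residual game: the strategy sets of $\calG$ and $\calG_h$ are literally identical, and for any history $h'$ starting in $\last(h)$ and any strategy profile $\sigma_\Agt$, the outcome $\outcome_{h'}(\calG_h, \sigma_\Agt)$ corresponds to $\outcome_{h \Cdot h'}(\calG, \sigma_\Agt)$ in the sense that one is obtained from the other by prepending the fixed prefix $h_{\leq |h|-1}$. By the very definition of the modified payoff $\pfunction{i}'$, we have $\pfunction{i}'(\outcome_{h'}(\calG_h,\sigma_\Agt)) = \pfunction{i}(\outcome_{h \Cdot h'}(\calG,\sigma_\Agt))$. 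So the payoff attached to matching outcomes is the same on both sides, and the whole lemma reduces to checking that each of the three values is computed by the same $\sup$/$\inf$ over the same (identical) strategy sets applied to payoff-equal outcomes.

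\emph{First} I would handle the antagonistic value. Fix a strategy $\sigma_i$ of \player{i}. By the correspondence above, $\aVal_i(\calG_h, h', \sigma_i) = \inf_{\tau} \pfunction{i}'(\outcome_{h'}(\calG_h,\sigma_i,\tau)) = \inf_{\tau} \pfunction{i}(\outcome_{h \Cdot h'}(\calG,\sigma_i,\tau)) = \aVal_i(\calG, h\Cdot h', \sigma_i)$, where $\tau$ ranges over the common set $\Sigma_{-i}$. Taking the supremum over $\sigma_i \in \Sigma_i$ on both sides yields $\aVal_i(\calG_h,h') = \aVal_i(\calG, h\Cdot h')$. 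The cooperative value is entirely analogous, replacing the inner $\inf$ by a $\sup$ over $\tau$, since again the two payoff expressions agree outcome-by-outcome.

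\emph{Next}, the antagonistic-cooperative value follows from the previous two. Recall $\acVal_i(\calG_h,h') = \sup\{\cVal_i(\calG_h,h',\sigma_i) \mid \aVal_i(\calG_h,h',\sigma_i) \geq \aVal_i(\calG_h,h')\}$. By the per-strategy identities just established, for each fixed $\sigma_i$ we have $\cVal_i(\calG_h,h',\sigma_i) = \cVal_i(\calG, h\Cdot h',\sigma_i)$ and $\aVal_i(\calG_h,h',\sigma_i) = \aVal_i(\calG, h\Cdot h',\sigma_i)$, and by the antagonistic-value equality the threshold $\aVal_i(\calG_h,h')$ equals $\aVal_i(\calG,h\Cdot h')$. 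Hence the constraint ``$\sigma_i$ is worst-case optimal'' defines exactly the same subset of $\Sigma_i$ in both games, and the supremum of the (equal) cooperative values over this common set coincides, giving $\acVal_i(\calG_h,h') = \acVal_i(\calG,h\Cdot h')$.

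The proof is essentially a bookkeeping exercise and there is no deep obstacle; the only point requiring care is the precise alignment of indices in the shift operator $h \Cdot \rho = h_{\leq |h|-1}\cdot\rho$ and the fact that $h'$ begins at $\last(h)$, so that the concatenation $h \Cdot h'$ is well-formed and the outcomes truly match vertex-by-vertex after the overlap at $\last(h)$. I would state this index alignment explicitly once at the start and then let it propagate through all three cases, rather than re-verifying it for each value.
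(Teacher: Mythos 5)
Your proof has a genuine gap, and it sits exactly where you dismiss the issue as ``bookkeeping.'' You claim that for the \emph{same} profile $\sigma_\Agt$, the outcome $\outcome_{h'}(\calG_h,\sigma_\Agt)$ is obtained from $\outcome_{h\Cdot h'}(\calG,\sigma_\Agt)$ by prepending $h_{\leq |h|-1}$, and from this you derive the per-strategy identities $\aVal_i(\calG_h,h',\sigma_i)=\aVal_i(\calG,h\Cdot h',\sigma_i)$ and $\cVal_i(\calG_h,h',\sigma_i)=\cVal_i(\calG,h\Cdot h',\sigma_i)$. This is false. Although the strategy \emph{sets} of $\calG$ and $\calG_h$ coincide (strategies are functions on all histories of the digraph, so the same function is a strategy in both games), a strategy reacts to the entire history it is fed: playing from $h'$ in $\calG_h$ it sees histories beginning at $\last(h)$, while playing from $h\Cdot h'$ in $\calG$ it sees histories carrying the extra prefix $h_{\leq|h|-1}$, and nothing forces it to answer the same way on the two. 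Concretely, take $h=uv$ and $h'=v$ with $v\in V_i$ having successors $a,b$, and a strategy $\sigma_i$ with $\sigma_i(v)=a$ but $\sigma_i(uv)=b$: the play from $h'$ in $\calG_h$ goes through $a$ while the play from $h\Cdot h'$ in $\calG$ goes through $b$, so the values attached to this one $\sigma_i$ on the two sides can differ arbitrarily. The same defect propagates to your $\acVal$ step: the worst-case optimal strategies at $h'$ in $\calG_h$ are \emph{not} literally the same subset of $\Sigma_i$ as those at $h\Cdot h'$ in $\calG$.

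What the lemma actually needs --- and what the paper's proof supplies --- is a \emph{shift} correspondence in place of the identity: from a strategy $\sigma$ used in the residual game one defines $\sigma'(g)\defeq\sigma(g_{\geq|h|})$ for histories $g$ extending $h\Cdot h'$ (arbitrary elsewhere), and conversely from a strategy $\sigma$ used in $\calG$ one defines $\sigma'(g)\defeq\sigma(h\Cdot g)$ for $g$ extending $h'$. Under these maps the induced outcomes do match up to the prefix, and the payoffs agree via $\pfunction{i}'(\rho)=\pfunction{i}(h\Cdot\rho)$; the two directions then give the two inequalities between the $\sup$/$\inf$ expressions, so the three values are equal \emph{after} quantifying over strategies, not strategy-by-strategy. (To be fully rigorous the shift must be applied to the adversary profiles $\tau$ as well, for the same reason.) Your opening sentence even announces ``a natural bijection,'' but the argument you actually run uses the identity map; replacing it throughout by these shift maps repairs the proof and essentially turns it into the paper's.
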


\section{Existence of Admissible Strategies}\label{sec:existence}

We start this section with two examples of quantitative games with no admissible strategies (for \player{1}). Then we identify a large and natural class of games for which the existence of admissible strategies is guaranteed.

  \begin{figure}[ht]
    \begin{minipage}{0.48\textwidth}
    \centering
    \begin{tikzpicture}
      \tikzstyle{every state}=[rectangle,node distance=2cm,minimum size=5mm, inner sep=1pt];
      \node[state,initial above] at (0,0) (s1){$s_1$};
      \node[state, left of=s1,accepting] (s3) {$s_3$};
      \node[state, right of=s1, node distance=1.3cm] (a) {$a$};
      \path[-latex'] (s1) edge (s3);
      \path[-latex']
      (s1) edge[bend left] (a)
      (a) edge[bend left] (s1)
      (s3) edge[loop left] (s3);      
    \end{tikzpicture}

    Game~$\calA$.
    \end{minipage}
    \begin{minipage}{0.48\textwidth}
    \centering
    \begin{tikzpicture}
      \tikzstyle{every state}=[node distance=2cm,minimum size=5mm, inner sep=1pt];
      \node[state,rectangle,initial above] at (0,0) (s1){$s_1$};
      \node[state,rectangle, left of=s1,accepting] (s3) {$s_3$};
      \node[state, right of=s1] (s2) {$s_2$};
      \node[state,rectangle, above left of=s2, node distance=1cm] (a) {$a$};
      \path[-latex'] (s1) edge (s3);
      \path[-latex'] (s1) edge (s2)
      (s2) edge (a)
      (a) edge (s1)
      (s2) edge[bend left] (s1)
      (s3) edge[loop left] (s3);      
    \end{tikzpicture}

    Game~$\calG$.
    \end{minipage}
    \caption{Two games in which Player~1 has no admissible strategy.}
    \label{fig:no-adm}
  \end{figure}
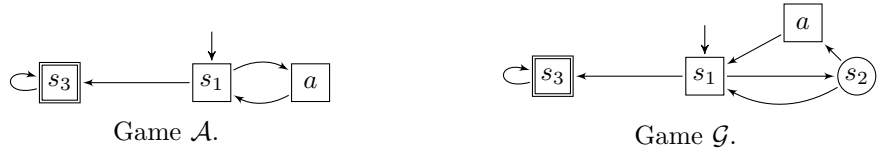

  Consider the games~$\calA$ and~$\calG$ in Fig.~\ref{fig:no-adm}.  Starting at~$s_1$, the payoff of \player{1}, in the two games is defined as follows: an outcome that does not visit~$s_3$ has a payoff equal to~$0$, otherwise, the payoff is equal to the number of times vertex~$a$ appears in the outcome.
The lemma below states that \player{1} does not have admissible strategies in those two games. 
  We sketch the proof idea.
	
	  Consider first the one-player game $\mathcal{A}$.  The antagonistic
	  value at vertex $s_1$ is $\infty$.  Any strategy which never visits
	  $s_3$ is weakly dominated by strategies that visit $a$ at least
	  once (i.e. with outcome $(s_1as_1)^{+}s_3^\omega$).  Furthermore, a
	  strategy which does visit $s_3$ and $k$ times $a$ is weakly dominated
	  by any strategy that visits $a$ at least $k+1$ times and then goes to
	  $s_3$.
	 
	  The idea is similar for $\calG$ where the cooperative value at $s_1$
	  is $\infty$.  Every strategy which does not allow outcomes visiting
	  $s_3$ are weakly dominated by those that attempt to visit $a$ by
	  visiting $s_2$ at least once (as from $s_2$, the other player can
	  cooperate and visit $a$), and then go to $s_3$. Moreover, it is always
	  possible to attempt to visit $a$ once more before going to $s_3$, thus
	  any strategy which eventually goes to $s_3$ is also weakly dominated.
	  
  \begin{lemma}\label{lem:sometimes-no-adm}
	  \Player{1} does not have admissible strategies in games~$\calG$
	  and~$\calA$.
  \end{lemma}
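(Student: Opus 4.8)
The plan is to prove the statement for each game separately, showing in both cases that \emph{every} strategy of \Player{1} is weakly dominated, so that $\admstratset_1(\calA) = \admstratset_1(\calG) = \emptyset$.

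For the one-player game $\calA$ I would first observe that $\Sigma_{-1}$ is a singleton, so every strategy $\sigma_1$ induces a single outcome from $s_1$ and dominance collapses to a comparison of payoffs: $\sigma_1 \vwdom{} \sigma_1'$ iff $\pfunction{1}(\outcome_{s_1}(\sigma_1)) \ge \pfunction{1}(\outcome_{s_1}(\sigma_1'))$, and $\wdom{}$ amounts to strict inequality. I would then characterise the achievable payoffs: from $s_1$ the only choice is to loop through $a$ (returning to $s_1$) or to move to the absorbing vertex $s_3$, so every outcome is either $(s_1 a)^\omega$ with payoff $0$ or $(s_1 a)^k s_3^\omega$ with payoff $k$; hence the set of achievable payoffs is exactly $\mathbb{N}$. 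Since for every strategy realising payoff $k$ there is a strategy realising $k+1$ (visit $a$ once more before going to $s_3$), every strategy is strictly beaten and therefore weakly dominated. The root cause is that $\aVal_1(\calA, s_1)=\infty$ is a supremum that is not attained, so no worst-case optimal strategy exists.

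For $\calG$ I would use that every payoff is $\ge 0$ (an outcome avoiding $s_3$ pays $0$, otherwise it pays the number of visits to $a$), and split on the behaviour of $\sigma_1$. If $s_3$ is never visited under $\sigma_1$ against any $\sigma_2 \in \Sigma_{-1}$, then $\pfunction{1}(\outcome_{s_1}(\sigma_1,\sigma_2)) = 0$ for all $\sigma_2$; any strategy $\sigma_1'$ that moves to $s_2$ and, should \Player{2} move to $a$, subsequently moves to $s_3$ satisfies $\sigma_1' \vwdom{} \sigma_1$ (all payoffs are $\ge 0$) and is strictly better against the cooperative $\sigma_2$ that chooses $a$ (payoff $1 > 0$), so $\sigma_1' \wdom{} \sigma_1$. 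Otherwise $\sigma_1$ moves from $s_1$ to $s_3$ at some history reachable against some $\sigma_2$; here I would let $\sigma_1'$ mimic $\sigma_1$ up to the first such $s_3$-move, replace that move by a detour to $s_2$, and then move to $s_3$ whatever \Player{2} does. Against a \Player{2} strategy that does not cooperate at the detour the number of visits to $a$, and hence the payoff, is unchanged, while against one that moves to $a$ at the detour the payoff increases by $1$; this yields $\sigma_1' \vwdom{} \sigma_1$ with a strict improvement for the cooperative choice, so again $\sigma_1' \wdom{} \sigma_1$. As the two cases are exhaustive, no strategy of \Player{1} is admissible, the underlying reason being that $\cVal_1(\calG, s_1) = \infty$ is not attained and hence no cooperatively optimal strategy exists.

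The main obstacle, and essentially the only place requiring care, is the second case for $\calG$: I must define the detouring strategy $\sigma_1'$ as a genuine, history-dependent strategy (the ``first $s_3$-move'' is determined by reading off the current history) and then verify $\sigma_1' \vwdom{} \sigma_1$ against \emph{all} strategies of \Player{2}, including those that drive the play into the detour branch that $\sigma_1$ itself never produces. The comparison stays clean because once $s_3$ is reached the play is absorbed, so the detour affects the payoff only through the single extra opportunity to visit $a$, which can never decrease it; the remaining bookkeeping is routine.
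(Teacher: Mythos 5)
Your proposal is correct and follows essentially the same route as the paper's own proof: for $\calA$ it reduces dominance to strict payoff comparison in a one-player game with unattained supremum, and for $\calG$ it uses the same case split (never reaching $s_3$ versus committing to $s_3$ at some compatible history) with the same ``detour through $s_2$ for one extra chance at $a$'' construction. The only cosmetic difference is that you install the detour at every first $s_3$-move history while the paper modifies the strategy at a single chosen compatible history; both yield the required weak dominance.
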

		
In the two examples above, either the $\aVal$ or the $\cVal$ (which are both
equal to $\infty$) are not achievable. This is not a coincidence.
We now show that all the games that admit witnessing strategies for those values
are guaranteed to have admissible strategies.

  \subparagraph{Games with strategies witnessing $\aVal$ and $\cVal$.}
 A game is {\em well-formed} whenever it admits witnessing strategies for
 $\aVal$ and $\cVal$, \ie it satisfies:
  
\newcounter{asscounter}
  \begin{enumerate}
  \item \label{ass:optimal-strats}
    For all $i
    \in \Agt$, and $h \in \history_\vinit(\calG)$,
    \(
    \exists \sigma_i \in \stratset_i, \aVal_i(h,\sigma_i)
    = \aVal_i(h).
    \)
  \item \label{ass:optimal-coop-strats}
    For all $i
    \in \Agt$, and $h \in \history_\vinit(\calG)$,
    \(
    \exists \sigma_i \in \stratset_i, \cVal_i(h,\sigma_i)
    = \cVal_i(h).
    \)
    \setcounter{asscounter}{\theenumi}     
  \end{enumerate}

  These conditions will also be referred as Assumption~\ref{ass:optimal-strats}
  and~\ref{ass:optimal-coop-strats}.

We now establish the existence of admissible strategies for all well-formed games.

\begin{theorem}\label{thm:existence}
	In all well-formed games all
	players have admissible strategies.
\end{theorem}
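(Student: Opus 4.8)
The plan is to prove the theorem constructively: for each \player{i} I exhibit a single strategy and show it is not weakly dominated. Fix \player{i}. The candidate strategy $\sigma_i^{\star}$ is designed to respect the values of every history it visits: at each history $h$ consistent with $\sigma_i^{\star}$ and ending in $V_i$, it (a) never sacrifices the antagonistic guarantee, i.e.\ it stays worst-case optimal so that $\aVal_i(h,\sigma_i^{\star})=\aVal_i(h)$, and (b) among the continuations that keep (a), it plays towards the best achievable cooperative value, i.e.\ it realises $\acVal_i(h)$, so that $\cVal_i(h,\sigma_i^{\star})=\acVal_i(h)$. Assumption~\ref{ass:optimal-strats} supplies the worst-case optimal backbone and Assumption~\ref{ass:optimal-coop-strats} the cooperative component; the two are stitched together across residual games, using Lemma~\ref{lemma:residual} to transport $\aVal_i$, $\cVal_i$ and $\acVal_i$ between $\calG$ and $\calG_h$.

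To prove admissibility I argue by contradiction: suppose some $\sigma_i' \wdom{} \sigma_i^{\star}$. Pick a profile $\sigma_{-i}$ witnessing strictness, and let $h = \lcp\big(\outcome_\vinit(\sigma_i^{\star},\sigma_{-i}),\, \outcome_\vinit(\sigma_i',\sigma_{-i})\big)$ be the longest common prefix of the two outcomes. Since the other players follow the same $\sigma_{-i}$, the two strategies must first disagree at an \player{i} vertex, so $\last(h)\in V_i$ and $\sigma_i^{\star}(h)\neq \sigma_i'(h)$. From here I work in the residual game $\calG_h$. The first step shows that $\sigma_i'$ must itself be worst-case optimal at $h$: let $\tau$ replay $\sigma_{-i}$ up to $h$ and then play antagonistically against \player{i}. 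Against $\sigma_i^{\star}$ this yields at least $\aVal_i(h)$ by (a); so if $\sigma_i'$ were not worst-case optimal from $h$, the same $\tau$ (which reacts independently in the disjoint subtrees below the two divergent moves) would drive the payoff strictly below $\aVal_i(h)$ against $\sigma_i'$, giving $\payoff{i}(\outcome_\vinit(\sigma_i',\tau)) < \aVal_i(h) \le \payoff{i}(\outcome_\vinit(\sigma_i^{\star},\tau))$ and contradicting $\sigma_i' \vwdom{} \sigma_i^{\star}$. Hence $\sigma_i'$ is worst-case optimal at $h$, and by the definition of $\acVal_i$ together with (b) its cooperative value satisfies $\cVal_i(h,\sigma_i') \le \acVal_i(h) = \cVal_i(h,\sigma_i^{\star})$.

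The remaining and main difficulty is twofold. First, the construction of $\sigma_i^{\star}$ presupposes that the cooperative optimum among worst-case optimal continuations, namely $\acVal_i(h)$, is actually \emph{attained} at every history; establishing this from well-formedness --- combining the $\aVal_i$-witnesses of Assumption~\ref{ass:optimal-strats} with the $\cVal_i$-witnesses of Assumption~\ref{ass:optimal-coop-strats} inside the sub-game of worst-case optimal moves --- is the first hurdle. Second, one must convert the strict improvement against the specific witness $\sigma_{-i}$ into a contradiction: as Fig.~\ref{fig:relB-insufficient} shows, value equalities at a single history do not settle dominance, so local reasoning at $h$ alone does not suffice. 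The plan is to propagate the invariant (a)--(b) along the two diverging continuations, handling each further point of disagreement exactly as above, so that every choice of $\sigma_i'$ that could strictly help \player{i} is matched by a choice available to $\sigma_i^{\star}$ that is at least as good against every opponent behaviour compatible with reaching it; the bound $\cVal_i(h,\sigma_i') \le \acVal_i(h)$ then forbids any net strict gain.

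I expect this propagation step to be the true crux. That some hypothesis of this kind is unavoidable is confirmed by Lemma~\ref{lem:sometimes-no-adm}: in the games $\calA$ and $\calG$ the relevant optima are unattained suprema, and no admissible strategy exists --- which is precisely the phenomenon that well-formedness rules out.
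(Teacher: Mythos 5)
There is a genuine gap, and it is fatal to the approach rather than a fixable detail. The strategy $\sigma_i^{\star}$ you construct --- worst-case optimal at every compatible history, and cooperatively optimal \emph{subject to} that constraint, i.e.\ $\aVal_i(h,\sigma_i^{\star})=\aVal_i(h)$ and $\cVal_i(h,\sigma_i^{\star})=\acVal_i(h)$ --- is exactly what the paper calls a \emph{worst-case cooperative optimal strategy} (see the Remark closing Sec.~\ref{sec:existence}), and the paper explicitly notes that \emph{some well-formed games have no such strategy}. The game of Fig.~\ref{fig:infinite-dom-chain} (Example~\ref{exa:infinite-chain}) is a concrete counterexample: it is well-formed ($\aVal_1(s_1)=1$ is witnessed by going left, $\cVal_1(s_1)=2$ is witnessed by going to $s_2$ once and then left), yet any strategy satisfying your invariant (a) must eventually stop visiting $s_2$, and at the history following that final visit your invariant (b) fails, since $\acVal_1$ there is still $2$ while the strategy's cooperative value has dropped to $1$; conversely the strategy $\sigma^{\infty}$ that always returns to $s_2$ violates (a), its worst case being the payoff-$0$ loop. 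So no strategy satisfies (a)--(b) at all histories. Worse, in that game the \emph{unique} admissible strategy is $\sigma^{\infty}$, which is not worst-case optimal at all --- so admissibility in well-formed games can require sacrificing the antagonistic guarantee, which your invariant (a) forbids by design. Your ``first hurdle'' (showing $\acVal_i$ is attained by combining the two witnesses) is therefore not a hurdle but an impossibility: well-formedness gives witnesses for $\aVal_i$ and $\cVal_i$ separately, not for their constrained combination $\acVal_i$.

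The paper avoids this trap by using a different class: \emph{strongly cooperative-optimal} (SCO) strategies, which prioritize cooperation over worst-case guarantees --- whenever $\cVal_i(h)>\aVal_i(h)$ they realise the full cooperative value $\cVal_i(h)$ (possibly giving up worst-case optimality), and only when $\cVal_i(h)=\aVal_i(h)$ do they insist on $\aVal_i(h,\sigma_i)=\aVal_i(h)$. These exist in every well-formed game (Lem.~\ref{lem:strongcoopop-exist}, by stitching together $\cVal$-witnessing outcomes and $\wco$-strategies at ``decision points'') and are admissible (Lem.~\ref{lem:sco-are-adm}): any deviation from an SCO strategy at a history with $\cVal_i(h)>\aVal_i(h)$ can be punished by an adversary who would have cooperated along the SCO outcome, and when $\cVal_i(h)=\aVal_i(h)$ no strategy can do better anyway. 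In the counterexample above, the SCO strategy is precisely $\sigma^{\infty}$. If you want to salvage your argument, the invariant must be inverted in this way --- cooperation first, worst-case optimality only when cooperation cannot beat it.
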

The result follows from Lemmas.~\ref{lem:sco-are-adm}
and~\ref{lem:strongcoopop-exist} below: the proof consists in showing
that a particular type of admissible strategies, called {\em strongly
cooperative-optimal}, always exists.  Usually, those strategies are only a strict
subset of the admissible strategies available to a player. Nevertheless, they are
peculiar as they are guaranteed to exist.

\begin{definition}
	A strategy~$\sigma_i$ is \newdef{strongly
	cooperative-optimal (SCO)} if for all $h \in \history_\vinit(\sigma_i)$,
	if $\cVal_i(h) > \aVal_i(h)$ then $\cVal_i(h,\sigma_i) = \cVal_i(h)$,
	and if $\aVal_i(h) = \cVal_i(h)$ then $\aVal_i(h,\sigma_i) =
	\aVal_i(h)$.
\end{definition}
%
Strongly cooperative-optimal strategies are admissible
because their cooperative values are always maximal, and moreover, if a payoff
better than the antagonistic value cannot be achieved ($\aVal_i(h) =
\cVal_i(h)$), then they are worst-case optimal.
Any strategy which obtains a better payoff than a SCO strategy 
against some adversary will obtain a worse payoff against another one.

\begin{lemma}\label{lem:sco-are-adm}
	All strongly cooperative-optimal strategies are admissible.
\end{lemma}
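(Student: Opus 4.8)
The plan is to prove the statement by contradiction. Fix a strongly cooperative-optimal (SCO) strategy $\sigma_i$ and suppose some $\sigma_i'$ weakly dominates it. Unfolding the definition of $\wdom{}$ gives two facts I will exploit: \textbf{(a)} $\pfunction{i}(\outcome_\vinit(\sigma_i,\tau)) \le \pfunction{i}(\outcome_\vinit(\sigma_i',\tau))$ for every $\tau \in \stratset_{-i}$, and \textbf{(b)} there is a witness $\tau_{-i}$ with $\pfunction{i}(\outcome_\vinit(\sigma_i,\tau_{-i})) < \pfunction{i}(\outcome_\vinit(\sigma_i',\tau_{-i}))$. The idea is to extract from (b) the point where $\sigma_i$ and $\sigma_i'$ first disagree, and then invoke the SCO property at that point to contradict either (a) or (b).

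First I would locate the divergence. Put $\rho = \outcome_\vinit(\sigma_i,\tau_{-i})$ and $\rho' = \outcome_\vinit(\sigma_i',\tau_{-i})$; these differ by (b), so let $h = \lcp(\rho,\rho')$. Since both outcomes are generated with the \emph{same} $\tau_{-i}$, they can branch only at a vertex owned by \player{i}; hence $\last(h) \in V_i$ and $\sigma_i(h) \neq \sigma_i'(h)$. Write $v = \sigma_i(h)$ and $v' = \sigma_i'(h)$. As $h \prefix \rho$, the history $h$ is compatible with $\sigma_i$, i.e.\ $h \in \history_\vinit(\sigma_i)$, so the SCO dichotomy applies at $h$. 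I will use the elementary successor facts (which hold because $\last(h)\in V_i$) $\aVal_i(h\cdot v') \le \aVal_i(h)$ and $\cVal_i(h\cdot v') \le \cVal_i(h)$, together with the observation that $\rho'$ is exactly the continuation $\outcome_{h\cdot v'}(\sigma_i',\tau_{-i})$, so $\pfunction{i}(\rho') \le \cVal_i(h\cdot v')$.

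Next I split on the SCO condition at $h$. If $\aVal_i(h) = \cVal_i(h)$, then $\sigma_i$ is worst-case optimal at $h$, so $\pfunction{i}(\rho) \ge \aVal_i(h,\sigma_i) = \aVal_i(h)$, while $\pfunction{i}(\rho') \le \cVal_i(h\cdot v') \le \cVal_i(h) = \aVal_i(h) \le \pfunction{i}(\rho)$, directly contradicting the strict inequality in (b). If instead $\cVal_i(h) > \aVal_i(h)$, then $\sigma_i$ is cooperatively optimal at $h$, i.e.\ $\cVal_i(h,\sigma_i) = \cVal_i(h)$, and I build a single adversary $\tau'_{-i}$ refuting (a). Choosing values $p,q$ with $\aVal_i(h) < p < q < \cVal_i(h)$, I define $\tau'_{-i}$ to agree with $\tau_{-i}$ on all histories that do not extend $h\cdot v$ or $h\cdot v'$ (so both strategies still reach $h$ exactly as in $\rho,\rho'$); on the subtree rooted at $h\cdot v$ it cooperates with $\sigma_i$ so that $\pfunction{i}(\outcome_\vinit(\sigma_i,\tau'_{-i})) > q$ (possible since $\cVal_i(h,\sigma_i) = \cVal_i(h) > q$ and each witnessing continuation runs through $h\cdot v$); and on the disjoint subtree rooted at $h\cdot v'$ it plays antagonistically so that $\pfunction{i}(\outcome_\vinit(\sigma_i',\tau'_{-i})) < p$ (possible since $\inf_\tau \pfunction{i}(\outcome_{h\cdot v'}(\sigma_i',\tau)) = \aVal_i(h\cdot v',\sigma_i') \le \aVal_i(h) < p$). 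Then $\sigma_i$ strictly beats $\sigma_i'$ against $\tau'_{-i}$, contradicting (a). In both cases we reach a contradiction, so $\sigma_i$ is admissible.

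I expect the main obstacle to be the correctness of the adversary construction in the second case: one must check that $\tau'_{-i}$ is a single well-defined strategy, and that because $\sigma_i$ and $\sigma_i'$ diverge at a vertex of \player{i}, the continuations $h\cdot v$ and $h\cdot v'$ lie in \emph{disjoint} parts of the game (formally, in the residual games $\calG_{h\cdot v}$ and $\calG_{h\cdot v'}$ of Lemma~\ref{lemma:residual}). This disjointness is exactly what lets $\tau'_{-i}$ simultaneously help $\sigma_i$ above $q$ and punish $\sigma_i'$ below $p$ without conflict. The auxiliary monotonicity facts that $\aVal_i$ and $\cVal_i$ do not increase when passing from $h$ to a successor $h\cdot w$ are routine and can be invoked directly.
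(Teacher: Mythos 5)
Your proof is correct and takes essentially the same route as the paper's: assume weak dominance toward a contradiction, locate a divergence history $h$ (necessarily with $\last(h)\in V_i$), and split on the SCO dichotomy, refuting the strict-dominance witness directly when $\aVal_i(h)=\cVal_i(h)$, and when $\cVal_i(h)>\aVal_i(h)$ building an adversary that cooperates with $\sigma_i$ in the subtree after $h\cdot v$ while punishing $\sigma_i'$ after $h\cdot v'$. If anything, your version is slightly more careful than the paper's, since you anchor $h$ at the strict witness (which makes the $\aVal_i(h)=\cVal_i(h)$ case airtight) and use thresholds $p<q$ instead of assuming the supremum $\cVal_i(h)$ is attained.
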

\begin{proof}
	Let $\sigma_i$ be a strongly cooperative-optimal strategy for
	\player{i}.  
        Assume towards a contradiction that some $\sigma_i'$ weakly dominates~$\sigma_i$.
	Let~$h$ be any minimal history compatible with $\sigma_i$ such
	that~$\sigma_i(h) \neq \sigma_i'(h)$.  

	If~$\aVal_i(h) < \cVal_i(h)$, then since $\last(h)$ is controlled by \player{i}, $\aVal_i(h \sigma_i'(h)) \le \aVal_i(h) < \cVal_i(h)$,
        and since $\sigma_i$ is strongly cooperative optimal $\cVal_i(h \sigma_i(h),\sigma_i) = \cVal_i(h)$.
        Therefore, as the histories~$h\sigma_i(h)$ and~$h\sigma_i'(h)$ are distinct, there is a strategy~$\tau \in \Sigma_{-i}$
        such that
	$\pfunction{i}(\outcome_{h\sigma_i(h)}(\sigma_i, \tau))= \cVal_i(h)  > \aVal_i(h) \ge \pfunction{i}(\outcome_{h\sigma_i'(h)}(\sigma_i', \tau))$.
        This contradicts that~$\sigma_i'$ weakly dominates~$\sigma_i$.

	Otherwise~$\aVal_i(h) = \cVal_i(h)$, then since $\sigma_i$ is strongly cooperative optimal, for all~$\tau \in \Sigma_{-i}$, $\pfunction{i}(\outcome_h(\sigma_i,\tau)) = \cVal_i(h)$ and $\pfunction{i}(\outcome_h(\sigma_i',\tau)) \leq \cVal_i(h)$.
	It follows that
        no outcome of~$\sigma_i'$ obtains a better payoff than~$\sigma_i$.
        We thus obtain a contradiction.
\end{proof}

By Lem.~\ref{lem:sco-are-adm}, to prove the existence of
admissible strategies, it suffices to prove the existence of strongly
cooperative-optimal strategies.  We actually give a constructive
proof.

\begin{lemma}\label{lem:strongcoopop-exist}
 	In all well-formed games all players have SCO strategies.
\end{lemma}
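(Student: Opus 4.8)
The plan is to construct, for each player~$i$, a single SCO strategy $\sigma_i$ explicitly, committing to witnessing strategies supplied by the two well-formedness assumptions and re-committing only when forced. Three elementary facts drive everything. First, $\aVal_i(h)\le\cVal_i(h)$ for every $h$, so at each history exactly one of the two SCO regimes applies: the \emph{cooperative} regime $\cVal_i(h)>\aVal_i(h)$ and the \emph{antagonistic} regime $\cVal_i(h)=\aVal_i(h)$. Second, $\cVal_i$ is non-increasing along any play, $\cVal_i(hv)\le\cVal_i(h)$ for every edge, and, by Lemma~\ref{lemma:residual}, every infinite play $\rho$ from $h$ satisfies $\cVal_i(\rho_{\le k})\ge\pfunction{i}(\rho)$ for all $k\ge|h|$. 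Third, a witness $w$ that is worst-case optimal from $\ell$ \emph{self-sustains}: $\aVal_i(h,w)\ge\aVal_i(\ell)$ for every $h\succeq\ell$ compatible with $w$.

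The construction maintains, along each play, a \emph{committed witness} together with the history $\ell$ at which it was installed. At $\vinit$, and whenever a re-commitment fires at a compatible $\ell$, I install a witness according to the regime at $\ell$: a cooperatively optimal $\sigma^\ell$ with $\cVal_i(\ell,\sigma^\ell)=\cVal_i(\ell)$ (Assumption~\ref{ass:optimal-coop-strats}) if $\cVal_i(\ell)>\aVal_i(\ell)$, and a worst-case optimal $\sigma^\ell$ with $\aVal_i(\ell,\sigma^\ell)=\aVal_i(\ell)$ (Assumption~\ref{ass:optimal-strats}) otherwise. At player-$i$ histories $\sigma_i$ plays the move prescribed by the committed witness. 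A re-commitment fires at the first compatible extension $h\succ\ell$ at which the committed witness stops being optimal for the regime \emph{at~$h$}: in the cooperative regime, when $\cVal_i(h,w)<\cVal_i(h)$ or when the regime flips to antagonistic; in the antagonistic regime, only when the regime flips to cooperative (by the second and third facts the witness stays worst-case optimal as long as the regime is unchanged, so no re-commitment is needed). Thus the committed witness is, by design, always optimal for the regime at the current compatible history, matching the shape of the two SCO conditions.

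For the antagonistic regime I would argue: if $w$ is installed at $\ell$ with $\cVal_i(\ell)=\aVal_i(\ell)$ and the regime stays antagonistic up to a compatible $h\succeq\ell$, then $\aVal_i(h)=\cVal_i(h)\le\cVal_i(\ell)=\aVal_i(\ell)\le\aVal_i(h,w)\le\aVal_i(h)$, so $\aVal_i(h,w)=\aVal_i(h)$ and the antagonistic SCO requirement holds at~$h$. For the cooperative regime, at a compatible $h$ the committed witness satisfies $\cVal_i(h,w)=\cVal_i(h)$, and $\cVal_i(h,\sigma_i)\le\cVal_i(h)$ is immediate. For the matching lower bound I take an adversary $\tau$ realizing the cooperative value under $w$ from $h$; along the resulting play the second fact pins $\cVal_i$ to the constant value $\cVal_i(h)$, so $w$ stays cooperatively optimal at every prefix and no re-commitment fires, whence $\sigma_i$ reproduces this play and attains $\cVal_i(h)$. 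If instead the regime flips to antagonistic at some prefix $g$, the freshly installed worst-case witness guarantees $\aVal_i(g)=\cVal_i(h)$ against \emph{every} adversary, so the bound is still met. Together these give the cooperative SCO requirement, and hence that $\sigma_i$ is SCO.

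The main obstacle is the cooperative lower bound when the supremum defining $\cVal_i(h,w)$ over adversary strategies is \emph{not} attained, which can happen for general real-valued payoff functions even though well-formedness guarantees the outer supremum over player-$i$ strategies is. Then no single play realizes $\cVal_i(h)$, and the $\epsilon$-optimal cooperative plays I must use may drift off the witness's optimal set and provoke re-commitments, so the play actually produced by $\sigma_i$ need not coincide with any fixed approximating play. I would handle this by fixing $\epsilon>0$, running the argument above with an $\epsilon$-optimal adversary, and checking that each re-commitment it causes installs a fresh witness whose value is still within $\epsilon$ of $\cVal_i(h)$ (because, by the second fact, $\cVal_i$ has not dropped below $\cVal_i(h)-\epsilon$ along the play), then diagonalising as $\epsilon\to 0$ to obtain $\cVal_i(h,\sigma_i)\ge\cVal_i(h)$. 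Justifying this limiting step for an arbitrary payoff function is the delicate point; for the concrete measures $\inffun,\supfun,\linffun,\lsupfun$ and mean-payoff the inner supremum is attained on finite graphs by lasso plays, so the clean attained-case argument already suffices.
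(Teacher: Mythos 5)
Your construction is structurally the same as the paper's: split histories into the cooperative regime ($\cVal_i(h)>\aVal_i(h)$) and the antagonistic regime ($\cVal_i(h)=\aVal_i(h)$), commit to a witness supplied by well-formedness, and re-commit only upon deviation or a regime flip. Your antagonistic-regime argument is sound, and so is your cooperative-regime argument in the case where the relevant suprema are attained. The one real difference is \emph{what} you commit to. The paper fixes, for each history $h$ in the cooperative regime, an \emph{outcome} $\rho_h$ with $\pfunction{i}(\rho_h)=\cVal_i(h)$ (reading Assumption~\ref{ass:optimal-coop-strats} as supplying such a play) and lets the strategy follow $\rho_h$, re-committing when another player leaves $\rho_h$ or when $\aVal_i=\cVal_i$ is reached along it; you instead commit to a cooperatively optimal \emph{strategy} in the literal sense of Assumption~\ref{ass:optimal-coop-strats}, i.e.\ one whose supremum over adversaries equals $\cVal_i(h)$. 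That choice is exactly what exposes you to the non-attainment problem you flag at the end.

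That delicate point is a genuine gap, and no $\epsilon$-diagonalization can close it: under the literal reading of Assumption~\ref{ass:optimal-coop-strats}, the lemma is in fact false. Modify game~$\calG$ of Fig.~\ref{fig:no-adm} so that a play reaching $s_3$ after $n$ visits to $a$ gets payoff $1-2^{-n}$, and a play never reaching $s_3$ gets $0$. Every history then satisfies both assumptions with finite values: going to $s_3$ immediately attains $\aVal_1(h)=1-2^{-k}$ (where $k$ is the number of $a$'s seen so far), and the adaptive strategy ``keep going to $s_2$ while the adversary cooperates, go to $s_3$ at the first defection'' has $\cVal_1(h,\sigma_1)=1=\cVal_1(h)$, even though no single play attains $1$. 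Yet the domination argument of Lem.~\ref{lem:sometimes-no-adm} applies verbatim, so Player~1 has no admissible strategy, hence by Lem.~\ref{lem:sco-are-adm} no SCO strategy either. Your construction, run on this game, illustrates the failure concretely: after every adversary defection the committed witness loses cooperative optimality, a re-commitment fires, and the resulting strategy never reaches $s_3$; every compatible play has payoff $0$ while $\cVal_1$ stays constantly $1$, which is precisely the collapse your limiting argument would have to exclude. The repair is not a better limit argument but the stronger reading of well-formedness that the paper's own proof silently uses: Assumption~\ref{ass:optimal-coop-strats} must supply, at each history, an actual play with payoff $\cVal_i(h)$ (a witnessing strategy \emph{profile}, not just a strategy of \player{i}), and the commitment must be to that play. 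Under that reading your argument is complete and coincides with the paper's.
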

  Let us describe the idea of the construction.
	Consider any player~$i$.  We define the strategy~$\sigma$ of player~$i$ as follows.  For any
  history~$h$, if~$\aVal_i(h) = \cVal_i(h)$, then $\sigma$ plays a worst-case
	optimal strategy from~$h$, say~$\sigma_h^\wco$.
	Otherwise, we define $\sigma$ starting from an outcome~$\rho_h$ with~$\pfunction{i} = \cVal_i(h)$, and
	we define $\sigma$ is such a way that it follows~$\rho_h$. In this case, whenever another player deviates from~$\rho_h$, say, at
  history~$h'$, we reevaluate how to play according to whether~$\aVal_i(h') <
  \cVal_i(h')$ or~$\aVal_i(h')= \cVal_i(h')$.
	Here, the existence of~$\sigma_h^\wco$ and that of~$\rho_h$ are guaranteed
	by the fact that the game is well-formed.

In subsequent sections, we consider SCO strategies in residual games~$\calG_h$,
so let us note that these games satisfy the required assumptions if
$\calG$ does,
which follows from Lem.~\ref{lemma:residual}.
\begin{lemma}\label{lem:residual-admissible}
	For any well-formed game~$\calG$,
	for all histories $h \in \history_\vinit$, the residual game $\calG_h$ is also well-formed.
\end{lemma}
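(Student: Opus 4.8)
The plan is to verify that $\calG_h$ satisfies Assumptions~\ref{ass:optimal-strats} and~\ref{ass:optimal-coop-strats}, \ie that for every player~$i$ and every history $h' \in \history_{\last(h)}(\calG)$ (which is exactly a history of $\calG_h$, since the two games share their digraph and $\last(h)$ is the initial vertex of $\calG_h$) there exist strategies attaining $\aVal_i(\calG_h,h')$ and $\cVal_i(\calG_h,h')$. Lemma~\ref{lemma:residual} already supplies the value identities $\aVal_i(\calG_h,h') = \aVal_i(\calG, h \Cdot h')$ and $\cVal_i(\calG_h,h') = \cVal_i(\calG, h \Cdot h')$, so the only missing ingredient is to \emph{transfer a witnessing strategy}: since $\calG$ is well-formed, strategies attaining $\aVal_i(\calG, h \Cdot h')$ and $\cVal_i(\calG, h \Cdot h')$ exist in~$\calG$, and I must turn them into witnesses in $\calG_h$ at~$h'$.

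The transfer is done through a shift map. Recall that strategies are functions on all histories of the shared digraph, so a $\calG$-witness cannot simply be reused in $\calG_h$: played from $h'$ it would be evaluated on the un-prefixed histories starting at $\last(h)$ rather than on their $h$-extensions, and would in general make different moves. I therefore define, for a strategy $\sigma$, its shift $T_h(\sigma)$ by $T_h(\sigma)(g) \defeq \sigma(h \Cdot g)$ on every history $g$ with $\first(g) = \last(h)$ (and arbitrarily elsewhere), which is legal because $\last(h \Cdot g) = \last(g)$; on a profile it acts componentwise. The key computation is the outcome correspondence, for every profile $\sigma_\Agt$,
\[ \outcome_{h \Cdot h'}(\calG, \sigma_\Agt) = h \Cdot \outcome_{h'}(\calG_h, T_h(\sigma_\Agt)), \]
proved by induction on the length of the generated prefix: the inductive step amounts to observing that the two runs feed the histories $\rho_{\le j}$ and $h \Cdot \rho'_{\le k}$ (with $k = j - |h| + 1$) to the strategies, and these coincide by construction of $T_h$. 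Combined with the definition $\pfunction{i}'(\rho) = \pfunction{i}(h \Cdot \rho)$ of the residual payoff, this yields $\pfunction{i}'\big(\outcome_{h'}(\calG_h, T_h(\sigma_\Agt))\big) = \pfunction{i}\big(\outcome_{h \Cdot h'}(\calG, \sigma_\Agt)\big)$.

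Since the outcome from $h'$ in $\calG_h$ (resp.\ from $h \Cdot h'$ in $\calG$) depends only on the strategy values at histories extending $h'$ (resp.\ $h \Cdot h'$), and $g \mapsto h \Cdot g$ is a bijection between these two families of histories, $T_h$ restricts to a value-preserving bijection on the relevant strategies. Applying the correspondence above and letting the adversary $\tau$ range over $\Sigma_{-i}$ through the same bijection gives $\aVal_i(\calG_h, h', T_h(\sigma_i)) = \aVal_i(\calG, h \Cdot h', \sigma_i)$ and, with $\sup$ in place of $\inf$ over the adversary, $\cVal_i(\calG_h, h', T_h(\sigma_i)) = \cVal_i(\calG, h \Cdot h', \sigma_i)$. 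Instantiating $\sigma_i$ with a $\calG$-witness at $h \Cdot h'$ provided by Assumptions~\ref{ass:optimal-strats} and~\ref{ass:optimal-coop-strats}, the strategy $T_h(\sigma_i)$ attains $\aVal_i(\calG, h \Cdot h') = \aVal_i(\calG_h, h')$ (resp.\ the $\cVal$ analogue), and is therefore the required witness in $\calG_h$. As $i$ and $h'$ were arbitrary, $\calG_h$ is well-formed.

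I expect the main obstacle to be bookkeeping rather than genuine difficulty: one must resist reusing the $\calG$-witness verbatim (it is evaluated on the wrong histories), state the shift $T_h$ precisely, and check both the inductive outcome correspondence and that $T_h$ is a bijection on the relevant history-restrictions, so that the $\inf$ and $\sup$ over adversaries transfer exactly.
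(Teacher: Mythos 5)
Your proof is correct and takes essentially the same route as the paper: the paper dispatches this lemma in one line by citing Lem.~\ref{lemma:residual}, whose own appendix proof uses exactly your shift map (there written $\sigma'(g) = \sigma(h\Cdot g)$) to relate strategies of $\calG$ and $\calG_h$. You simply make explicit the witness-transfer step (value identities alone do not transfer witnessing strategies) that the paper leaves implicit in that citation.
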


We end this section with
an interesting observation: an infinite weak dominance chain is not
necessarily dominated by an admissible strategy, as shown in the next example.
The
reader should contrast the example with the fact that in the Boolean case all
dominated strategies are dominated by an admissible
strategy~\cite[Thm.~11]{berwanger07}.

\begin{example}[Non-dominated weak dominance chains]\label{exa:infinite-chain}
There are quantitative games that have
infinite dominance chains and no ``maximal''
admissible strategy weakly dominating them. 
Consider the game depicted in
Fig.~\ref{fig:infinite-dom-chain}. 
Denote by $\sigma^k$ the strategy of \player{1} (controlling square vertices)
which consists in moving from $s_1$ to $s_2$ exactly $k$ times, and then going
left (unless payoff of~$2$ was reached in the meantime).  Then for all $k \in
\mathbb{N}$, $\sigma^k$ is weakly dominated by $\sigma^{k+1}$ because if the
adversary decides to move right from~$s_2$ at the~$(k+1)$-th step,
$\sigma^{k+1}$ performs better than~$\sigma^k$, and otherwise they yield
identical outcomes.  It follows that all strategies $\sigma^k$ for~$k\geq 0$,
are dominated.  Here, the only admissible strategy~$\sigma^\infty$ consists in
looping in the cycle forever, which does not dominate any~$\sigma^k$  since if
the adversary always moves left from~$s_2$, then~$\sigma^\infty$ yields less
than~$\sigma^k$.
\end{example}

\begin{remark}
Above, we have defined strongly cooperative-optimal strategies that favour
cooperation whenever it can have an added value. We have established that those
strategies are always admissible. There are other classes of strategies that are
always admissible, and we define another interesting class here. A strategy
$\sigma_i$ is a {\em worst-case cooperative optimal strategy},   if for all $h
\in \history_\vinit(\sigma_i)$:
		$\aVal_i(h,\sigma_i)=\aVal_i(h)$, and
		$\cVal_i(h,\sigma_i)=\acVal_i(h)$.
\end{remark}
 So those strategies ensure the worst-case value at all times
 and leave open the best cooperation possible under that worst-case guarantee.

\begin{lemma}
All worst-case cooperative optimal strategies strategies are admissible.
\end{lemma}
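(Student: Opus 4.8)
The plan is to mirror the structure of the proof of Lemma~\ref{lem:sco-are-adm}: assume towards a contradiction that some $\sigma_i'$ weakly dominates a worst-case cooperative optimal strategy $\sigma_i$ of \player{i}, isolate the earliest point where the two strategies disagree, and then exhibit a single adversary strategy against which $\sigma_i$ strictly outperforms $\sigma_i'$, contradicting $\sigma_i' \vwdom{} \sigma_i$. Concretely, since $\sigma_i' \wdom{} \sigma_i$ there is $\tau_0 \in \Sigma_{-i}$ with $\payoff{i}(\outcome_\vinit(\sigma_i', \tau_0)) > \payoff{i}(\outcome_\vinit(\sigma_i, \tau_0))$. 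Writing $\rho = \outcome_\vinit(\sigma_i, \tau_0)$ and $\rho' = \outcome_\vinit(\sigma_i', \tau_0)$, I would set $g = \lcp(\rho, \rho')$. As both outcomes are generated against the \emph{same} $\tau_0$, they can only diverge at a vertex of \player{i}, so $\last(g) \in V_i$, $g \in \history_\vinit(\sigma_i)$, and $v := \sigma_i(g) \neq v' := \sigma_i'(g)$. The defining properties of $\sigma_i$ at $g$ give $\aVal_i(g, \sigma_i) = \aVal_i(g)$ and $\cVal_i(g, \sigma_i) = \acVal_i(g)$, and since $g \prefix \rho$ we also have $\payoff{i}(\rho) \geq \aVal_i(g)$.

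The next step is to extract a strictly positive gap at $g$ from the dominance hypothesis. Very weak dominance forces $\sigma_i'$ to be worst-case optimal from $g$: for every adversary driving the play through $g$ one has $\payoff{i}(\outcome_g(\sigma_i,\tau)) \leq \payoff{i}(\outcome_g(\sigma_i',\tau))$, and taking the infimum gives $\aVal_i(g) = \aVal_i(g,\sigma_i) \leq \aVal_i(g,\sigma_i') \leq \aVal_i(g)$, hence equality. Consequently $\cVal_i(g,\sigma_i') \leq \acVal_i(g)$, so $\acVal_i(g) \geq \payoff{i}(\rho') > \payoff{i}(\rho) \geq \aVal_i(g)$; in particular $\aVal_i(g) < \acVal_i(g)$.

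The crux is then the adversary construction exploiting this gap. Fix any $c$ with $\aVal_i(g) < c < \acVal_i(g)$ (any finite $c > \aVal_i(g)$ if $\acVal_i(g) = \infty$). Since $v$ and $v'$ are distinct successors of $\last(g)$, the subtrees rooted at $g \cdot v$ and $g \cdot v'$ are disjoint, so a single $\tau \in \Sigma_{-i}$ can be defined independently on each. On the $v$-subtree I let $\tau$ cooperate with $\sigma_i$ so as to exceed $c$, which is possible because $\cVal_i(g \cdot v, \sigma_i) = \acVal_i(g) > c$; on the $v'$-subtree I let $\tau$ play adversarially against $\sigma_i'$ so as to stay below $c$, which is possible because $\aVal_i(g \cdot v', \sigma_i') \leq \aVal_i(g \cdot v') \leq \aVal_i(g) < c$. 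Forcing the play through $g$ using the prefix of $\tau_0$, this single $\tau$ satisfies $\payoff{i}(\outcome_\vinit(\sigma_i, \tau)) > c > \payoff{i}(\outcome_\vinit(\sigma_i', \tau))$, contradicting $\sigma_i' \vwdom{} \sigma_i$.

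The main obstacle I anticipate is exactly what is resolved in the second paragraph: unlike in the SCO argument, a deviation of $\sigma_i'$ at $g$ need not decrease the antagonistic value, so one cannot simply play the full cooperative value $\cVal_i(g)$ against $\aVal_i(g)$. The key observation is that weak dominance itself forbids any such decrease, which both pins $\sigma_i'$ as worst-case optimal from $g$ and, combined with the strict improvement witnessed by $\tau_0$, certifies that the exploitable gap lies between $\acVal_i(g)$ (rather than $\cVal_i(g)$) and $\aVal_i(g)$. The only remaining technical point to check is routine: the supremum and infimum witnesses on the two disjoint subtrees can be realized simultaneously by one adversary and prefixed by a path reaching $g$, which is immediate from the disjointness of the subtrees.
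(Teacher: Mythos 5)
Your proof is correct. There is nothing in the paper to compare it against directly: the lemma is stated without proof, and the closest things the paper offers are (i) the direct argument for Lem.~\ref{lem:sco-are-adm}, whose structure you deliberately mirror, and (ii) the value characterization of Thm.~\ref{thm:lhd-admissibles}, from which the lemma follows in two lines (a worst-case cooperative optimal strategy satisfies $\propadm{h}{\sigma_i}$ at every compatible history $h$ with $\last(h)\in V_i$: if $\acVal_i(h)>\aVal_i(h)$ then $\cVal_i(h,\sigma_i)=\acVal_i(h)>\aVal_i(h)$ gives~\eqref{eqn:adm1}, and otherwise all four quantities coincide, giving~\eqref{eqn:adm2}). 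The trade-off is that route (ii) inherits Assumption~\ref{ass:optimal-strats} from the characterization theorem, whereas your direct construction is assumption-free: every witness you invoke (a continuation exceeding $c$ in the $v$-subtree, one staying below $c$ in the $v'$-subtree) comes from $\cVal_i$ and $\aVal_i$ being suprema/infima, not from their attainment, so the lemma holds for arbitrary games. The genuinely new ingredient relative to the SCO proof--and the step that makes your argument work--is the observation that very weak dominance forces $\sigma_i'$ to be worst-case optimal from the divergence point $g$, hence $\cVal_i(g,\sigma_i')\le\acVal_i(g)$, which places the exploitable gap between $\aVal_i(g)$ and $\acVal_i(g)$ and is precisely what the hypothesis $\cVal_i(g,\sigma_i)=\acVal_i(g)$ lets you exploit; in the SCO proof the gap is instead between $\aVal_i$ and $\cVal_i$. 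The supporting details (the two strategies agree on proper prefixes of $g$, so one adversary reaches $g$ against both; the two subtrees are disjoint, so the cooperative and adversarial continuations can be packed into a single $\tau$; $\pfunction{i}(\rho)$ is a real number, so a finite threshold $c$ exists even when $\acVal_i(g)=\infty$ or $\aVal_i(g)=-\infty$) are all in order.
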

However, some well-formed games do not have worst-case cooperative
optimal strategies.

\section{Value-based Characterization of Admissible Strategies}
\label{sec:val-characterization}
We present our main result, which is, a value-based
characterization of admissible strategies.

For any game~$\calG$, and player~$i$, let us define the following property,
denoted $\propadm{h}{\sigma}$, for a given strategy~$\sigma \in
\stratset_i(\calG)$ and history~$h$:
\begin{align}
  \label{eqn:adm1}
  \cVal_i(h,\sigma) > \aVal_i(h)\\
  \lor \quad\aVal_i(h,\sigma) = \cVal_i(h,\sigma) =
  \aVal_i(h) = \acVal_i(h),
  \label{eqn:adm2}
\end{align}

Intuitively, we will show that a strategy is admissible if at all  histories, 
either the strategy promises
a cooperative value greater than the antagonistic value at the current vertex, or
a higher cooperative value cannot be obtained without risking a lower antagonistic value
(\textit{i.e.} $\aVal_i(h)= \acVal_i(h)$) and the strategy is worst-case optimal.

It turns out that requiring this property at all histories ending in a player's vertices
characterize admissible strategies.
We state our result in the following theorem.
\begin{theorem} \label{thm:lhd-admissibles}
	Under Assumption~\ref{ass:optimal-strats}, for any game~$\calG$, \player{i}, and $\sigma_i \in
	\stratset_i(\calG)$, 
     $\sigma_i$ is admissible if, and only if, for
	all~$h \in \history_{\vinit}(\calG,\sigma_i)$ with $\last(h) \in V_i$,
	$\propadm{h}{\sigma_i}$ holds.
\end{theorem}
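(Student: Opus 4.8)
The plan is to prove both implications by reasoning about the \emph{divergence histories} of $\sigma_i$ against a competitor, reducing weak dominance to a comparison of payoffs in the subtree rooted at such a history. Throughout I will use the elementary facts that for every strategy and history $\aVal_i(h,\sigma_i)\le\cVal_i(h,\sigma_i)$, that $\aVal_i(h,\sigma_i)\le\aVal_i(h)\le\acVal_i(h)$, and that when $\last(h)\in V_i$ one has $\aVal_i(h)=\max_{v'\in E_{\last(h)}}\aVal_i(h\cdot v')$, so every edge chosen by \player{i} at $h$ lands in a history of antagonistic value at most $\aVal_i(h)$. I will also repeatedly use that to refute $\sigma_i'\vwdom{}\sigma_i$ it suffices to exhibit a \emph{single} adversary $\sigma_{-i}$ with $\pfunction{i}(\outcome_\vinit(\sigma_i,\sigma_{-i}))>\pfunction{i}(\outcome_\vinit(\sigma_i',\sigma_{-i}))$.

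For the \textbf{necessity} direction I would argue the contrapositive: assuming $\propadm{h}{\sigma_i}$ fails at some $h\in\history_\vinit(\sigma_i)$ with $\last(h)\in V_i$, I build $\sigma_i\switch{h}{\sigma_i^{\ast}}$ that weakly dominates $\sigma_i$, where $\sigma_i^{\ast}$ is supplied by Assumption~\ref{ass:optimal-strats}. Negating both disjuncts and combining with the inequalities above leaves exactly three failure cases: (i) $\cVal_i(h,\sigma_i)<\aVal_i(h)$; (ii) $\cVal_i(h,\sigma_i)=\aVal_i(h)<\acVal_i(h)$; and (iii) $\cVal_i(h,\sigma_i)=\aVal_i(h)=\acVal_i(h)$ with $\aVal_i(h,\sigma_i)<\aVal_i(h)$. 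In (i) and (iii) I shift to a worst-case optimal $\sigma_i^{\ast}$; in (ii) I shift to an $\acVal$-witness, i.e. a worst-case optimal strategy with $\cVal_i(h,\sigma_i^{\ast})>\aVal_i(h)$, which exists because $\acVal_i(h)>\aVal_i(h)$. In each case every outcome of $\sigma_i^{\ast}$ from $h$ has payoff at least $\aVal_i(h)$, while every outcome of $\sigma_i$ from $h$ has payoff at most $\cVal_i(h,\sigma_i)\le\aVal_i(h)$; since the shift only alters outcomes passing through $h$, this gives $\sigma_i\switch{h}{\sigma_i^{\ast}}\vwdom{}\sigma_i$. Strictness follows by picking the adversary that drives $\sigma_i$ strictly below $\aVal_i(h)$ (cases i, iii) or realises $\cVal_i(h,\sigma_i^{\ast})>\aVal_i(h)$ (case ii), after first replaying the moves along $h$.

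For the \textbf{sufficiency} direction I assume $\sigma_i'\wdom{}\sigma_i$ with witness $\sigma_{-i}^{0}$ giving $\pfunction{i}(\rho')>\pfunction{i}(\rho)$ for $\rho=\outcome_\vinit(\sigma_i,\sigma_{-i}^{0})$ and $\rho'=\outcome_\vinit(\sigma_i',\sigma_{-i}^{0})$, set $h=\lcp(\rho,\rho')$, and observe that $\last(h)\in V_i$ (the two plays share the adversary, so they can only branch at a \player{i} vertex), that $h$ is compatible with $\sigma_i$, and hence that $\propadm{h}{\sigma_i}$ holds. The central gadget is a \emph{split adversary} $\sigma_{-i}^{\ast}$ that reaches $h$ and then, on the disjoint subtrees below the two edges $\sigma_i(h)$ and $\sigma_i'(h)$, plays a near-optimal cooperative continuation for $\sigma_i$ on the former and a near-optimal antagonistic continuation against $\sigma_i'$ on the latter; a single strategy can do both precisely because the subtrees are disjoint. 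If the first disjunct $\cVal_i(h,\sigma_i)>\aVal_i(h)$ holds, then against $\sigma_{-i}^{\ast}$ the strategy $\sigma_i$ scores close to $\cVal_i(h,\sigma_i)$ while $\sigma_i'$ scores at most $\aVal_i(h\cdot\sigma_i'(h))+\epsilon\le\aVal_i(h)+\epsilon$, so for small $\epsilon$ the strategy $\sigma_i$ strictly beats $\sigma_i'$, contradicting $\sigma_i'\vwdom{}\sigma_i$.

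The remaining case — the second disjunct, where $\aVal_i(h,\sigma_i)=\cVal_i(h,\sigma_i)=\aVal_i(h)=\acVal_i(h)$ — is the main obstacle and the place where the $\acVal$ clause earns its keep. Here every outcome of $\sigma_i$ from $h$ has payoff exactly $\aVal_i(h)$. Writing $v'=\sigma_i'(h)$, I split on $\aVal_i(h\cdot v',\sigma_i')$: if it is $<\aVal_i(h)$, an antagonistic continuation below $v'$ drives $\sigma_i'$ strictly below $\aVal_i(h)=\pfunction{i}(\rho)$, again refuting $\sigma_i'\vwdom{}\sigma_i$. Otherwise $\sigma_i'$ is worst-case optimal from $h\cdot v'$, so the strategy ``go to $v'$, then follow $\sigma_i'$'' is worst-case optimal from $h$; by definition of $\acVal_i(h)$ its cooperative value is at most $\acVal_i(h)=\aVal_i(h)$, forcing every outcome of $\sigma_i'$ below $v'$ to equal $\aVal_i(h)$, whence $\pfunction{i}(\rho')=\aVal_i(h)=\pfunction{i}(\rho)$, contradicting the strict inequality of the witness. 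I expect this last argument — recognising that $\acVal_i(h)=\aVal_i(h)$ collapses the cooperative potential of \emph{every} worst-case-optimal deviation, and hence of $\sigma_i'$'s branch — to be the technically delicate point, together with the bookkeeping that the shift and split constructions only affect outcomes through $h$ and that the relevant suprema and infima are approached closely enough by the chosen continuations.
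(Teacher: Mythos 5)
Your proof is correct. For the necessity direction and for the first-disjunct case of sufficiency you follow essentially the paper's argument: the contrapositive via a shift to a worst-case optimal strategy (Assumption~\ref{ass:optimal-strats}) or to an $\acVal$-witness, and, for sufficiency, the divergence history $h$ ending in $V_i$ together with the split adversary that cooperates below $h\cdot\sigma_i(h)$ and plays antagonistically below $h\cdot\sigma_i'(h)$; your three-way failure case split is a refinement of the paper's Lemma~\ref{lem:negated-propadam}. The genuine difference is your treatment of the second disjunct~\eqref{eqn:adm2} in the sufficiency direction, and it is cleaner than the paper's. The paper works with an \emph{arbitrary} adversary: it first shows that $\aVal_i(\rho'_{\le j}) \ge \aVal_i(h)$ for \emph{every} prefix of the deviating outcome $\rho'$, then constructs a strategy that follows the \emph{path} $\rho'$ and switches to worst-case optimal strategies at every deviation point (invoking Assumption~\ref{ass:optimal-strats} again), concluding $\pfunction{i}(\rho') \le \acVal_i(h) = \aVal_i(h) \le \pfunction{i}(\rho)$, i.e., that $\sigma_i$ very weakly dominates $\sigma_i'$. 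You instead fix only the witness adversary of the strict inequality and split on the single quantity $\aVal_i(h\cdot v',\sigma_i')$: if it is below $\aVal_i(h)$, an antagonistic continuation under $v'$ refutes very weak dominance of $\sigma_i'$ over $\sigma_i$; otherwise $\sigma_i'$ is itself worst-case optimal from $h$, so by the very definition of $\acVal_i$ its cooperative value is at most $\acVal_i(h) = \aVal_i(h)$, forcing $\pfunction{i}(\rho') = \aVal_i(h) = \pfunction{i}(\rho)$ and contradicting the witness. This localizes the argument to the one deviation edge, dispenses with the path-following construction, and as a by-product your proof of this direction never uses Assumption~\ref{ass:optimal-strats} (the paper's proof of the same direction needs it for the switches); what the paper's version buys in exchange is the explicit, uniform conclusion that under condition~\eqref{eqn:adm2} the payoff comparison $\pfunction{i}(\rho') \le \pfunction{i}(\rho)$ holds against every adversary, not just a contradiction from one witness.
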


It will be useful to consider the negation of~$\propadm{h}{\sigma}$, which we
simplify as follows:
\begin{lemma}\label{lem:negated-propadam}
	For all histories~$h$ and strategy~$\sigma$, the negation
	of~$\propadm{h}{\sigma}$ is equivalent to
	\begin{align}
		\label{eqn:stlhd1}
		\cVal_i(h,\sigma) \leq \aVal_i(h) \land \aVal_i(h,\sigma) <
			\aVal_i(h)\\
		\label{eqn:stlhd2}
		\lor\quad \cVal_i(h,\sigma) = \aVal_i(h,\sigma) = \aVal_i(h)
			\land \acVal_i(h) > \aVal_i(h).
	\end{align}
\end{lemma}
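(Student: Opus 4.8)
The plan is to treat this as a propositional simplification driven entirely by the elementary ordering of the values, with a single case split. First I would record the inequalities that hold for every strategy and history: since $\aVal_i(h,\sigma)$ and $\cVal_i(h,\sigma)$ are an infimum and a supremum of the same quantity over the same set $\Sigma_{-i}$, we have $\aVal_i(h,\sigma) \le \cVal_i(h,\sigma)$; and because each value of a history is a supremum over strategies, $\aVal_i(h,\sigma) \le \aVal_i(h)$. Writing $A$ for~\eqref{eqn:adm1} and $B$ for~\eqref{eqn:adm2}, the property $\propadm{h}{\sigma}$ is $A \lor B$, so its negation is $\lnot A \land \lnot B$, where $\lnot A$ is exactly $\cVal_i(h,\sigma) \le \aVal_i(h)$ and $\lnot B$ states that the four quantities $\aVal_i(h,\sigma)$, $\cVal_i(h,\sigma)$, $\aVal_i(h)$, $\acVal_i(h)$ are not all equal.

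Next I would assume $\lnot A$ and split on the position of $\aVal_i(h,\sigma)$. Under $\lnot A$ the chain $\aVal_i(h,\sigma) \le \cVal_i(h,\sigma) \le \aVal_i(h)$ already forces $\aVal_i(h,\sigma) \le \aVal_i(h)$, so only two cases remain. If $\aVal_i(h,\sigma) < \aVal_i(h)$, then $B$ fails automatically (it requires $\aVal_i(h,\sigma) = \aVal_i(h)$), and $\lnot A$ together with this strict inequality is precisely~\eqref{eqn:stlhd1}. If instead $\aVal_i(h,\sigma) = \aVal_i(h)$, the same chain collapses to $\aVal_i(h,\sigma) = \cVal_i(h,\sigma) = \aVal_i(h)$; since $\sigma$ is then worst-case optimal at $h$, a worst-case optimal strategy exists and hence $\acVal_i(h) \ge \aVal_i(h)$, so $\lnot B$ reduces to $\acVal_i(h) \ne \aVal_i(h)$, i.e. $\acVal_i(h) > \aVal_i(h)$, which is exactly~\eqref{eqn:stlhd2}.

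For the converse I would read these two implications backwards. Disjunct~\eqref{eqn:stlhd1} gives $\cVal_i(h,\sigma) \le \aVal_i(h)$, which is $\lnot A$, and its condition $\aVal_i(h,\sigma) < \aVal_i(h)$ falsifies $B$; disjunct~\eqref{eqn:stlhd2} gives $\cVal_i(h,\sigma) = \aVal_i(h)$, hence $\lnot A$, while $\acVal_i(h) > \aVal_i(h)$ again falsifies $B$. Thus either disjunct implies $\lnot A \land \lnot B$, completing the equivalence.

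The only delicate point, and the step I would flag as the main obstacle, is handling the $\acVal_i(h)$ term so that the inequality $\acVal_i(h) \ne \aVal_i(h)$ coming from $\lnot B$ can be upgraded to the strict $\acVal_i(h) > \aVal_i(h)$ appearing in~\eqref{eqn:stlhd2}. I would emphasise that this does \emph{not} rely on the global Assumption~\ref{ass:optimal-strats}: in the only branch where it is needed, namely $\aVal_i(h,\sigma) = \aVal_i(h)$, the strategy $\sigma$ itself witnesses that a worst-case optimal strategy exists, so the definition of $\acVal_i$ yields $\acVal_i(h) \ge \cVal_i(h,\sigma) = \aVal_i(h)$ locally. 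Everything else is routine Boolean manipulation.
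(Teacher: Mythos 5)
Your proof is correct and follows essentially the same route as the paper's: a purely propositional simplification driven by the elementary inequalities $\aVal_i(h,\sigma) \le \cVal_i(h,\sigma)$ and $\aVal_i(h,\sigma) \le \aVal_i(h)$, together with the observation that in the branch $\aVal_i(h,\sigma)=\aVal_i(h)$ the strategy $\sigma$ itself witnesses worst-case optimality and hence $\acVal_i(h)\ge \cVal_i(h,\sigma) = \aVal_i(h)$. You even make explicit the step the paper compresses into ``simplifying the second term,'' namely that this lower bound on $\acVal_i(h)$ needs no appeal to Assumption~\ref{ass:optimal-strats}, which is a worthwhile clarification but not a different argument.
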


\begin{proof}[Proof of Thm.~\ref{thm:lhd-admissibles}]
	\fbox{$\Rightarrow$} We prove the contrapositive.  Assume that~$\exists
	h \in \history_\vinit(\calG,\sigma_i)$, $\last(h) \in V_i$ and $\lnot
	\propadm{h}{\sigma_i}$.  Then by Lem.~\ref{lem:negated-propadam},
	either \eqref{eqn:stlhd1} or \eqref{eqn:stlhd2} holds for $(h,
	\sigma_i)$.
  
	Assume~\eqref{eqn:stlhd1} holds for $(h, \sigma_i)$. 
    By Assumption~\ref{ass:optimal-strats}, there exists a \emph{worst-case optimal} strategy
    $\sigma^{\wco}_h$ from~$h$,
    with $\aVal_i(h, \sigma^{\wco}_h) = \aVal_i(h)$.
    Define~$\sigma'_i \defeq \sigma_i[h
	\leftarrow \sigma^{\wco}_h]$.  We claim that~$\sigma'_i$ weakly
	dominates~$\sigma_i$.  In fact, for any~$\sigma_{-i} \in
	\stratset_{-i}(\calG)$ with~$h \not \in
	\history_{\vinit}(\calG,\sigma_{-i})$, we
	have $\outcome_{\vinit}(\calG,\sigma_i,\sigma_{-i}) =
	\outcome_{\vinit}(\calG,\sigma'_i,\sigma_{-i})$.  For any~$\sigma_{-i}
	\in \stratset_{-i}(\calG)$ compatible with~$h$,
    both outcomes go through~$h$.  
	By definition of $\sigma'_i$,
	$\outcome_{\vinit}(\calG,\sigma'_i,\sigma_{-i}) = h_{\le |h|-1} \cdot
	\outcome_{h}(\calG,\sigma^{\wco}_h,\sigma_{-i})$.  Therefore, we have
	that $\pfunction{i}(\outcome_{\vinit}(\calG,\sigma'_i,\sigma_{-i}))
 	\ge \aVal_i(h)$ by definition of $\sigma^{\wco}_h$. The latter is
	greater than $\cVal_i(h, \sigma_i)$ from \eqref{eqn:stlhd1}, so
	greater than
	$\pfunction{i}(\outcome_{\vinit}(\calG,\sigma_i,\sigma_{-i}))$ by
	definition of $\cVal_i(\cdot)$.
	Thus, $\sigma'_i$ very weakly dominates~$\sigma_i$.  Since by
	assumption, $\aVal_i(h,\sigma_i) < \aVal_i(h)$, and $h$ is compatible
	with $\sigma_i$, there is a strategy $\sigma_{-i} \in \Sigma_{-i}$
	such that
$h \prefix \outcome_\vinit(\calG, \sigma_i,\sigma_{-i})$ and
	$\pfunction{i}(\outcome_\vinit(\calG,\sigma_i,\sigma_{-i})) <$
	$\aVal_i(h)$.  As shown before,
	\(
		\aVal_i(h) \le
		\pfunction{i}(\outcome_{\vinit}(\calG,\sigma'_i,\sigma_{-i})).
	\)
        Hence, $\sigma'_i$ weakly dominates~$\sigma_i$.

	Assume now that~\eqref{eqn:stlhd2} holds. 
    Consider~$\epsilon>0$ small enough so that $\acVal_i(h) > \aVal_i(h)+\epsilon$.
    By definition of~$\acVal_i(h)$, there exists a strategy~$\tau_i \in \Sigma_i$
    such that $\cVal_i( h, \tau_i) \geq \aVal_i(h) + \epsilon$,
    and moreover $\aVal_i(h, \tau_i) \geq \aVal_i(h)$.
    Consider~$\tau_{-i} \in \Sigma_{-i}$ compatible with~$h$ such that
    \(
      \pfunction{i}(\outcome_h(\calG, h, (\tau_i,\tau_{-i}))) \geq \cVal_i(h,\tau_i)-\frac{\epsilon}{2}
      \geq \aVal_i(h) + \frac{\epsilon}{2} > \aVal_i( h).
    \)
    Note that such a $\tau_{-i}$ exists by definition of~$\cVal_i( h, \tau_i)$.
    It follows that $\sigma_i[h \leftarrow \tau_i]$ weakly dominates~$\sigma_i$.
    In fact, the outcomes are identical for any outcome not compatible with~$h$.
    For any~$\sigma_{-i}$ compatible with~$h$,
	we have $\pfunction{i}(\outcome_{\vinit}(\calG,\sigma_i,\sigma_{-i})) =
	\aVal_i(h)$ by~\eqref{eqn:stlhd2}.  Moreover,
	$\pfunction{i}(\outcome_{\vinit}(\calG,\sigma'_i,\sigma_{-i})) \geq
	\aVal_i(h)$ since at~$h$
    we have that $\aVal_i( h, \tau_i)\geq \aVal_i(h)$;
    thus $\aVal_i( h, \sigma_i') \geq \aVal_i(h)$.
    Furthermore, we have $\pfunction{i}(\outcome_{\vinit}(\calG, \sigma_i',\tau_{-i})) > \aVal_i(h)
    \geq \pfunction{i}(\outcome_{\vinit}(\calG, \sigma_i, \tau_{-i}))$.

	\fbox{$\Leftarrow$} Assume that for all $h \in \history_\vinit(\calG,\sigma_i)$
	with $\last(h)\in V_i$, we have $\propadm{h}{\sigma_i}$, and that $\sigma_i$
	is weakly dominated by some strategy~$\sigma'_i$.  
        We will show a contradiction.

	Let $\sigma_{-i}$ be a strategy in $\stratset_{-i}(\calG)$ and $\rho =
	\outcome_{\vinit}(\calG,\sigma_i,\sigma_{-i})$ and $\rho' =
	\outcome_{\vinit}(\calG,\sigma'_i,\sigma_{-i})$.  If $\rho = \rho'$ then
	$\pfunction{i}(\rho') \le \pfunction{i}(\rho)$ and otherwise let $j$ be
	the first index where they differ, and $h = \rho_{\le j - 1} =
	\rho'_{\le j - 1}$.  We have that $h$ is compatible with both
	strategies, $\last(h) \in V_i$ and $\sigma_i(h) \ne \sigma_{i}'(h)$.

	If~\eqref{eqn:adm1} holds, that is,~$\cVal_i(h,\sigma_i) > \aVal_i(h)$,
    consider~$\epsilon>0$ such that ~$\cVal_i(h,\sigma_i) > \aVal_i(h)+\epsilon$,
	and a strategy~$\sigma'_{-i} \in \stratset_{-i}$ which
	ensures that
	$\pfunction{i}(\outcome_{h\sigma_i(h)}(\calG, \sigma_i,\sigma'_{-i})) \geq
	\cVal_i(h,\sigma_i)-\frac{\epsilon}{2}$, and
	$\pfunction{i}(\outcome_{h\sigma'_i(h)}(\calG, \sigma'_i,\sigma'_{-i})) \leq
	\aVal_i(h,\sigma'_i)+\frac{\epsilon}{2}$.  Such a strategy
	profile~$\sigma'_{-i}$ exists since $h\sigma_i'(h)$ and~$h\sigma_i(h)$
	are distinct, and since~$\last(h) \in V_i$. The latter also implies that
	$\aVal_i(h) \geq \aVal_i(h,\sigma_i')$.
	It thus follows that
    \[
	    \pfunction{i}(\outcome_{\vinit}(\calG, \sigma_i,\sigma_{-i}[h
	    \leftarrow \sigma_{-i}'])) >
	    \pfunction{i}(\outcome_{\vinit}(\calG, \sigma_i', \sigma_{-i}[h
	    \leftarrow \sigma_{-i}']))
    \]
	contradicting the fact that $\sigma_i'$ weakly dominates $\sigma_i$.

	Therefore~\eqref{eqn:adm2} must hold, and $\acVal_i(h) = \aVal_i(h)$.  
    If there exists~$j \geq |h|$ such that $\aVal_i(\rho'_{\leq j}) < \aVal_i(h)$,
    then there exists~$\epsilon>0$ and a strategy profile $\sigma'_{-i}\in \stratset_{-i}$
	compatible with $h$ which ensures
	that $\pfunction{i}(\outcome_{\vinit}(\calG,\sigma_i',\sigma'_{-i})) \leq \aVal_i(\rho_{\leq j}')+\epsilon < \aVal_i(h) \leq \pfunction{i}(\outcome_{\vinit}(\calG,\sigma_i,\sigma'_{-i}))$. 
        This contradicts  $\sigma'$ weakly dominating~$\sigma$.
        Hence for all $j \geq |h|$, $\aVal_i(\rho'_{\leq j}) \ge \aVal_i(h)$. 
        Now, observe that $\pfunction{i}(\rho') \le \acVal_i(h)$. In fact, one can construct a strategy~$\tau$, which, from~$h$ follows~$\rho'$,
        and in case another player does not respect~$\rho$, switches to a worst-case optimal strategy ensuring~$\aVal_i(\rho'_{\leq j})\geq \aVal_i(h)$.
        It follows that~$\pfunction{i}(\rho') \leq \cVal_i(h,\tau) \leq \acVal_i(h)$.
        Furthermore, by~\eqref{eqn:adm2}, $\pfunction{i}(\rho) \geq \acVal_i(h) = \aVal_i(h,\sigma_i)$,
        so $\pfunction{i}(\rho') \leq \pfunction{i}(\rho)$.
        This being true for all strategies of $\stratset_{-i}$ proves that~$\sigma_i$ very weakly dominates~$\sigma'_i$ and contradicts that $\sigma'_i$ weakly dominates~$\sigma_i$.
\end{proof}

\section{Characterization of the Outcomes of Admissible Strategies}
\label{sec:ltl-characterization}
Observe that the characterization of Thm.~\ref{thm:lhd-admissibles} does not
immediately yield an effective representation of the \emph{set} of admissible
strategies.  In order to reason about the possible behaviors observable in a
game under admissible strategies we are interested in describing the set of
outcomes that can be observed when all players play admissible strategies.  In
this section, for each player, we give a linear temporal logic description of
the outcomes that are each compatible with at least one admissible strategy.

Note that our main goal is to obtain such a characterization in full
generality, for all well-formed games
so we defer
computability considerations to the next section. We will then see how
the three types of values can be computed at all histories.

Let us fix a game~$\calG$, and \player{i}.
We present the intuition of the characterization. If an outcome~$\rho$ is
compatible with an admissible strategy, say~$\sigma_i$, then all prefixes~$h$
with $\last(h) \in V_i$ must satisfy~\eqref{eqn:adm1} or~\eqref{eqn:adm2}.
Given~$h$, if~\eqref{eqn:adm1} holds, then two things can happen.  Either
$\pfunction{i}(\rho) > \aVal_i(h)$, and thus $\rho$ witnesses
$\cVal_i(h,\sigma_i) > \aVal_i(h)$, or this is not the case but there is another
outcome~$\rho'$---compatible with $\sigma_i$---extending~$h$
with~$\pfunction{i}(\rho')>\aVal_i(h)$.  Notice how the longest common prefix
of~$\rho$ and~$\rho'$ ends always with a vertex in~$V_{-i}$ since both outcomes are
compatible with~$\sigma_i$.  If~\eqref{eqn:adm2} holds at~$h$, then, in
particular, $\pfunction{i}(\rho) = \aVal_i(h)$ and, moreover, $\aVal_i$ remains
constant at all prefixes of~$\rho$ extending~$h$.  The last observation simply
follows from $\aVal_i(h,\sigma_i) = \aVal_i(h)$ which is implied
by~\eqref{eqn:adm2}.


\subparagraph{Extended $\ltlpayoff$.}
Let $\mathbf{aValues}_i = \{\aVal_i(h) \st h \text{ is a history}\}$ be the set
of antagonistic values of \player{i}. We will now define atomic propositions
attached to edges of a game. Formally, we have a \newdef{labelling function}
$\lambda : E \to \pow(\ap)$ which assigns to every edge a set of propositions
from $\ap$. The set $\ap$ includes the proposition $\ownProp$ whose truth value,
for every edge $e =(u,v)$, is determined as follows:
$\ownProp \in \labeling(u,v) \defiff u \in V_i$.

We consider $\ltlpayoff$ with atomic propositions as defined above and
additional propositions $\aValProp_q$, $\acValProp_q$, and $\gAlt_q$ defined for
all $q \in \mathbf{aValues}_i$. The semantics of these are straightforward: for
an outcome $\rho$ and $k \in \mathbb{N}_{>0}$ we have
\begin{align*}
   (\rho,k) \models \aValProp_q  &\defiff \aVal_i(\rho_{\le k}) = q,\\
   (\rho,k) \models \acValProp_q  &\defiff \acVal_i(\rho_{\le k}) = q, \text{ and}\\
   (\rho,k) \models \gAlt_q  &\defiff  \rho_k \in V_{-i} \land \exists v' \neq
	 \rho_k
   : (\rho_{k},v') \in E \land \cVal_i(\rho_{\le k} \cdot v') > q,
\end{align*}
with the convention that, when $k$ is omitted, we assume it is $1$.

As mentioned earlier, we  consider two
cases depending on whether~\eqref{eqn:adm1} or~\eqref{eqn:adm2} hold. Thus, let
us define the corresponding two sub-formulas:
\[ \textstyle
\phi_{\ref{eqn:adm1}}
	\defeq \bigvee_{q \in \mathsf{aValues}_i}
	\left(\aValProp_q \land \left(\pfunction{i} > q \lor \F (\gAlt_q)
	\right)\right), \text{ and}
\]
\[ \textstyle
	\phi_{\ref{eqn:adm2}} \defeq \bigvee_{q \in \mathsf{aValues}_i} \left(
	\aValProp_q \land \acValProp_q \land \pfunction{i} = q \land \G\left(
	\aValProp_q\right) \right).
\]
We define the following formula which will be shown to capture the outcomes of
admissible strategies:
\(
	\Phi_{\mathsf{adm}}^i \defeq \G \left( \lnot \ownProp \lor
	\phi_{\ref{eqn:adm1}} \lor \phi_{\ref{eqn:adm2}} \right).
\)

\begin{theorem}
  \label{thm:ltl-characterization}
	For any well-formed game~$\calG$, outcome $\rho$ satisfies $\Phi_{\mathsf{adm}}^i$ if, and only if, it
	is compatible with an admissible strategy for \player{i}.
\end{theorem}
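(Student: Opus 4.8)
The plan is to deduce the statement from the value-based characterization of Theorem~\ref{thm:lhd-admissibles}, reading the two sub-formulas $\phi_{\ref{eqn:adm1}}$ and $\phi_{\ref{eqn:adm2}}$ as the pointwise images along $\rho$ of the two disjuncts~\eqref{eqn:adm1} and~\eqref{eqn:adm2} of the property $\propadm{h}{\sigma_i}$. Fix a position $k$ with $\rho_k \in V_i$ and write $q = \aVal_i(\rho_{\le k})$, so that $\aValProp_q$ holds at $k$. Throughout I interpret the payoff atoms relative to the current history (they coincide with $\pfunction{i}(\rho)$ for the prefix-independent measures, and in general with the residual-game payoff of Lemma~\ref{lemma:residual}), and I read $\gAlt_q$ at $m$ as asserting an alternative move $v' \ne \rho_{m+1}$, distinct from the one taken on $\rho$, with $\cVal_i(\rho_{\le m}\cdot v') > q$, matching the intuition that the divergence of two $\sigma_i$-compatible outcomes always occurs at a vertex of $V_{-i}$.

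For the implication from compatibility to $\Phi_{\mathsf{adm}}^i$, fix an admissible $\sigma_i$ and any $\sigma_{-i}$ with $\rho = \outcome_\vinit(\sigma_i,\sigma_{-i})$, and apply Theorem~\ref{thm:lhd-admissibles} at every prefix $\rho_{\le k}$ with $\rho_k \in V_i$. If~\eqref{eqn:adm1} holds then $\cVal_i(\rho_{\le k},\sigma_i) > q$, so some $\sigma_i$-compatible $\rho^\star$ from $\rho_{\le k}$ has payoff $> q$; either $\rho$ itself already exceeds $q$, giving the atom $\pfunction{i} > q$, or $\rho$ and $\rho^\star$ first split at an index $m \ge k$ which, by determinism of $\sigma_i$, lies in $V_{-i}$, and then the successor $\rho^\star_{m+1}$ witnesses $\cVal_i(\rho_{\le m}\cdot\rho^\star_{m+1}) \ge \pfunction{i}(\rho^\star) > q$, i.e.\ $\F\gAlt_q$; either way $\phi_{\ref{eqn:adm1}}$ holds at $k$. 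If instead~\eqref{eqn:adm2} holds then $\aValProp_q$ and $\acValProp_q$ hold, and since $\aVal_i(\rho_{\le k},\sigma_i) = \cVal_i(\rho_{\le k},\sigma_i) = q$ every $\sigma_i$-compatible outcome, in particular $\rho$, has payoff exactly $q$, giving the atom $\pfunction{i} = q$.

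The remaining conjunct $\G\aValProp_q$ is the crux of the proof. The bound $\aVal_i(\rho_{\le m}) \ge q$ is routine, since narrowing the adversary to plays through $\rho_{\le m}$ cannot decrease $\aVal_i(\cdot,\sigma_i)$, while $\cVal_i(\rho_{\le m},\sigma_i) \le \cVal_i(\rho_{\le k},\sigma_i) = q$ caps it. For the converse I would take the first $m$ at which $\aVal_i(\rho_{\le m}) > q$ (forcing $m > k$ and $\aVal_i(\rho_{\le m-1}) = q$). Such a jump cannot occur at a vertex of $V_i$, for then player~$i$ could already guarantee more than $q$ one step earlier, contradicting $\aVal_i(\rho_{\le m-1}) = q$; hence $\rho_{m-1} \in V_{-i}$, and its high-value successor $\rho_m$ lets one assemble a worst-case-optimal strategy at $\rho_{\le m-1}$ of cooperative value at least $\aVal_i(\rho_{\le m}) > q$, so $\acVal_i(\rho_{\le m-1}) > q$. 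Splicing any worst-case-optimal strategy at $\rho_{\le m-1}$ after $\sigma_i$ preserves the guarantee $q$ and reaches $\rho_{\le m-1}$ cooperatively, whence $\acVal_i(\rho_{\le k}) \ge \acVal_i(\rho_{\le m-1}) > q$, contradicting $\acVal_i(\rho_{\le k}) = q$ from~\eqref{eqn:adm2}. Thus $\aVal_i$ stays equal to $q$ past $k$ and $\phi_{\ref{eqn:adm2}}$ holds, finishing this direction.

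For the converse, given $\rho \models \Phi_{\mathsf{adm}}^i$ I build a witness $\sigma_i$ that follows $\rho$ and, at the first moment the adversary leaves $\rho$ (necessarily at some $\rho_m \in V_{-i}$ moving to $v' \ne \rho_{m+1}$), switches to a strongly cooperative-optimal strategy of the residual game $\calG_{\rho_{\le m}\cdot v'}$; such strategies exist and are admissible by Lemmas~\ref{lem:strongcoopop-exist}, \ref{lem:sco-are-adm} and~\ref{lem:residual-admissible}. I then verify $\propadm{h}{\sigma_i}$ at every $\sigma_i$-compatible $h$ with $\last(h) \in V_i$ and conclude by Theorem~\ref{thm:lhd-admissibles}. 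Off $\rho$, the strategy $\sigma_i$ is SCO, hence admissible, and Lemma~\ref{lemma:residual} transfers the value identities back to $\calG$. On $\rho$, with $h = \rho_{\le k}$, I split on $\cVal_i(h,\sigma_i)$: if it exceeds $q$ then~\eqref{eqn:adm1} holds directly; otherwise $\cVal_i(h,\sigma_i) \le q$ rules out both disjuncts of $\phi_{\ref{eqn:adm1}}$ (neither $\pfunction{i} > q$ on $\rho$ nor an off-path $\gAlt_q$ can hold, since either would be realised by the SCO continuations and push $\cVal_i(h,\sigma_i)$ above $q$), so $\Phi_{\mathsf{adm}}^i$ forces $\phi_{\ref{eqn:adm2}}$, yielding $\aVal_i(h) = \acVal_i(h) = q$, $\pfunction{i}(\rho) = q$ and $\G\aValProp_q$. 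Then every adversarial departure at $\rho_m \in V_{-i}$ lands at a history $g$ with $\aVal_i(g) = \cVal_i(g) = q$, the lower bound coming from $\G\aValProp_q$ and the upper bound from $\cVal_i(h,\sigma_i) \le q$, where the SCO continuation is worst-case optimal and keeps every compatible outcome at payoff exactly $q$; together with $\pfunction{i}(\rho) = q$ this gives $\aVal_i(h,\sigma_i) = \cVal_i(h,\sigma_i) = q$, i.e.~\eqref{eqn:adm2}. Hence $\sigma_i$ is admissible and compatible with $\rho$. The single real difficulty is the $\acVal$-monotonicity argument behind $\G\aValProp_q$; the rest is bookkeeping through Theorem~\ref{thm:lhd-admissibles} and the residual-game lemmas.
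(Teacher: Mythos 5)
Your proof is correct and follows the paper's skeleton---both directions reduce to Theorem~\ref{thm:lhd-admissibles}, the forward one by a per-prefix case analysis on \eqref{eqn:adm1}/\eqref{eqn:adm2}, the backward one by a follow-$\rho$-and-switch witness---but it deviates in two places, and both deviations are sound. For the conjunct $\G\, \aValProp_q$ in the \eqref{eqn:adm2} case, the paper delegates to an auxiliary result (Lem.~\ref{lem:acaVal-constant}(b)); your inlined first-jump argument (the jump in $\aVal_i$ must occur at a $V_{-i}$ vertex, and splicing worst-case-optimal strategies there yields $\acVal_i(\rho_{\le k}) > q$, contradicting \eqref{eqn:adm2}) is essentially the same splicing idea carried out locally, and it correctly leans on Assumption~\ref{ass:optimal-strats} to assemble the spliced strategy. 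More substantially, your converse witness always switches to an SCO strategy of the residual game at the deviation point, whereas the paper's witness switches either to an SCO or to a worst-case-optimal strategy according to whether $\phi_{\ref{eqn:adm1}}$ or $\phi_{\ref{eqn:adm2}}$ holds at the abandoned prefix. Your uniform choice is a genuine simplification, but it is only sound because of the extra observation you supply: when $\cVal_i(h,\sigma_i) \le q$, the conjunct $\G \lnot \gAlt_q$ hidden inside $\lnot\phi_{\ref{eqn:adm1}}$ pins every deviation history $g$ to $q \le \aVal_i(g) \le \cVal_i(g) \le q$, so the SCO strategy of $\calG_g$ is automatically worst-case optimal there and all of its outcomes have payoff exactly $q$; this is precisely what makes the paper's separate WCO branch redundant. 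Finally, your explicit convention that payoff atoms evaluated at position $k$ denote the residual payoff in the sense of Lemma~\ref{lemma:residual} is the right reading for well-formed games whose payoffs are not prefix-independent, a point the paper's own proof leaves implicit.
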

	We give the idea of the proof.
	For any outcome $\rho$ compatible with an admissible strategy $\sigma_i$ for
	\player{i}. We show that for any prefix $h$ of $\rho$ with
	$\last(h) \in V_i$, $(\rho,|h|)$ satisfies either $\phi_{\ref{eqn:adm1}}$
	or $\phi_{\ref{eqn:adm2}}$. In fact, by
	Thm.~\ref{thm:lhd-admissibles}, either~\eqref{eqn:adm1} or~\eqref{eqn:adm2}
	hold, and we show that these correspond to $\phi_{\ref{eqn:adm1}}$
	and~$\phi_{\ref{eqn:adm2}}$.
	
	Conversely, for any $\rho$ satisfying
	$\Phi_{\textsf{adm}}^i$, we construct an admissible strategy~$\sigma_i$ for \player{i}
	compatible with~$\rho$. The strategy follows~$\rho$, and in case of deviation,
	it switches immediately either to an SCO---which is guaranteed to
	exist---or to a worst-case optimal
	strategy, depending on whether $\phi_1$ or $\phi_2$ holds at the current
	history. The resulting strategy is proven to be admissible.

\subparagraph{Assuming prefix-independence.} Before concluding this section,
let us consider the consequences of further assuming that our payoff function
is \emph{prefix-independent}. 
\begin{enumerate}
 	\setcounter{enumi}{\value{asscounter}} 	
	\item \label{ass:prefix-indep}
		For
		all $i \in \Agt$, for all outcomes $\rho$, it holds that
		\(
			\forall j \in \mathbb{N}, \pfunction{i}( (\rho_k)_{k \ge
			j}) = \pfunction{i}(\rho).
		\)		
\end{enumerate}

Observe that, under Assumption~\ref{ass:prefix-indep}, the set
$\mathbf{aValues}_i$ can be equivalently defined as $\{\aVal_i(v) \st v \in V\}$
and is thus finite. One can also extend the labelling $\lambda$ and set of
atomic propositions $\ap$ such that, for every edge $e = (u,v)$ and
$q \in \mathbf{aValues}_i$:
\begin{align*}
   \aValProp_q  \in \labeling(u,v) &\defiff \aVal_i(u) = q,\\
   \acValProp_q  \in \labeling(u,v) &\defiff \acVal_i(u) = q, \text{ and}\\
   \gAlt_q  \in \labeling(u,v) &  \defiff  u \in V_{-i} \land \exists v' \neq v : (u,v') \in E 
		\land \cVal_i(v') > q.
\end{align*}
It immediately follows that:
\begin{lemma}\label{lem:prefix-indep}
	Under Assumption~\ref{ass:prefix-indep}, for all $i \in \Agt$, 
	$\Phi_{\mathsf{adm}}^i$ is expressible in $\ltlpayoff$.
\end{lemma}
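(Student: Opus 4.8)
The plan is to show that, once Assumption~\ref{ass:prefix-indep} is in force, every ingredient of $\Phi_{\mathsf{adm}}^i$ is a legal construct of $\ltlpayoff$: the auxiliary propositions $\aValProp_q$, $\acValProp_q$ and $\gAlt_q$ collapse to ordinary edge labellings, the disjunctions over $q$ become finite, and the remaining operators ($\G$, $\F$, the Boolean connectives, and the payoff predicates) are already part of the logic. The whole lemma therefore hinges on one observation, which I expect to be the only step requiring real work.

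That observation is that, under prefix-independence, each of the three values is a function of the last vertex of a history only: for every history $h$ one has $\aVal_i(h) = \aVal_i(\last(h))$, $\cVal_i(h) = \cVal_i(\last(h))$ and $\acVal_i(h) = \acVal_i(\last(h))$, where $\last(h)$ on the right is read as the length-one history formed by that vertex. To prove this I would pass through the residual game $\calG_h$ and Lemma~\ref{lemma:residual}. The residual payoff is $\pfunction{i}'(\rho) = \pfunction{i}(h \Cdot \rho)$, and since $h \Cdot \rho$ differs from $\rho$ only by the prepended finite prefix $h_{\le |h|-1}$---so that the suffix of $h \Cdot \rho$ from position $|h|$ is exactly $\rho$---Assumption~\ref{ass:prefix-indep} gives $\pfunction{i}'(\rho) = \pfunction{i}(\rho)$ for every outcome $\rho$ from $\last(h)$. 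Hence $\calG_h$ is $\calG$ with its initial vertex moved to $\last(h)$ and the payoff unchanged, so its values at the trivial history $\last(h)$ depend on nothing but $\last(h)$. Instantiating Lemma~\ref{lemma:residual} at $h' = \last(h)$ and using $h \Cdot \last(h) = h$ then yields $\aVal_i(\calG, h) = \aVal_i(\calG_h, \last(h)) = \aVal_i(\calG, \last(h))$, and the arguments for $\cVal_i$ and $\acVal_i$ are identical.

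The rest is bookkeeping. Because the values depend only on the current vertex and $V$ is finite, $\mathbf{aValues}_i = \{ \aVal_i(v) \st v \in V \}$ is finite, so each $\bigvee_{q \in \mathbf{aValues}_i}$ expands to a finite disjunction and $\Phi_{\mathsf{adm}}^i$ is a finite formula. Next I would verify that the extended labelling $\labeling$ stated just before the lemma recovers the original, history-based semantics on every outcome: for the edge $(\rho_k, \rho_{k+1})$ we get $(\rho, k) \models \aValProp_q$ iff $\aVal_i(\rho_{\le k}) = q$, which by vertex-dependence equals $\aVal_i(\rho_k) = q$, i.e.\ $\aValProp_q \in \labeling(\rho_k, \rho_{k+1})$; the case of $\acValProp_q$ is the same, and in $\gAlt_q$ the only history-sensitive clause $\cVal_i(\rho_{\le k} \cdot v') > q$ simplifies to $\cVal_i(v') > q$, matching the definition of $\labeling$. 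Thus $\aValProp_q$, $\acValProp_q$ and $\gAlt_q$ are genuine edge propositions of $\ap$, exactly like $\ownProp$.

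Assembling these facts finishes the proof: $\Phi_{\mathsf{adm}}^i$ is now a finite formula built solely from edge atomic propositions, the payoff predicates $\pfunction{i} > q$ and $\pfunction{i} = q$ (the latter read as $\pfunction{i} \ge q \land \pfunction{i} \le q$), Boolean connectives, and the modalities $\G$ and $\F$, all of which are admitted by the syntax of $\ltlpayoff$; hence $\Phi_{\mathsf{adm}}^i$ is expressible in $\ltlpayoff$. The only caveat worth flagging is that $\ltlpayoff$ takes rational thresholds, so strictly speaking one needs $\mathbf{aValues}_i \subseteq \mathbb{Q}$; this holds for the concrete measures of Section~\ref{sec:applications}, and in the abstract setting it suffices to close the constants of the logic under the finite set $\mathbf{aValues}_i$.
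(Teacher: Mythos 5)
Your proof is correct and takes essentially the same route the paper leaves implicit: under prefix-independence the three values depend only on the last vertex (you justify this cleanly via Lemma~\ref{lemma:residual}, where the paper merely asserts $\mathbf{aValues}_i = \{\aVal_i(v) \st v \in V\}$), hence $\mathbf{aValues}_i$ is finite, the propositions $\aValProp_q$, $\acValProp_q$, $\gAlt_q$ become ordinary edge labels, and $\Phi_{\mathsf{adm}}^i$ expands to a finite $\ltlpayoff$ formula. Your closing caveat about rational thresholds (including reading $\pfunction{i} = q$ as $\pfunction{i} \ge q \land \pfunction{i} \le q$) is a legitimate point the paper glosses over, and is harmless for the concrete measures of Section~\ref{sec:applications}.
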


\section{Applications and Future Works}\label{sec:applications}

In this section, we show how to apply Theorem~\ref{thm:lhd-admissibles}
(value-based characterization of admissible strategies) and
Theorem~\ref{thm:ltl-characterization}
(characterization of the set of outcomes of admissible strategies) to solve
relevant verification and synthesis problems. 

\smallskip
\noindent\textbf{Classical payoff functions.}
So far, we have assumed that games were equipped for each player $i \in \Agt$
with a payoff function. To define payoff functions, we proceed as usual by first
assigning weights to edges of the game graph using  \emph{weight functions} $w_i
: E \to \mathbb{Q}$, one for each player $i \in \Agt$. With the weight function
$w_i$, we associate to each outcome $\rho=\rho_1 \rho_2 \dots \rho_n \dots$, an
infinite sequence of rational values $w_i(\rho)=w_i(\rho_1 \rho_2) w_i(\rho_2
\rho_3) \dots w_i(\rho_n \rho_{n+1}) \dots$, and we aggregate this sequence of
values with measures such as $\inffun$, $\supfun$, $\linffun$, $\lsupfun$,
and mean payoff ($\underline{\mpfun}$ and $\overline{\mpfun}$).
It is well known, see \eg~\cite{cdh10} and~\cite{zp96}, that all the payoff
functions defined above satisfy
Assumptions~\ref{ass:optimal-strats}-\ref{ass:optimal-coop-strats}.
By Theorem~\ref{thm:existence}, we get the following.
\begin{lemma} \label{lem:measures-assumption}
	In games with payoff functions from $\inffun$, $\supfun$, $\linffun$,
	$\lsupfun$, $\underline{\mpfun}$, and $\overline{\mpfun}$, all players
	have admissible strategies.
\end{lemma}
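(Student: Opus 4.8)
The plan is to show that every such game is \emph{well-formed}---that is, it satisfies Assumptions~\ref{ass:optimal-strats} and~\ref{ass:optimal-coop-strats}---and then to conclude directly by Theorem~\ref{thm:existence}. Concretely, I would first recall the game-theoretic meaning of the two values. The antagonistic value $\aVal_i(\calG,h)$ is the value of the zero-sum two-player game in which \player{i} maximizes while the coalition $\Agt \setminus \{i\}$ minimizes, played on the finite arena $\langle V, E\rangle$ with weights $w_i$ aggregated by the chosen measure. The cooperative value $\cVal_i(\calG,h)$ is the optimum of the one-player optimization in which all players cooperate to maximize \player{i}'s payoff on the same arena. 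Assumption~\ref{ass:optimal-strats} asks for a worst-case optimal strategy witnessing $\aVal_i$, and Assumption~\ref{ass:optimal-coop-strats} for a cooperatively optimal strategy witnessing $\cVal_i$, at \emph{every} history.

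Second, I would invoke the classical existence-of-optimal-strategies results for each measure. For $\inffun$, $\supfun$, $\linffun$, $\lsupfun$, $\underline{\mpfun}$, and $\overline{\mpfun}$, two-player zero-sum games on finite graphs are determined and the maximizer has an optimal strategy (memoryless for the prefix-independent measures); this is standard, see \cite{zp96} for mean-payoff and \cite{cdh10} for the uniform treatment. This yields Assumption~\ref{ass:optimal-strats} at $\vinit$. Dually, the one-player supremum of each of these measures over infinite paths in a finite graph is attained by a lasso-shaped path, which yields Assumption~\ref{ass:optimal-coop-strats} at $\vinit$.

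Third---and this is the step that needs care---the two assumptions are required at \emph{all} histories $h$, not only at $\vinit$. Here I would use Lemma~\ref{lemma:residual}: the values at $h$ coincide with the values at the initial vertex of the residual game $\calG_h$. For the prefix-independent measures $\linffun$, $\lsupfun$, $\underline{\mpfun}$, and $\overline{\mpfun}$, the residual game is again a game of exactly the same form on $\langle V, E\rangle$, so the second step applies verbatim. For $\inffun$ and $\supfun$ the residual payoff additionally depends on the infimum (resp.\ supremum) of the weights already seen along $h$; but this running extremum ranges over the finite set of edge weights, so $\calG_h$ is still a finite game of the same kind (carrying one bounded register), and optimal strategies again exist. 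Hence Assumptions~\ref{ass:optimal-strats} and~\ref{ass:optimal-coop-strats} hold at every history, i.e.\ the game is well-formed.

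The main obstacle is precisely this uniformity over all histories: existence of optimal strategies from a single vertex is classical, but well-formedness demands it simultaneously from every history. For the prefix-independent measures this is immediate, whereas for $\inffun$ and $\supfun$ it relies on the finiteness of the relevant running parameter together with the residual-game identity of Lemma~\ref{lemma:residual}. Once well-formedness is established, Theorem~\ref{thm:existence} applies and yields that all players have admissible strategies, concluding the proof.
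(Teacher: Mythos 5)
Your proposal is correct and follows essentially the same route as the paper: establish that these measures satisfy Assumptions~\ref{ass:optimal-strats} and~\ref{ass:optimal-coop-strats} (well-formedness) by appeal to the classical existence results of~\cite{zp96,cdh10}, then conclude via Theorem~\ref{thm:existence}. The paper simply asserts the assumptions as ``well known,'' whereas you additionally spell out the uniformity over all histories (via Lemma~\ref{lemma:residual} and, for $\inffun$/$\supfun$, the bounded running-extremum argument that the paper relegates to its appendix transformation), which is a welcome elaboration rather than a different proof.
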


It is also known that, in games defined with the payoff functions considered
here, the antagonistic and cooperative values ($\cVal$ and $\aVal$) are
computable. One can also show that $\acVal$ is computable for prefix-independent
payoff functions. Indeed, this value of a vertex coincides with the $\cVal$
inside the sub-graph induced by the vertices with the optimal antagonistic value.
Furthermore, using a classical
transformation on the game structure, we can guarantee that all payoff functions
above are prefix-independent. 
We thus obtain the following result,
by Lemma~\ref{lem:prefix-indep}.
\begin{lemma} \label{lem:all-computable}
	In games with payoff functions from $\inffun$, $\supfun$, $\linffun$,
	$\lsupfun$, $\underline{\mpfun}$, and $\overline{\mpfun}$, the formulas
	$\Phi^i_{\textsf{adm}}$ for all $i \in \Agt$ are effectively computable,
	finite,	and expressible in $\ltlpayoff{}$.
\end{lemma}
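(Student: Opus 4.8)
The plan is to reduce everything to the prefix-independent case and then invoke Lemma~\ref{lem:prefix-indep}. First I would observe that $\linffun$, $\lsupfun$, $\underline{\mpfun}$ and $\overline{\mpfun}$ are already prefix-independent, so for these nothing needs to be done. For $\inffun$ and $\supfun$ I would apply the classical transformation that augments each vertex with the running extremum, namely the minimum (resp.\ maximum) weight seen so far. Since $w_i(E) \subseteq \mathbb{Q}$ is finite, this running extremum ranges over finitely many values, so the product game is still finite; moreover along every outcome it is monotone, hence eventually constant, so one may set the new weight of each edge to the updated running extremum and take its limit as the payoff, which is prefix-independent. I would then check that this lifting is faithful: every original history determines its lifted state, so there is a natural bijection between the strategy sets and between the outcomes of the two games, under which the payoff, and therefore the very-weak- and weak-dominance relations and the values $\aVal_i$, $\cVal_i$, $\acVal_i$, are all preserved. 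Consequently an outcome is compatible with an admissible strategy in one game if and only if its image is in the other, so it suffices to construct the formula on the transformed, prefix-independent game.

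Working now on that prefix-independent game, finiteness of the formula is immediate: as noted just before Lemma~\ref{lem:prefix-indep}, under Assumption~\ref{ass:prefix-indep} we have $\mathbf{aValues}_i = \{\aVal_i(v) \st v \in V\}$, a finite set, so $\Phi^i_{\mathsf{adm}}$ is a finite disjunction over $q \in \mathbf{aValues}_i$; and Lemma~\ref{lem:prefix-indep} already provides its expressibility in $\ltlpayoff$.

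It remains to argue effectiveness, i.e.\ that the finite labelling defining $\aValProp_q$, $\acValProp_q$ and $\gAlt_q$ can actually be computed. For $\aVal_i(v)$ and $\cVal_i(v)$ I would cite the known algorithms: $\aVal_i$ is the value of a two-player zero-sum game and $\cVal_i$ the value of a one-player optimisation, both computable for $\inffun$, $\supfun$, $\linffun$, $\lsupfun$ and mean-payoff (see~\cite{cdh10,zp96}). For $\acVal_i(v)$ I would use the characterisation indicated in the text: under prefix-independence the antagonistic value is vertex-based, so restricting the game to the subgraph induced by the vertices $u$ with $\aVal_i(u) = \aVal_i(v)$ and computing the cooperative value there yields $\acVal_i(v)$; this reduces $\acVal_i$ to a $\cVal_i$-computation and is therefore effective. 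Finally the $\gAlt_q$ condition merely asks whether some successor $v' \neq v$ of a $V_{-i}$-vertex satisfies $\cVal_i(v') > q$, which is decidable once $\cVal_i$ has been computed. Hence every atomic proposition occurring in $\Phi^i_{\mathsf{adm}}$ has a computable labelling and the finitely many disjuncts are effectively constructible.

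I expect the main obstacle to be the faithfulness of the prefix-independence transformation: one must verify carefully that adding the running-extremum component to the state alters neither the values nor the dominance relation, so that the outcome characterisation of Theorem~\ref{thm:ltl-characterization} computed on the lifted game transfers back to the original $\inffun$/$\supfun$ game. By contrast, the value computations themselves, and in particular the reduction of $\acVal_i$ to a cooperative-value computation on the optimal-$\aVal_i$ subgraph, are comparatively routine once prefix-independence is in place.
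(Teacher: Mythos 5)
Your overall route is precisely the paper's: reduce $\inffun$ and $\supfun$ to the prefix-independent case via the running-extremum product construction (the other four measures being already prefix-independent), get finiteness of $\mathbf{aValues}_i$ and expressibility from Lemma~\ref{lem:prefix-indep}, obtain $\aVal_i$ and $\cVal_i$ from known algorithms, and reduce $\acVal_i$ to a cooperative-value computation on a subgraph of optimal-$\aVal_i$ vertices. The faithfulness of the transformation that you flag as the main obstacle is indeed what the paper's appendix on prefix-independence addresses.

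However, your reduction for $\acVal_i$ is wrong as stated: you restrict to the subgraph induced by the vertices $u$ with $\aVal_i(u) = \aVal_i(v)$, whereas the correct restriction (the one the paper uses, $\Gaval{v}$) is $\aVal_i(u) \geq \aVal_i(v)$. Outcomes of a worst-case optimal strategy can never enter vertices of strictly \emph{smaller} antagonistic value, but they can enter vertices of strictly \emph{larger} antagonistic value through moves of the adversary, and those are exactly where the cooperative gain may lie. Concretely, take $\lsupfun$ and let $v \in V_i$ have successors $a$ and $b$, where $a$ is absorbing with a self-loop of weight $1$, and $b \in V_{-i}$ has two absorbing successors $c$ and $d$ with self-loops of weight $10$ and $1$ respectively (non-loop edge weights are irrelevant for $\lsupfun$). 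Then $\aVal_i(v)=\aVal_i(a)=\aVal_i(b)=\aVal_i(d)=1$ and $\aVal_i(c)=10$; the strategy moving to $b$ is worst-case optimal and has cooperative value $10$, so $\acVal_i(v)=10$, yet your equality-subgraph excludes $c$ and yields the value $1$. This error propagates into the labelling $\acValProp_q$ and hence into $\Phi^i_{\mathsf{adm}}$: with your labels, the outcome $v\,a^{\omega}$ satisfies $\phi_{\ref{eqn:adm2}}$ and is classified as compatible with an admissible strategy, although the strategy moving to $a$ is weakly dominated by the one moving to $b$ (and indeed condition~\eqref{eqn:adm2} genuinely fails at $v$ because $\acVal_i(v) > \aVal_i(v)$). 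Replacing equality by $\geq$ repairs the argument; the rest of your proof stands.
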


We will now consider several problems of interest which can be solved using the
characterizations that we have developed in the previous sections. All the
results are applicable to the measures concerned by
Lemmas~\ref{lem:measures-assumption} and~\ref{lem:all-computable}.

\smallskip
\noindent\textbf{Deciding the admissibility of a finite memory strategy.} As a
first example, we consider the problem of deciding, given a game structure
$\mathcal{G}$, and a (finite memory) strategy $\sigma_i$ for player $i \in \Agt$
described as a finite state transducer ${\sf M}_i$,  if $\sigma_i$ is admissible
in $\mathcal{G}$. 

To solve this problem, we rely on Theorem~\ref{thm:lhd-admissibles} and proceed
as follows. First, we compute for each vertex $v$ of the game $\mathcal{G}$, the
values $\aVal_i(\calG,v)$, $\cVal_i(\calG,v)$, and $\acVal_i(\calG,v)$.  Second,
we construct the synchronized product between the transducer ${\sf M}_i$ that
defines the strategy $\sigma_i$ and the game $\mathcal{G}$. States in this
product are of the form $(v,m)$ where $v$ is a vertex of $\calG$ and $m$ is a
(memory) state of the transducer ${\sf M}_i$. Third, we compute for each state $(v,m)$
the values $\aVal_i(\calG,(v,m),\sigma_i)$, $\cVal_i(\calG,(v,m),\sigma_i)$, and
$\acVal_i(\calG,(v,m),\sigma_i)$. Finally, we verify that there is no reachable
vertex $(v,m)$ in the product where condition $(1)$ or condition $(2)$ are
falsified.
We then obtain the following theorem:

\begin{theorem}\label{thm:strat-admissible}
Given a game $\calG$ and a finite memory strategy $\sigma_i$ for player $i \in
\Agt$ specified as a finite state transducer ${\sf M}_i$, we can decide if $\sigma_i$
is an admissible strategy for player $i$ in {\sf PTime}~ for measures $\inffun$,
$\supfun$, $\linffun$, $\lsupfun$; in $\NP \cap \coNP$ for 
$\underline{\mpfun}$, and $\overline{\mpfun}$.
\end{theorem}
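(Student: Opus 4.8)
The plan is to reduce admissibility-checking to a reachability-of-bad-states problem on the synchronized product, and then to bound the cost of computing the three values at every product state. By Thm.~\ref{thm:lhd-admissibles}, $\sigma_i$ is admissible if and only if every history $h$ compatible with $\sigma_i$ ending in $V_i$ satisfies $\propadm{h}{\sigma_i}$. First I would build the product $\calG \times {\sf M}_i$, whose states $(v,m)$ track the current vertex and the memory state of the transducer realizing $\sigma_i$; because ${\sf M}_i$ is deterministic and finite, every history compatible with $\sigma_i$ maps to a finite path in this product, and histories that agree on their $\sigma_i$-relevant suffix collapse to the same product state. The key observation is that the value-based condition $\propadm{h}{\sigma_i}$---and equivalently its negation, given in Lem.~\ref{lem:negated-propadam} as the disjunction of~\eqref{eqn:stlhd1} and~\eqref{eqn:stlhd2}---depends only on the finitely many quantities $\aVal_i(h)$, $\acVal_i(h)$, $\cVal_i(h,\sigma_i)$ and $\aVal_i(h,\sigma_i)$, all of which factor through the product state once we exploit prefix-independence of the (transformed) payoff function. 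Thus checking admissibility amounts to verifying that no reachable product state $(v,m)$ falsifies the condition.

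Next I would show how to compute the required values. The history-values $\aVal_i(\calG,v)$, $\cVal_i(\calG,v)$ and $\acVal_i(\calG,v)$ are computed directly on $\calG$: $\aVal_i$ is the value of a zero-sum game of \player{i} against the coalition $-i$, $\cVal_i$ is the optimum of the one-player (fully cooperative) game, and---as noted after Lem.~\ref{lem:all-computable}---$\acVal_i$ of a vertex coincides with $\cVal_i$ restricted to the subgraph of vertices achieving the optimal $\aVal_i$. The strategy-restricted values $\aVal_i(\calG,(v,m),\sigma_i)$ and $\cVal_i(\calG,(v,m),\sigma_i)$ are computed on the product: once ${\sf M}_i$ fixes \player{i}'s moves, the product is effectively a game controlled solely by $-i$, so $\aVal_i(\cdot,\sigma_i)$ is a worst-case (adversarial) optimization and $\cVal_i(\cdot,\sigma_i)$ a best-case optimization, both single-player problems over the product graph. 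For the measures $\inffun,\supfun,\linffun,\lsupfun$ all of these are polynomial-time computable; for $\underline{\mpfun}$ and $\overline{\mpfun}$, the zero-sum mean-payoff game solving and one-player mean-payoff optimization give the $\NP \cap \coNP$ bounds, matching the known complexity of mean-payoff games.

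Finally I would assemble the decision procedure: for each reachable $(v,m)$ with $v \in V_i$, evaluate whether~\eqref{eqn:stlhd1} or~\eqref{eqn:stlhd2} holds using the precomputed values; $\sigma_i$ is admissible precisely when no such state exists. The product has size polynomial in $|\calG|$ and $|{\sf M}_i|$, and reachability within it is linear, so the overall complexity is dominated by the value computations, yielding {\sf PTime} for the $\inffun,\supfun,\linffun,\lsupfun$ measures and $\NP \cap \coNP$ for mean-payoff. The main obstacle I expect is the justification that the product state carries enough information to determine all four values correctly---in particular that prefix-independence (guaranteed by the classical transformation mentioned before Lem.~\ref{lem:all-computable}) legitimately replaces the history-indexed values $\aVal_i(h)$, $\acVal_i(h)$ by their vertex-indexed counterparts, and that the strategy-restricted values $\aVal_i((v,m),\sigma_i)$, $\cVal_i((v,m),\sigma_i)$ are well-defined functions of the product state rather than of the full history. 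Once prefix-independence is in place this is routine, but it is the step that makes the finite-state check sound.
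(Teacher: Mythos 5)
Your proposal matches the paper's own proof essentially step for step: both invoke Thm.~\ref{thm:lhd-admissibles}, compute the vertex values $\aVal_i$, $\cVal_i$, $\acVal_i$ on $\calG$ (using the prefix-independence transformation and the subgraph characterization of $\acVal_i$), compute the strategy-restricted values on the synchronized product $\calG \times {\sf M}_i$ as single-player optimizations, and reduce admissibility to checking that no reachable product state in $V_i$ violates the characterization, with the complexity dominated by zero-sum game solving (hence $\NP \cap \coNP$ for mean-payoff, {\sf PTime} otherwise). Your explicit attention to why the four relevant quantities factor through product states is a sound elaboration of exactly the soundness point the paper's sketch leaves implicit.
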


\smallskip
\noindent\textbf{Model-checking under admissibility.}
We now turn to the following problem. Given a game structure $\calG$ and a
\ltlpayoff{} formula $\phi$, decide if all outcomes of the game that are
compatible with the admissible strategies of all players satisfy~$\phi$, \ie~if
$\bigcap_{i \in \Agt} \outcome(\calG, \admstratset_i(\calG)) \models \phi$. This
problem was introduced in the Boolean setting in~\cite{BRS14} and allows one to
check that a property is induced by the rationality of the players in a game.

\begin{theorem}\label{thm:mc-admissible}
  For all measures $\inffun$, $\supfun$, $\linffun$,
  $\lsupfun$, $\underline{\mpfun}$, $\overline{\mpfun}$, 
  one can decide, given game~$\calG$ and \ltlpayoff{} formula~$\phi$,
  whether $\bigcap_{i \in \Agt} \outcome(\calG,
  \admstratset_i(\calG)) \models \phi$.
\end{theorem}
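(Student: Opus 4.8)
The plan is to reduce the problem to checking non-emptiness of the language of a finite-state structure against an $\ltlpayoff$ specification, by exploiting the outcome characterization of Theorem~\ref{thm:ltl-characterization} together with the effective computability established in Lemma~\ref{lem:all-computable}. First I would note that, for each of the measures under consideration, we may assume w.l.o.g. that the payoff function is prefix-independent, using the classical game transformation mentioned just before Lemma~\ref{lem:all-computable} (the transformation encodes the finite prefix into the vertex, which only multiplies the state space by a polynomial factor for these measures). Under this assumption, by Lemma~\ref{lem:all-computable}, for every player~$i \in \Agt$ the formula $\Phi_{\mathsf{adm}}^i$ is effectively computable, finite, and expressible in $\ltlpayoff$. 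By Theorem~\ref{thm:ltl-characterization}, the outcomes compatible with some admissible strategy of \player{i} are exactly those satisfying $\Phi_{\mathsf{adm}}^i$, so an outcome~$\rho$ lies in $\bigcap_{i \in \Agt} \outcome(\calG, \admstratset_i(\calG))$ if, and only if, $\rho \models \bigwedge_{i \in \Agt} \Phi_{\mathsf{adm}}^i$.

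Second, the universal statement $\bigcap_{i \in \Agt} \outcome(\calG, \admstratset_i(\calG)) \models \phi$ holds if, and only if, there is \emph{no} outcome~$\rho$ of~$\calG$ satisfying $\bigwedge_{i\in\Agt}\Phi_{\mathsf{adm}}^i \land \lnot\phi$. Thus I would reduce the model-checking problem to deciding emptiness of the set of outcomes of $\calG$ satisfying the single $\ltlpayoff$ formula
\[
	\Psi \defeq \Big(\bigwedge_{i \in \Agt} \Phi_{\mathsf{adm}}^i\Big) \land \lnot \phi,
\]
and reporting that the property holds exactly when this set is empty.

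Third, it remains to decide whether some infinite path in~$\calG$ satisfies a given $\ltlpayoff$ formula~$\Psi$. The atomic propositions of~$\Psi$ are of two kinds: the edge propositions together with the value propositions $\ownProp$, $\aValProp_q$, $\acValProp_q$, $\gAlt_q$, whose truth on each edge is fixed and computable (by Lemma~\ref{lem:all-computable} and the computability of $\aVal$, $\cVal$, $\acVal$), and the payoff predicates $\pfunction{i} \bowtie v$, which under prefix-independence depend only on the suffix. I would handle the purely temporal part by translating the $\ltlpayoff$ formula $\Psi$ into a nondeterministic B\"uchi automaton~$\calB_\Psi$ over the alphabet $\pow(\ap)$ in the standard way, treating each payoff predicate $\pfunction{i}\bowtie v$ momentarily as an opaque atomic proposition; then take the synchronized product of~$\calB_\Psi$ with~$\calG$ (whose edges are already labelled by $\lambda$), yielding a finite graph whose infinite paths project onto exactly the outcomes of~$\calG$ whose temporal-and-value behaviour is consistent with~$\Psi$ modulo the payoff predicates. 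Deciding non-emptiness then reduces to finding an accepting run of this product that additionally realizes the required payoff thresholds along its suffix; since the payoff functions are prefix-independent and the measures are among $\inffun, \supfun, \linffun, \lsupfun, \underline{\mpfun}, \overline{\mpfun}$, these threshold constraints are themselves checkable on the strongly connected components of the product (each predicate $\pfunction{i}\bowtie v$ is determined by the set of edges visited infinitely often, which is computable for these measures).

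The main obstacle, and the step deserving the most care, is precisely this last integration of the \emph{quantitative} payoff predicates $\pfunction{i}\bowtie v$ into the automata-theoretic emptiness check: unlike the value propositions, these predicates are not locally determined by a single edge but by an aggregate (lim sup, lim inf, or mean) over the infinite suffix, so naively labelling edges does not suffice. I expect to resolve this by the standard technique of guessing, for each accepting SCC of the product, the set of edges taken infinitely often and verifying that this set simultaneously satisfies all payoff predicates occurring in~$\Psi$ (a mean-payoff threshold over a fixed edge set reduces to a linear feasibility/shortest-cycle computation, and $\inffun/\supfun/\linffun/\lsupfun$ thresholds reduce to membership conditions on the edge set), which keeps the overall decision procedure effective for each measure and completes the reduction.
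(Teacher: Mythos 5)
Your reduction coincides with the paper's up to dualization: both use Theorem~\ref{thm:ltl-characterization} and Lemma~\ref{lem:all-computable} to replace each set $\outcome(\calG,\admstratset_i(\calG))$ by the effectively constructible formula $\Phi_{\mathsf{adm}}^i$, and both reduce the problem to a single $\ltlpayoff$ model-checking question (your emptiness check for $\bigwedge_{i\in\Agt}\Phi_{\mathsf{adm}}^i \land \lnot\phi$ is just the negation of the paper's check of $\bigl(\bigwedge_{i\in\Agt}\Phi_{\mathsf{adm}}^i\bigr)\implies\phi$). For the measures $\inffun$, $\supfun$, $\linffun$, $\lsupfun$ your B\"uchi-product endgame is sound and matches the paper in substance: after the prefix-independence transformation these payoffs really are determined by the set of edges visited infinitely often, so the payoff predicates are regular and the whole question becomes ordinary LTL emptiness.

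The gap is in the mean-payoff case. Your parenthetical justification---that ``each predicate $\pfunction{i}\bowtie v$ is determined by the set of edges visited infinitely often''---is false for $\underline{\mpfun}$ and $\overline{\mpfun}$: two runs visiting exactly the same edges infinitely often can have different limit averages, because frequencies matter. You partly acknowledge this by switching to a realizability check (``guess the edge set, then linear feasibility / shortest cycle''), but that patch is exactly where the difficulty lives, and your sketch does not carry it. Your formula $\Psi$ imposes an arbitrary Boolean combination of limit-average constraints in several weight dimensions on a \emph{single} run: $\phi_{\ref{eqn:adm2}}$ contributes equalities $\pfunction{i}=q$ (hence both lower and upper bounds), and $\lnot\phi$ contributes negated thresholds. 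Deciding whether one run can simultaneously realize mixed upper- and lower-bound constraints on liminf/limsup averages across dimensions is not a per-predicate shortest-cycle or single-flow LP question on a guessed edge set: the achievable payoff vectors are governed by the sets of accumulation points of the running-average sequence inside the convex hull of cycle-mean vectors, and constraints in different dimensions interact through this structure. This simultaneous-realizability problem is precisely the non-trivial decidability result of~\cite{BCHK-acm14}, which the paper invokes at this point instead of re-proving it. So either cite that result, as the paper does, or expand your last paragraph into an actual proof of it; as written, it asserts the hardest ingredient of the theorem in one sentence, supported by a claim that is incorrect for mean-payoff.
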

\begin{proof}[Proof Sketch]
For each player~$i \in \Agt$, consider the formula $\Phi_{\mathsf{adm}}^i$
from Theorem~\ref{thm:ltl-characterization}, which describes
the set $\outcome(\calG, \admstratset_i(\calG))$. The formula is finite and constructible
by Lemma~\ref{lem:all-computable}.
The problem now amounts to verifying if $\calG$ satisfies
the specification $\left(\bigwedge_{i \in \Agt} \Phi_{\mathsf{adm}}^i \right)
\implies \phi$. For all payoff functions, except mean-payoff, this can be reduced to
model checking an \LTL{} formula (since the measures are regular).
For $\overline{\mpfun}$ and $\underline{\mpfun}$, the result follows from \cite{BCHK-acm14}
which shows that the model checking problem against \ltlpayoff{} is
decidable.
\end{proof}

\smallskip
\noindent\textbf{Quantitative assume-admissible synthesis.}
In~\cite{brs15}, a new rule for reactive synthesis in non-zero sum $n$-player
games was proposed. The setting there is similar to the setting considered
here but it is Boolean: each player $i \in \Agt$ has his own omega-regular
objective $O_i \subseteq V^{\omega}$. The synthesis rule asks if player $i \in
\Agt$ has a strategy to enforce its own objective $O_i$ against
admissible strategies of the other players. In other words, the rule asks for
the existence of worst-case optimal strategies against rational adversaries.

The quantitative extension of this problem asks given a game $\calG$, a player
$i \in \Agt$, and a \ltlpayoff{} formula~$\phi$,
\(
	\exists \sigma \in \mathfrak{A}_i, \forall \tau \in \mathfrak{A}_{-i}, 
	\phi.
\)
Using Theorem~\ref{thm:ltl-characterization}, we can reduce this query to a
plain two-player zero-sum game on the game structure $\calG$ with objective:
\[ \textstyle
	\exists \sigma \in \Sigma_i,
	\forall \tau \in \Sigma_{-i}, 
    \Phi_{\mathsf{adm}}^i
    \land 
	\left(
    \left(\bigwedge_{j \in \Agt \setminus \{i\}} 
    \Phi_{\mathsf{adm}}^j
    \right)
    \implies
    \phi
    \right)
\]

Since for $\inffun$, $\supfun$, $\linffun$, $\lsupfun$, 
$\Phi_{\mathsf{adm}}^i$ and~$\phi$ are omega-regular,
the problem reduces to deciding the winner in a two-player zero-sum game with omega-regular
objectives. As a consequence, we obtain the following theorem:

\begin{theorem}\label{thm:aa-synthesis}
The quantitative assume-admissible synthesis problem for player~$i \in \Agt$ is
decidable for measures $\inffun$, $\supfun$, $\linffun$, $\lsupfun$.    
\end{theorem}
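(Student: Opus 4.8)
The plan is to prove the theorem by establishing that the reduction announced just above the statement is correct and that the resulting game is effectively solvable for the four measures. Write $\Psi \defeq \Phi_{\mathsf{adm}}^i \land \big( (\bigwedge_{j \in \Agt \setminus \{i\}} \Phi_{\mathsf{adm}}^j) \implies \phi \big)$ for the objective of the protagonist (\player{i}) in the two-player zero-sum game $H$ played on the arena of $\calG$, with the coalition $-i$ as antagonist. I would prove the equivalence ``the assume-admissible query holds'' $\iff$ ``\player{i} wins $H$'', and then invoke determinacy and effective solvability of finite $\omega$-regular games. Throughout I record two consequences of Theorem~\ref{thm:ltl-characterization}: (A) if $\sigma \in \admstratset_i$ then every outcome of $\sigma$ satisfies $\Phi_{\mathsf{adm}}^i$; and (B) if $\rho \models \Phi_{\mathsf{adm}}^j$ then $\rho$ is compatible with some admissible strategy of \player{j}.

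For the direction ``query $\Rightarrow$ win'' I take a witness $\sigma \in \admstratset_i$ and play it in $H$. Against any $\tau \in \stratset_{-i}$ the outcome $\rho$ satisfies $\Phi_{\mathsf{adm}}^i$ by~(A); and whenever $\rho \models \bigwedge_{j \neq i} \Phi_{\mathsf{adm}}^j$, fact~(B) supplies, for each $j \neq i$, an admissible strategy compatible with the single outcome $\rho$, so that these strategies together with $\sigma$ form a profile in $\admstratset_{-i}$ producing exactly $\rho$; the query then gives $\rho \models \phi$. Hence every outcome of $\sigma$ satisfies $\Psi$ and $\sigma$ wins $H$.

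The converse is where I expect the real difficulty. A strategy $\sigma$ winning $H$ forces every outcome into $\Phi_{\mathsf{adm}}^i$, but this does \emph{not} make $\sigma$ admissible: at a history $h$ with $\aVal_i(h) < \acVal_i(h)$, $\sigma$ may merely steer every play through an adversary vertex offering a high-$\cVal$ alternative, thereby satisfying the $\F(\gAlt_q)$ disjunct of $\phi_{\ref{eqn:adm1}}$, while itself refusing to realize that cooperative value, so that $\cVal_i(h,\sigma) \le \aVal_i(h)$ and $\propadm{h}{\sigma}$ fails. I must therefore \emph{extract} an admissible witness rather than reuse $\sigma$. The construction mirrors the converse part of Theorem~\ref{thm:ltl-characterization}: let $\sigma^\star$ follow $\sigma$, but at each \player{i} history $h$ where $\cVal_i(h) > \aVal_i(h)$ switch to a strongly cooperative-optimal strategy (which exists by Lemma~\ref{lem:strongcoopop-exist}) so as to secure $\cVal_i(h,\sigma^\star)=\cVal_i(h)>\aVal_i(h)$, and where $\cVal_i(h)=\aVal_i(h)$ play a worst-case optimal move; by Theorem~\ref{thm:lhd-admissibles} this guarantees $\sigma^\star \in \admstratset_i$. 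The crux, and the main obstacle, is to check that this patching keeps all outcomes inside $L \defeq \{\rho \st (\bigwedge_{j\neq i}\Phi_{\mathsf{adm}}^j) \implies \phi\}$: the branches altered by the patch are exactly those on which an adversary exercises a cooperative alternative, and on such branches $\sigma^\star$ only increases \player{i}'s payoff; the delicate point I would verify is that a new cooperative outcome which additionally lies in $\bigwedge_{j\neq i}\Phi_{\mathsf{adm}}^j$ still satisfies $\phi$, for which I would again appeal to~(B) to realize it by a profile in $\admstratset_{-i}$ and then to the query being witnessed along $\sigma$. This requires carefully coordinating the switch points of $\sigma^\star$ with the outcome characterization.

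Once the equivalence is in place, decidability follows by a standard argument. For $\inffun$, $\supfun$, $\linffun$, $\lsupfun$, each payoff predicate $\pfunction{i} \bowtie v$ is $\omega$-regular, as there are only finitely many relevant thresholds on a finite arena, so every $\ltlpayoff$ formula over these measures denotes an $\omega$-regular language. By Lemma~\ref{lem:all-computable} the formulas $\Phi_{\mathsf{adm}}^i$ and $\Phi_{\mathsf{adm}}^j$ are finite and effectively constructible, and $\phi$ is given; hence $\Psi$, a Boolean combination of $\omega$-regular objectives, is $\omega$-regular with a computable automaton. The game $H$ is thus a finite-arena two-player zero-sum game with an $\omega$-regular objective, which is determined and whose winner is decidable, for instance by translating $\Psi$ to a deterministic parity automaton, taking the product with $\calG$, and solving the resulting parity game. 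This yields decidability for the four measures; the argument breaks for $\underline{\mpfun}$ and $\overline{\mpfun}$ precisely because their payoff predicates are not $\omega$-regular.
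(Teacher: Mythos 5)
Your ``query $\Rightarrow$ win'' direction is correct, and your facts (A) and (B) are exactly the right way to exploit Theorem~\ref{thm:ltl-characterization}; note that the paper never spells out even this half --- its entire justification is the displayed reduction plus the observation that the objective is $\omega$-regular. The genuine gap is in your converse direction, and it is not fixable by ``carefully coordinating switch points''. The patch you describe switches to an SCO strategy at every \player{i} history $h$ with $\cVal_i(h) > \aVal_i(h)$; in many games this holds at \emph{every} history, so $\sigma^\star$ discards the winning strategy $\sigma$ from the first move on. Moreover these alterations happen at \player{i}'s own decision points (not only ``on branches where an adversary exercises a cooperative alternative''), and they occur against admissible adversaries too; since $\phi$ is an arbitrary formula, not monotone in \player{i}'s payoff, the new outcomes need not satisfy $\phi$, and fact (B) cannot rescue them because they are in general not outcomes of $\sigma$ at all.

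In fact no repair can work: the equivalence you set out to prove is false. Take vertices $a$ (adversary), $p_1,p_2$ (\player{i}), $b$ (adversary), a sink $t$, edges $a\to p_1$, $a\to p_2$, $p_1\to a$, $p_2\to a$, $p_2\to b$, $b\to a$, $b\to t$, $t\to t$, initial vertex $a$; \player{i}'s payoff is $\lsupfun$ with weight $10$ on $t\to t$ and $0$ elsewhere; the adversary's payoff is identically $0$, so all its strategies are admissible and $\Phi_{\mathsf{adm}}^j$ holds on every outcome; and $\phi = \G\lnot Q_b$, where $Q_b$ labels the unique edge entering $b$. In the region $\{a,p_1,p_2,b\}$ one has $\aVal_i \equiv 0$ and $\acVal_i = \cVal_i \equiv 10$, so by Theorem~\ref{thm:lhd-admissibles} every admissible strategy of \player{i} must keep $t$ reachable from every compatible history, hence has a compatible outcome visiting $b$: the query fails. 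Yet the strategy ``always return to $a$'' wins the zero-sum game: its outcomes never visit $b$, and each satisfies $\Phi_{\mathsf{adm}}^i$ because $\gAlt_0$ holds at every visit to $a$ (the other successor of $a$ has $\cVal_i = 10 > 0$), so $\phi_{\ref{eqn:adm1}}$ holds via its $\F(\gAlt_0)$ disjunct --- exactly the ``vacuous'' satisfaction you flagged. The moral is that $\Phi_{\mathsf{adm}}^i$ constrains outcomes one at a time, whereas admissibility is a property of the whole strategy tree, so the conjunct $\Phi_{\mathsf{adm}}^i$ does not encode admissibility of the synthesized strategy. Your attempt thus exposes a real flaw that the paper's two-line argument glosses over: the stated reduction is sound for the variant $\exists \sigma \in \stratset_i, \forall \tau \in \admstratset_{-i}: \phi$ (by your (B) argument, after dropping the conjunct $\Phi_{\mathsf{adm}}^i$), but deciding the query as written, with $\exists\sigma\in\admstratset_i$, requires a genuinely different construction.
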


For the measures $\underline{\mpfun}$, $\overline{\mpfun}$, we obtain objectives
in which mean-payoff constraints and omega-regular constraints are mixed. On the
one hand, those objectives are outside known decidable classes of objectives
treated in~\cite{chj05} and in~\cite{crr13}. On the other hand, the
undecidability results obtained in~\cite{velner15} do not apply to them. This
motivates further research on zero-sum two player games with a mix of
mean-payoff and omega-regular objectives.

\smallskip
\noindent\textbf{Towards iterative elimination.}
Once we have computed the admissible strategies for each player, we restrict each player
to these strategies, and repeat the computation of the admissible strategies in
the restricted game.
This can be iterated several times and gives a process that
is called \emph{iterative elimination of dominated strategies}, and well known
in game theory.
This process is
difficult to analyze for mean-payoff, because objectives of
different players interfere in non-trivial ways and games with Boolean
combinations of mean-payoff objectives are undecidable~\cite{velner15}.  However
it seems feasible for regular payoffs, such as $\inffun$, $\supfun$, $\linffun$
and $\lsupfun$, for which we can construct parity automata recognizing outcomes
with $\pfunction{i} > q$.  
Given $i \ge 0$, we can actually compute a
parity automaton accepting the set of outcomes of $\mathcal{S}^i$ which is the set of
strategies that remain after~$i$ steps of elimination.
We summarize here the ingredients but leave the details for future work.
Assume we have a parity automaton representing the outcomes of
$\mathcal{S}^i$.  Note that for $i= 0$ this is simply all outcomes.
If the payoffs are regular, then we can compute values
$\cVal_i(h,\mathcal{S}^i)$, $\aVal_i(h,\mathcal{S}^i)$ and
$\acVal_i(h,\mathcal{S}^i)$, which correspond to cooperative, antagonistic, and
antagonist-cooperative values when players only play strategies from
$\mathcal{S}^i$.
We can then use these values as atomic propositions for a \ltlpayoff\ formulas
similar to $\Phi_{\mathsf{adm}}^i$ of Section~\ref{sec:ltl-characterization},
which characterizes outcomes of strategies of $\mathcal{S}^{i+1}$.  In the case
of regular payoffs this yields a parity automaton which represents the outcomes of
$\mathcal{S}^{i+1}$. This procedure can then be repeated to compute
outcomes that are possible under iterative elimination.

\bibliographystyle{plainurl}
\bibliography{biblio}

\clearpage

\section{Formal definition of the considered payoff functions}
We recall the definition of the measures below:
\begin{itemize}
	\item the $\inffun$ ($\supfun$) payoff, 
		is the minimum (maximum) weight seen along an outcome:
\(
	\inffun(\pi) = \inf\{ w(v_i,v_{i+1}) \st i \ge 1\}
\) and \(
	\supfun(\pi) = \sup\{ w(v_i,v_{i+1})\st i \ge 1\};
\)
	\item the $\linffun$ ($\lsupfun$) payoff,
		is the minimum (maximum) weight seen infinitely
		often:
\(
	\linffun(\pi) = \liminf_{i \to \infty} w(v_i,v_{i+1})
\) and \(
	\lsupfun(\pi) = \limsup_{i \to \infty} w(v_i,v_{i+1});
\)
	\item the \emph{mean-payoff} value, \ie~the limiting average
		weight, defined using $\liminf$ or $\limsup$ since
		the running averages might not converge:
\(
	\underline{\mpfun}(\pi) = \liminf_{k \to \infty} \frac{1}{k}
	\sum_{i = 1}^{k-1} w(v_i,v_{i+1}) 
\) and \(
	\overline{\mpfun}(\pi) = \limsup_{k \to \infty} \frac{1}{k}
	\sum_{i = 1}^{k-1} w(v_i,v_{i+1}).
\)
\end{itemize}

\section{Proof of Lem.~\ref{lemma:residual}}
\begin{proof}
  We have
  \[\begin{array}{ll}
      \aVal_i(\calG_h, h') &= \inf_{\tau} \sup_{\sigma} \pfunction{i}'(\outcome_{h'}(\calG_h, \sigma, \tau))\\
                           &= \inf_{\tau} \sup_{\sigma} \pfunction{i}(h \Cdot \outcome_{h'}(\calG_h, \sigma,\tau)).
    \end{array}
  \]
  Now for each fixed~$\tau$, 
  \[\begin{array}{l}
        \sup_{\sigma} \pfunction{i}(h \Cdot \outcome_{h'}(\calG_h, \sigma,\tau))
      = \sup_{\sigma} \pfunction{i}(\outcome_{h \Cdot h'}(\calG, \sigma,\tau)),
    \end{array}
  \]
  since for any~$\sigma$ on the left-hand side, one can define a strategy~$\sigma'$ by:
  $\sigma'(g) = \sigma(g_{\geq |h|})$ if~$h\Cdot h' \prefix g$, and defined arbitrarily otherwise.
  This proves that the LHS is less than or equal to the RHS.
  Conversely, for any~$\sigma$ on the right hand side, we can define~$\sigma'$ by
  $\sigma'(g) = \sigma(h\Cdot g)$ if~$h' \prefix g$ and arbitrarily otherwise.
  It follows that $\aVal_i(\calG_h, h') = \aVal_i(\calG,h\Cdot h')$.

  The cases for~$\acVal_i$ and~$\cVal_i$ are shown similarly.
\end{proof}

\section{Proof of Lem.~\ref{lem:sometimes-no-adm}}
\label{appendix:no-adm}

  \begin{proof}
	  As we noted above, the antagonistic (thus the cooperative) value at
	  vertex~$s_1$ is~$\infty$.  Given any strategy~$\sigma$, if~$s_3$ is
	  never visited, then~$\sigma$ is clearly weakly dominated by a strategy that
	  follows the outcome $(s_1s_2a)s_3^\omega$.  Otherwise, assume~$\sigma$
	  generates the outcome $h\cdot s_3^\omega$ where~$h$ does not
	  contain~$s_3$.  Then, $\sigma$ is weakly dominated by a strategy that
	  generates $h(s_2as_1)s_3^\omega$ whose payoff is greater than that
	  of~$\sigma$.  Thus, all strategies are weakly dominated.

    Let us now show that player~$1$ has no admissible strategy in~$\calG$.
    Consider any strategy~$\sigma$. Assume that $s_3$ is never visited
    by~$\sigma$ on compatible histories. In this case, the strategy that 
    goes to~$s_2$ once, and then goes to~$s_3$ dominates~$\sigma$. 
    In fact, its payoff is~$0$ in the worst-case, and for some adversary strategy,
    it yields a payoff of~$1$.
    Assume otherwise that at some history~$h$ with~$\last(h) = s_1$, $\sigma$ goes to~$s_3$.
    We have that
    \[
      \aVal_i(h) = |h|_a = \aVal_i(h,\sigma) = \cVal_i(h,\sigma) < \cVal_i(h) = \infty,
    \]
    where~$|h|_a$ is the number of~$a$'s in~$h$.
     Let us define~$\sigma'$ identically to~$\sigma$, except that at~$h$,
     it goes to~$s_2$, and at the next visit to~$s_1$, it goes to~$s_3$.
     Then $\sigma'$ dominates~$\sigma$. In fact, for all strategies~$\tau \in \Sigma_{-i}$,
     with $h \not \prefix \outcome(\sigma,\tau)$, we have $\outcome(\sigma',\tau) = \outcome(\sigma,\tau)$.
     For all $\tau \in \Sigma_{-i}$ for which~$h$ is a prefix of $\outcome(\sigma,\tau)$, the payoff is
     $\aVal_i(h)$. On the other hand, $\outcome(\sigma',\tau)$ has payoff at least~$\aVal_i(h)$,
     and for some particular~$\tau_0 \in \Sigma_{-i}$, the payoff of~$\outcome(\sigma',\tau_0)$ is
     $\aVal_i(h) +1$. 
     It follows that~$\sigma$ is weakly dominated.
  \end{proof}

\section{Proof of Lem.~\ref{lem:strongcoopop-exist}}
\label{appendix:strongcoopop-exist}
\begin{proof}
  We construct a strongly cooperative-optimal strategy
  for player~$i$, as follows.
  For each history~$h$ with $\aVal_i(h) < \cVal_i(h)$,
  let us fix an outcome~$\rho_h$ with $h \prefix \rho_h$ such that
  $\pfunction{i}(\rho_h) > \aVal_i(h)$. Such an outcome exists
  by Assumption~\ref{ass:optimal-coop-strats}; let us denote
  by~$(\sigma_h^\sco,\sigma_{h,-i}^\sco)$ any strategy profile compatible with
  $\rho_h$.
  Furthermore, for each~$h$
  with $\aVal_i(h) = \cVal_i(h)$, we fix a strategy~$\sigma_h^\wco$
  such that~$\aVal_i(h, \sigma_h^\wco) = \aVal_i(h)$, which exists
  by Assumption~\ref{ass:optimal-strats}. 
  Informally, we define our strategy~$\sigma$ as follows.
  If~$\aVal_i(h) = \cVal_i(h)$, then we switch to~$\sigma_h^\wco$.
  Otherwise, we start following~$\rho_h$, and whenever a player deviates
  from~$\rho_h$, say, at history~$h'$ with $h\prefix h'$, we start again
  according to whether~$\aVal_i(h') < \cVal_i(h')$.

     
    Let us now formalize the  strategy described in the core of the paper. In particular, we need to describe the set of histories
    at which the strategy switches to~$\rho_h$ or to~$\sigma^\wco_h$.
    We define \emph{decision points $\calD$} as a set of histories, where such a decision will be made,
    incrementally. We define~$\calD_i$, for each~$i$, such that~$\calD_0\cup\ldots\cup\calD_i$
    contains at least all decision points of length at most~$i$, and possibly some additional longer decision points.
    We will then let~$\calD = \cup_{i \geq 0} \calD_i$. Here, notice that for all~$j\geq i$, 
    $\{ h \in \history_\vinit, |h| \leq i \} \cap \calD_j$ is constant. So the union can be seen as a limit.

  Initially, for~$i=0$, $\vinit \in \calD_0$. Consider now $i>0$, and assume~$\calD_0,\ldots,\calD_i$ have been defined.
  For all~$h \in \calD_0\cup\ldots \cup\calD_i$ with~$|h|=i$, if~$\aVal_i(h) = \cVal_i(h)$, then~$\calD_{i+1}$ contains no history
  extending~$h$. If~$\aVal_i(h) < \cVal_i(h)$, then we add to~$\calD_{i+1}$ all histories of the following set:
  \[
    \begin{array}{c}
    \{ h' \mid h \prefix h' \land h'_{|h'|-1} \in V_{-i} \land h'_{\leq |h'|-1} \prefix \rho_h \land h' \not \prefix \rho_h \}\\
      \cup\\
      \{ h' \mid h \prefix h' \prefix \rho_h \land \aVal_i(h') = \cVal_i(h') )\}.
    \end{array}
  \]
  In other terms, all histories extending~$h$, and deviating from~$\rho_h$ by one step due to some player in~$-i$,
  and those prefixes of~$\rho_h$ where the antagonistic value equals the cooperative value.

  For any~$h$, let us define~$d(h)$ as the longest prefix of~$h$ that belongs to~$\calD$. This is well defined since~$\vinit \in \calD$.
  We now define our strategy~$\sigma$ as
  \[
    \sigma(h) = \left\{\begin{array}{ll}
                         \sigma_{d(h)}^\wco(h) & \text{if } \aVal_i(d(h)) = \cVal_i(d(h)),\\
                         \sigma_{d(h)}^\sco(h) & \text{if } \aVal_i(d(h)) < \cVal_i(d(h)).\\
                       \end{array}\right.
  \]
  We now show that
  \begin{claim}
	  $\sigma$ is SCO.
  \end{claim}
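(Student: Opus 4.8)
The plan is to fix an arbitrary history $h \in \history_\vinit(\sigma)$, consider its last decision point $d \defeq d(h)$, and verify the appropriate clause of the SCO definition according to whether $\aVal_i(d) = \cVal_i(d)$ or $\aVal_i(d) < \cVal_i(d)$. The first thing I would establish is a structural description of the histories compatible with $\sigma$ whose last decision point is $d$, proved by induction on $|h|-|d|$. If $\aVal_i(d)=\cVal_i(d)$ then, by construction, no extension of $d$ is ever added to $\calD$, so $\sigma$ plays $\sigma^\wco_d$ at every such history and $h$ is compatible with $\sigma^\wco_d$. If $\aVal_i(d)<\cVal_i(d)$ then $h \prefix \rho_d$ and every $h'$ with $d \prefix h' \preceq h$ satisfies $\aVal_i(h')<\cVal_i(h')$; otherwise $h'$ would already be a decision point---either a one-step adversarial deviation from $\rho_d$ or a prefix of $\rho_d$ with $\aVal_i=\cVal_i$---contradicting maximality of $d$. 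Throughout I take $\rho_h$ to be cooperatively optimal, that is $\payoff{i}(\rho_h)=\cVal_i(h)$, which exists by Assumption~\ref{ass:optimal-coop-strats}.

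For the case $\aVal_i(d)<\cVal_i(d)$, I would first note that every prefix $h'$ of $\rho_d$ extending $h$ has $\cVal_i(h')=\cVal_i(d)$: the outcome $\rho_d$ itself witnesses $\cVal_i(h')\ge\payoff{i}(\rho_d)=\cVal_i(d)$, while outcomes from $h'$ form a subset of those from $d$, giving $\cVal_i(h')\le\cVal_i(d)$. Letting the adversary cooperate along $\rho_d$, the strategy $\sigma$ follows $\rho_d$ until, possibly, the first prefix $h''\succ h$ of $\rho_d$ with $\aVal_i(h'')=\cVal_i(h'')$, where it switches to $\sigma^\wco_{h''}$; but there $\aVal_i(h'')=\cVal_i(h'')=\cVal_i(d)$, so $\sigma^\wco_{h''}$ secures $\cVal_i(d)$ as well. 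In either situation some outcome of $\sigma$ from $h$ attains $\cVal_i(d)$, whence $\cVal_i(h,\sigma)=\cVal_i(h)=\cVal_i(d)$. Since moreover $\aVal_i(h)<\cVal_i(h)$ by the structural observation, the first clause of the SCO definition applies, and it holds.

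For the case $\aVal_i(d)=\cVal_i(d)$, I would rely on two monotonicity facts: restricting the adversary to histories passing through $h$ can only raise the infimum defining the secured value, so $\aVal_i(h,\sigma)=\aVal_i(h,\sigma^\wco_d)\ge\aVal_i(d)$; and $\cVal_i(h)\le\cVal_i(d)$ because outcomes from $h$ are a subset of those from $d$. Together with the universal inequalities $\aVal_i(h,\sigma)\le\aVal_i(h)\le\cVal_i(h)$, this produces the chain
\[
  \cVal_i(d)=\aVal_i(d)\le\aVal_i(h,\sigma)\le\aVal_i(h)\le\cVal_i(h)\le\cVal_i(d),
\]
forcing all terms to coincide. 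Hence $\aVal_i(h)=\cVal_i(h)$ and $\aVal_i(h,\sigma)=\aVal_i(h)$, so the second clause applies and holds; as a byproduct this shows that committing to $\sigma^\wco_d$ never strands $\sigma$ at a history with $\cVal_i(h)>\aVal_i(h)$. As every compatible history falls into exactly one case and meets the relevant clause, $\sigma$ is SCO.

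I expect the structural observation of the first paragraph to be the main obstacle: one has to argue rigorously, via the interlocking induction between ``$\sigma$ stays on $\rho_d$'' and ``no new decision point appears strictly between $d$ and $h$'', that compatible histories behave as claimed, and to treat carefully the subtlety that $\sigma$ may abandon $\rho_d$ at an $\aVal=\cVal$ prefix yet still realize $\cVal_i(d)$. Once that bookkeeping is in place, both cases reduce to the elementary monotonicity of $\aVal$ and $\cVal$ along histories together with the always-valid bound $\aVal_i\le\cVal_i$.
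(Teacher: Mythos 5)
Your proof is correct and takes essentially the same route as the paper's: the same case split on whether $\aVal_i(d(h)) = \cVal_i(d(h))$ or $\aVal_i(d(h)) < \cVal_i(d(h))$, the same collapsing chain $\cVal_i(d) = \aVal_i(d) \le \aVal_i(h,\sigma) \le \aVal_i(h) \le \cVal_i(h) \le \cVal_i(d)$ in the first case, and the same use of $\rho_{d(h)}$ as the witness for $\cVal_i(h,\sigma) = \cVal_i(h)$ in the second. If anything, you are more careful than the paper on two points it glosses over: you fix $\rho_h$ with $\pfunction{i}(\rho_h) = \cVal_i(h)$ from the start (the paper's construction only demands payoff $> \aVal_i(h)$ yet its proof later uses equality with $\cVal_i(d(h))$), and you explicitly treat the possibility that $\sigma$ abandons $\rho_d$ at an intermediate decision point $h''$ with $\aVal_i(h'') = \cVal_i(h'') = \cVal_i(d)$, a scenario the paper's step ``$h$ is compatible with $\sigma^\sco_{d(h)}$, thus $\cVal_i(h,\sigma) = \cVal_i(d(h))$'' silently skips.
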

  The desired result follows.
  
  For any history~$h$ with~$\aVal_i(h) = \cVal_i(h)$, we have that~$\aVal_i(d(h)) =
  \cVal_i(d(h))$.  In fact, if~$d(h) = h$ then this trivially holds.  Otherwise,
  we have $d(h) \prefix h$. In this case, if~$\aVal_i(d(h)) < \cVal_i(d(h))$,
  then~$h \prefix \rho_{d(h)}$, and by definition of~$\calD$, $h \in \calD$,
  which is a contradiction.  Now, $\aVal_i(d(h), \sigma_{d(h)}^\wco) =
  \aVal_i(d(h)) = \cVal_i(d(h))$ by construction. Since~$h$ is compatible
  with~$\sigma_{d(h)}^\wco$, $\aVal_i(h, \sigma_{d(h)}^\wco) \geq
  \aVal_i(d(h))$, and since~$\cVal_i(\cdot)$ cannot increase
  along a history, we have $\cVal_i(h) \leq \cVal_i(d(h))$. We get that  
  \[
    \aVal_i(d(h)) \leq \aVal_i(h, \sigma_{d(h)}^\wco) \leq \aVal_i(h) \leq
    \cVal_i(h) \leq \cVal_i(d(h)).
  \]
  Since $\aVal_i(d(h)) = \cVal_i(d(h))$, we have that $\aVal_i(h,
  \sigma_{d(h)}^\wco) \leq \aVal_i(h)$.
  By definition of $\mathcal{D}$ we have that no extension of $d(h)$ is
  contained in $\mathcal{D}$ and that therefore $d(h') = d(h)$ for all
  histories $h'$ extending $h$. Thus $\sigma(h') = \sigma_{d(h)}^\wco(h')$ for
  all histories $h'$ extending $h$ and $\aVal_i(h,\sigma) = \aVal_i(h)$.

  Consider now history~$h$ with~$\aVal_i(h) < \cVal_i(h)$.
  If~$d(h) = h$, then 
  \[
	  \cVal_i(h,\sigma) = \cVal_i(h) > \aVal_i(h)
  \]
  by
  construction of~$\sigma$; and we have~\eqref{eqn:adm1}.  Assume otherwise.  We
  cannot have~$\aVal_i(d(h)) = \cVal_i(d(h))$, since the above sequence of
  inequalities again would prove that~$\aVal_i(h) = \cVal_i(h)$ which is not
  true. Thus, we must have~$\aVal_i(d(h)) < \cVal_i(d(h))$.  This means that~$h$
  is compatible with~$\sigma_{d(h)}^\sco$. Thus, we have
  \[
	  \pfunction{i}(\outcome_h(\sigma_{d(h)}^\sco, \sigma_{d(h),-i}^\sco)) =
  	\cVal_i(d(h)) > \aVal_i(d(h)).
  \]
  Now, since $h$ is a prefix of $\rho_{d(h)}$,
  we have
  \[
	  \cVal_i(h) \geq \pfunction{i}(\rho_{d(h)}) = \cVal_i(d(h)).
  \]
  It follows that~$\cVal_i(h,\sigma) = \cVal_i(d(h)) = \cVal_i(h)$
  since~$\cVal_i$ is non-increasing along histories.
\end{proof}

\section{Proof of Lem.~\ref{lem:negated-propadam}}
\label{appendix:negated-propadam}
\begin{proof}
  The negation of $\eqref{eqn:stlhd1}\lor\eqref{eqn:stlhd2}$ yields
  \[
  \begin{array}{l}
    \cVal_i(h,\sigma) > \aVal_i(h) \lor \aVal_i(h,\sigma) \geq \aVal_i(h)\\
    \text{and }\cVal_i(h,\sigma) = \aVal_i(h,\sigma) = \aVal_i(h) \implies
    \acVal_i(h) \leq \aVal_i(h).
   \end{array}
 \]
  We can rewrite the first line as follows
  \[
    \cVal_i(h,\sigma) > \aVal_i(h) \lor (\cVal_i(h,\sigma)\leq \aVal_i(h) \land
    \aVal_i(h,\sigma) \geq \aVal_i(h))\\
  \]
  The second term implies $\cVal_i(h,\sigma)\leq \aVal_i(h) \leq
  \aVal_i(h,\sigma)$, and is thus equivalent to~$\aVal_i(h,\sigma)=\cVal_i(h,\sigma)
  =\aVal_i(h)$. Using this, and distributing the conjunction over the
  disjunction, we get
  \[
    \begin{array}{l}
    \cVal_i(h,\sigma) > \aVal_i(h) \\
    \lor \big(
    \cVal_i(h,\sigma) = \aVal_i(h,\sigma) = \aVal_i(h) \land\\
    \qquad 
    \cVal_i(h,\sigma) = \aVal_i(h,\sigma) = \aVal_i(h) \implies
    \acVal_i(h) \leq \aVal_i(h).
    \big)
    \end{array}
  \]
  Simplifying the second term yields the equivalence with
  $\eqref{eqn:adm1}\lor\eqref{eqn:adm2}$.
\end{proof}

\section{Proof of Thm.~\ref{thm:ltl-characterization}}
\label{appendix:ltl-characterization}.
We first need the following lemma, which formalizes the following intuition: for
any outcome compatible with a given strategy, if $\acVal$ coincides with the
$\aVal$ at some position, and $\aVal$ does not decrease from that point on, then
$\acVal$ and $\aVal$ are constant in the rest of the outcome. A proof of the
claim is given in appendix.
\begin{lemma}
  \label{lem:acaVal-constant}
  Let $\calG$ be a game, $h \in \history_\vinit$ be a history, $\sigma_i$ a
  strategy of \player{i}, and $\rho$ an outcome extending $h$ compatible with
  $\sigma_i$, \ie~$\rho \in \outcome_h(\sigma_i)$.  Assume there exists $j \ge
  |h|$ such that $\acVal_i(\rho_{\le j}) = \aVal_i(\rho_{\le j})$.  The
  following hold
  \begin{enumerate}[label=(a)]
	  \item for all $k > j$, if for all $j \le j' < k$ we have that
		  $\rho_{j'} \in V_i \implies \aVal_i(\rho_{\le j'+1}) \ge
		  \aVal_i(\rho_{\le j'})$ then $\acVal(\rho_{\le k}) =
		  \aVal(\rho_{\le k}) =
		  \acVal(\rho_{\le j}) = \aVal(\rho_{\le j})$;
	  \item if $\aVal_i(\rho_{\leq j},\sigma_i) =
		  \aVal_i(\rho_{\le j})$ then
		  $\pfunction{i}(\rho) = \aVal_i(\rho_{\le j})$
		  and for all $k \ge j$
		  we have $\acVal(\rho_{\le k}) =
		  \aVal(\rho_{\le k}) = \acVal(\rho_{\le j}) =
		  \aVal(\rho_{\le j})$.
  \end{enumerate}
\end{lemma}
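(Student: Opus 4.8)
The plan is to reduce both parts to two monotonicity facts describing how $\aVal_i$ and $\acVal_i$ change along a single outcome, together with the standing inequality $\aVal_i(g) \le \acVal_i(g) \le \cVal_i(g)$, valid at every history $g$ (the left inequality because a worst-case optimal strategy, which exists by Assumption~\ref{ass:optimal-strats}, certifies $\acVal_i(g) \ge \aVal_i(g)$, and the right one because $\acVal_i$ is a supremum over a subset of the strategies defining $\cVal_i$). Fact~(I): at a vertex $\last(g) \in V_i$ we have $\aVal_i(g) = \max_{v'} \aVal_i(g \cdot v')$, and $\acVal_i(g \cdot v') \le \acVal_i(g)$ for every successor $v'$ with $\aVal_i(g \cdot v') = \aVal_i(g)$, since any near-optimal worst-case witness for $\acVal_i(g \cdot v')$ prefixed by the move $v'$ is a worst-case optimal strategy from $g$ with the same cooperative value. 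Fact~(II): at a vertex $\last(g) \in V_{-i}$ we have $\aVal_i(g) = \min_{v'} \aVal_i(g \cdot v')$ (so every successor satisfies $\aVal_i(g \cdot v') \ge \aVal_i(g)$), and $\acVal_i(g \cdot v') \le \acVal_i(g)$ for \emph{every} successor $v'$; the latter I would obtain by extending a near-optimal worst-case witness on the subtree below $v'$ with worst-case optimal strategies on the remaining subtrees (available by Assumption~\ref{ass:optimal-strats}), assembling a worst-case optimal strategy from $g$ whose cooperative value is at least that of the witness.

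Given these facts, part~(a) is a squeezing argument. Under its hypothesis, at each player-$i$ position $j'$ in $[j,k)$ the assumption $\aVal_i(\rho_{\le j'+1}) \ge \aVal_i(\rho_{\le j'})$ and the opposite inequality from Fact~(I) force equality, while at adversary positions $\aVal_i$ cannot decrease by Fact~(II); hence $\aVal_i$ is non-decreasing from $j$ to $k$. Simultaneously the two facts give $\acVal_i(\rho_{\le j'+1}) \le \acVal_i(\rho_{\le j'})$ at every position (at player-$i$ positions the relevant successor has equal $\aVal_i$, so Fact~(I) applies; at adversary positions Fact~(II) applies), so $\acVal_i$ is non-increasing. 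Combining with $\aVal_i \le \acVal_i$ and the hypothesis $\aVal_i(\rho_{\le j}) = \acVal_i(\rho_{\le j})$ yields, for every $k \ge j$ in range,
\[
  \aVal_i(\rho_{\le j}) \le \aVal_i(\rho_{\le k}) \le \acVal_i(\rho_{\le k}) \le \acVal_i(\rho_{\le j}) = \aVal_i(\rho_{\le j}),
\]
so all four values collapse to $\aVal_i(\rho_{\le j})$, which is the claim.

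For part~(b) the extra input is that $\sigma_i$ is worst-case optimal from $\rho_{\le j}$, i.e. $\aVal_i(\rho_{\le j}, \sigma_i) = \aVal_i(\rho_{\le j})$. Writing $q = \aVal_i(\rho_{\le j})$, I would prove by induction on $k \ge j$ the invariant $\aVal_i(\rho_{\le k}, \sigma_i) = \aVal_i(\rho_{\le k}) = \acVal_i(\rho_{\le k}) = q$, using the same one-step facts. At a player-$i$ step the move is deterministic, so $\aVal_i(\rho_{\le k+1}, \sigma_i) = \aVal_i(\rho_{\le k}, \sigma_i) = q$; then $q = \aVal_i(\rho_{\le k+1}, \sigma_i) \le \aVal_i(\rho_{\le k+1}) \le \aVal_i(\rho_{\le k}) = q$ pins $\aVal_i(\rho_{\le k+1}) = q$, and Fact~(I) gives $\acVal_i(\rho_{\le k+1}) = q$. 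At an adversary step, Fact~(II) gives $\acVal_i(\rho_{\le k+1}) \le q$, while $\acVal_i(\rho_{\le k+1}) \ge \aVal_i(\rho_{\le k+1}) \ge \aVal_i(\rho_{\le k+1}, \sigma_i) \ge \aVal_i(\rho_{\le k}, \sigma_i) = q$, so all three collapse to $q$. This establishes the value equalities for all $k \ge j$ (note the invariant is exactly the non-decreasing hypothesis part~(a) needs, so no extra assumption is required). Finally the payoff is squeezed: since $\rho$ is compatible with $\sigma_i$ and extends $\rho_{\le j}$, we have $\pfunction{i}(\rho) \ge \aVal_i(\rho_{\le j}, \sigma_i) = q$ and $\pfunction{i}(\rho) \le \cVal_i(\rho_{\le j}, \sigma_i) \le \acVal_i(\rho_{\le j}) = q$, the last step because $\sigma_i$, being worst-case optimal, is feasible in the supremum defining $\acVal_i(\rho_{\le j})$; hence $\pfunction{i}(\rho) = q$.

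The main obstacle is the monotonicity of $\acVal_i$: unlike $\aVal_i$ and $\cVal_i$, the value $\acVal_i$ is \emph{not} monotone along arbitrary histories—at a player-$i$ vertex a move that drops $\aVal_i$ may raise $\acVal_i$—so the entire argument hinges on establishing that $\acVal_i$ is non-increasing precisely along steps that preserve the antagonistic value, which is what the hypothesis of part~(a) guarantees and what the worst-case optimality of $\sigma_i$ supplies in part~(b). Getting the subtree-recombination argument of Fact~(II) right, and checking that it genuinely relies on Assumption~\ref{ass:optimal-strats}, is the delicate step.
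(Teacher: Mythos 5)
Your proof is correct and follows essentially the same route as the paper's: both establish that $\aVal_i$ is non-decreasing and $\acVal_i$ non-increasing along the relevant segment of $\rho$ --- the latter by gluing a (cooperatively) near-optimal worst-case optimal witness at the later history together with worst-case optimal strategies (available by Assumption~\ref{ass:optimal-strats}) to assemble a worst-case optimal strategy at the earlier history with at least the same cooperative value --- and then squeeze using $\aVal_i \le \acVal_i$ and the hypothesis $\aVal_i(\rho_{\le j}) = \acVal_i(\rho_{\le j})$. The only difference is organizational: you factor the gluing argument into one-step monotonicity facts at $V_i$- and $V_{-i}$-vertices and chain them by induction, whereas the paper performs a single multi-step construction along $\rho_{[j,j']}$ (switching to worst-case optimal strategies on deviations) and argues by contradiction; the mathematical content is the same.
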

\begin{proof}
\item \subparagraph{(a)}
	Consider an arbitrary $k > j$. We first note that the players $-i$
	cannot decrease $\aVal$; more precisely, it follows form the definition
	of $\aVal$ that for any $j'$ with $\rho_{j'} \not \in V_i$,
	$\aVal_i(\rho_{\leq j'+1}) \ge \aVal_i(\rho_{\leq j'})$.  Thus, if for
	all $j \le j' < k$, $\rho_{j'} \in V_i \implies \aVal_i(\rho_{\leq j'+1}) \ge
	\aVal_i(\rho_{\leq j'})$, then we can write
	\begin{equation}\label{eqn:hypo1}
		\forall j \le j < k,  \aVal_i(\rho_{\leq j'+1}) \ge
		\aVal_i(\rho_{ \leq j'}).
	\end{equation}

	We will now argue that $\acVal_i(\rho_{\leq j'}) \le \acVal_i(\rho_{\leq
	j})$ for all $j \le j' < k$. Suppose, towards a contradiction, that
	there is some $j \le j' < k$ such that $\acVal_i(\rho_{\leq j'}) >
	\acVal_i(\rho_{\leq j})$.  We are going to construct a
	strategy~$\sigma''_i$ for \player{i} which witnesses that
	$\acVal_i(\rho_{\leq j})\geq \acVal_i(\rho_{\leq j'})$ showing a
	contradiction.

	We are going to define a strategy profile $(\sigma_i', \sigma_{-i}')$ by
	distinguishing two cases.
    	\begin{enumerate}
		\item If~$\acVal_i(\rho_{\leq j'}) > \aVal_i(\rho_{\leq j'})$,
			then let $(\sigma_i',\sigma'_{-i})$ be a strategy
			profile, and~$\epsilon>0$ such that
        \[
		\pfunction{i}(\outcome_{\rho_{\leq j'}}(\sigma_i',\sigma_{-i}'))
		\geq \acVal_i(\rho_{\leq j'})-\epsilon > \max(\aVal_i(\rho_{\leq
		j'}),\acVal_i(\rho_{\leq j})),
        \]
			while satisfying $\aVal_i(\rho_{\leq j'}, \sigma_i') =
			\aVal_i(\rho_{\leq j'})$.  Such a strategy profile
			exists by the definition of~$\acVal_i$.
		\item If~$\acVal_i(\rho_{\leq j'}) = \aVal_i(\rho_{\leq j'})$,
			then define $\sigma_i'$ as a worst-case optimal strategy
			from history~$\rho_{\leq j'}$ (by
			Assumption~\ref{ass:optimal-strats}), and
			choose~$\sigma_{-i}'$ arbitrarily.        
	\end{enumerate}

	Let~$\sigma^\wco_{i,h}$ be a \player{i} strategy satisfying $\aVal_i(h,
	\sigma^\wco_{i,h}) = \aVal_i(h)$, which exists by
	Assumption~\ref{ass:optimal-strats}.  Let $\sigma''_i$ denote the
	strategy of \player{i} which, from~$\rho_{\leq j}$, follows the
	history~$\rho_{j}\rho_{j+1}\ldots \rho_{j'}$ and switches
	to~$\sigma_i'$, and at any other prefix~$h$ switches
	to~$\sigma^\wco_{i,h}$.  Formally, let us define 
    	\[
    		\sigma_i''(h) = \left\{
        	\begin{array}{ll}
		\rho_{l} & \text{if } h = \rho_{\leq l-1}, l \leq j', \rho_{l-1}
		\in V_i,\\ \sigma_i'(h) & \text{if } \rho_{\leq j'} \prefix h,
		\last(h) \in V_i\\ \sigma_{i,h_1}^\wco(h) & \text{if } h =
		h_1h_2, |h_2|\geq 1, h_1 = \lcp(h,\outcome_{\rho_{\leq
		j'}}(\sigma'_i,\sigma'_{-i})).
      		\end{array}
    		\right.
	\]
	We define similarly,
    	\[
    		\sigma_{-i}''(h) = \left\{
		\begin{array}{ll}
		\rho_{l} & \text{ if } h = \rho_{l-1}, l \leq j',\rho_{l-1} \in
		V_{-i},\\ \sigma_{-i}'(h) & \text{ if } \rho_{\leq j'} \prefix
		h, \last(h) \in V_{-i},\\ \sigma^0_{-i}(h) & \text{ otherwise},
		\end{array}
		\right.
 	\]
	where~$\sigma^0_{-i}$ is an arbitrary strategy profile for~$-i$.

	Now, by construction, $\outcome_{\rho_{\leq
	j}}(\sigma''_i,\sigma''_{-i}) = \outcome_{\rho_{\leq j}}(\sigma_i',
	\sigma_{-i}')$ which has payoff at least $\acVal_i(\rho_{\leq j'}) -
	\epsilon > \acVal_i(\rho_{\leq j})$ in the first case, and equal
	to~$\acVal_i(\rho_{\leq j'}) > \acVal_i(\rho_{\leq j})$ in the second
	case.  Moreover, against any other strategy~$\tau$ of players~$-i$, we
	have 
	\[
		\pfunction{i}(\outcome_{\rho_{\leq j}}(\sigma''_i,\tau)) \geq
		\aVal_i(\rho_{\leq j})
	\]
	since \player{i} switches to $\sigma^\wco_{i,h}$
	at any minimal history~$h$ that is not a prefix of the outcome
	of~$(\sigma_i',\sigma_{-i}')$; and for any such~$h$, $\aVal_i(h) \geq
	\aVal_i(\rho_{\leq j})$. In fact, we saw above that $\aVal_i(h_{\leq
	|h|-1}) \geq \aVal_i(\rho_{\leq j})$. Moreover, since $h_{\leq |h|-1}$
	ends in a vertex in~$V_{-i}$, the antagonistic value cannot decrease:
	so~$\aVal_i(h) \geq \aVal_i(\rho_{\leq j})$.

	This is a contradiction since from history~$\rho_{\leq j}$, $\sigma_i''$
	achieves a cooperative value greater than~$\acVal_i(\rho_{\leq j})$, and
	a worst-case value of at least~$\aVal_i(\rho_{\leq j})$.    

	To conclude, we have shown that for all $j \le j' < k$ both
	$\aVal_i(\rho_{\leq j}) \le \aVal_i(\rho_{\leq j'})$ and
	$\acVal_i(\rho_{\leq j}) \ge \acVal_i(\rho_{\leq j'})$. Since, by
	definition $\aVal(\rho_{\leq j'}) \le \acVal_i(\rho_{\leq j'})$, for all
	$j' \ge j$, and $k$ was chosen arbitrarily, the result follows.

\subparagraph{(b)}
	From~$\aVal_i(\rho_{\leq j},\sigma_i) = \aVal_i(\rho_{\leq j})$,
	and~$\acVal_i(\rho_{\leq j}) = \aVal_i(\rho_{\leq j})$, it follows that
	all outcomes of~$\sigma_i$ that extend~$\rho_{\leq j}$ must have payoff
	exactly~$\aVal_i(\rho_{\leq j})$. In particular, $\pfunction{i}(\rho) =
	\aVal_i(\rho_{\leq j})$.
    
	Let us show that~$\aVal_i(\rho_{\leq k}) = \acVal_i(\rho_{\leq k}) =
	\aVal_i(\rho_{\leq j})$ for all~$k> j$.  Note that since $j \leq k$,
	$\aVal(\rho_{\leq k},\sigma_i) \ge \aVal(\rho_{\leq j},\sigma_i)$.
	Since $\aVal_i(\rho_{\leq k}, \sigma_i) \le \aVal_i(\rho_{\leq
	k})$, we have
    	\[
		\aVal_i(\rho_{\leq k}) \geq \aVal_i(\rho_{\leq k}, \sigma) \geq
		\aVal_i(\rho_{\leq j}, \sigma) = \aVal_i(\rho_{\leq j}).
    	\]
	Towards a contradiction, assume that this is not an equality.  We cannot
	have $\aVal_i(\rho_{\leq k}) < \aVal_i(\rho_{\leq j})$ since, this would
	contradict $\aVal_i(\rho_{\leq j}, \sigma) = \aVal_i(\rho_{\leq j})$.
	Also, if we assume that $\aVal_i(\rho_{\leq k}) > \aVal_i(\rho_{\leq j})$,
	since~$\rho_{\leq k}$ is compatible with~$\sigma_i$ from
	history~$\rho_{\leq j}$, it follows that~$\acVal_i(\rho_{\leq j}) \geq
	\aVal_i(\rho_{\leq k})$, contradiction.

    	The result follows.
\end{proof}

We can now prove Thm.~\ref{thm:ltl-characterization}.
\begin{proof}[Proof of Thm.~\ref{thm:ltl-characterization}]
	\fbox{$\Leftarrow$} 
	Suppose that $\rho$ is compatible with an admissible strategy $\sigma_i$
	for \player{i}. We will show that for any prefix $h$ of $\rho$, such
	that $\last(h) \in V_i$, $\rho,|h|$ satisfies 
	either $\phi_{\ref{eqn:adm1}}$ or
	$\phi_{\ref{eqn:adm2}}$.
	
	First, observe that $(\rho,|h|) \models \aValProp_q$ for exactly one $q \in
	\mathsf{aValues}_i$. Now, since $\sigma_i$ is admissible, we have, from
	Thm.~\ref{thm:lhd-admissibles}, that at $h$ either~\eqref{eqn:adm1}
	or~\eqref{eqn:adm2} holds.
	\begin{itemize}
		\item If~\eqref{eqn:adm1} holds, then $\cVal_i(h,\sigma) >
			\aVal_i(h)$, so there is $\sigma_{-i} \in \Sigma_{-i}$
			compatible with $h$ such that $\pfunction{i}(\rho') >
			\aVal_i(h) = q$ where $\rho' =
			\outcome_h(\calG,\sigma_i,\sigma_{-i})$.  If $\rho
			= \rho'$ then $(\rho,|h|) \models
			\phi_{\ref{eqn:adm1}}$.  Otherwise, let $j$ be the first
			index where $\rho_j \ne \rho'_j$. We have $\rho_{j-1} \in
			V_{-i}$ and $j > |h|$ because $\sigma_i$ and
			$\sigma_{-i}$ are compatible with $h$.  Moreover
			$\cVal_i(\rho_{j-1}) \ge \pfunction{i}(\rho') > q$, so
			$\gAlt_q$ is satisfied by $\rho,j$. Therefore the
			sub-formula $\F(\gAlt_q)$, and thus
			$\phi_{\ref{eqn:adm1}}$, are satisfied by $\rho,|h|$.
		\item Otherwise~\eqref{eqn:adm2} holds, which means that
			$\aVal_i(h,\sigma) = \cVal_i(h,\sigma) = \acVal_i(h) =
			\aVal_i(h) = q$.  By Lem.~\ref{lem:acaVal-constant}
			(b), we have that $\pfunction{i}(\rho) = q$
			and~$\aVal_i$ is constantly equal to~$q$ along $\rho$
			after $h$; so~$\phi_{\ref{eqn:adm2}}$ is satisfied by
			$\rho,|h|$.
	\end{itemize}

	\fbox{$\Rightarrow$}
	Let $\rho$ be an outcome which satisfies $\Phi_{\textsf{adm}}^i$.  We
	define strategy~$\sigma_i$ for \player{i} which follows~$\rho$, and on
	any history that is not a prefix of~$\rho$, switches immediately either
	to a strongly cooperative-optimal or to a worst-case optimal strategy,
	depending on whether $\phi_1$ or $\phi_2$ holds.  Formally, 
    for each history~$h$, 
	let us fix a strongly cooperative-optimal strategy~$\sigma_h^\sco$ 
    for \player{i} in the game~$\calG_h$, 
	a worst-case optimal strategy~$\sigma^\wco_h$
	from~$h$ in~$\calG$, that is, $\aVal_i(h, \sigma^\wco_h) = \aVal_i(h)$.
	Remark the former is guaranteed to exist because of
	Lem.~\ref{lem:residual-admissible}.
	Given $h$ such that $\last(h) \in V_i$, we let
	$\sigma_i(h)$ be
	\[
		\begin{cases}
			\rho_{k+1} & \text{if } h = \rho_{\le k}\\

			\sigma^\sco_{h_0}(h') & \text{if } h_0 \models\phi_1,
              \text{$h_0$ is the smallest prefix of~$h$ with } \last(h) \in V_i \land  h_0 \not \prefix \rho,  \\
              & \text{ and } h' = h_{\geq |h_0|}.\\

			\sigma^\wco_{h_0}(h) &\text{if } h_0 \models
			\lnot\phi_1 \land \phi_2, \text{$h_0$ is the smallest prefix of~$h$ with } \last(h) \in V_i \land  h_0 \not \prefix \rho.\\
		\end{cases}
	\]
	It is clear that~$\rho$ is compatible with~$\sigma_i$.  We now show
	that~$\sigma_i$ is admissible.

	Let us start with histories $h$ that are not a prefix of~$\rho$,
    say, of the form~$h=h_0 \Cdot h'$ where~$h_0$ is the smallest prefix of~$h$ ending in~$V_i$ which is not a prefix of~$\rho$.
    We show that either~\eqref{eqn:adm1} or~\eqref{eqn:adm2} hold for~$h$ and~$\sigma_i$
	by distinguishing two cases.
	\begin{enumerate}
	\item Assume $\sigma_i$ follows $\sigma^\sco_{h_0}$ at~$h_0$, which is
		strongly cooperative-optimal, thus admissible in the game~$\calG_{h_0}$ by Lem.~\ref{lem:sco-are-adm}.
        This strategy thus satisfies~\eqref{eqn:adm1} at~$h'$, that is,
        \begin{equation}
          \label{eqn:resid-adm1}
          \aVal_i(\calG_{h_0}, h') < \cVal_i(\calG_{h_0}, h', \sigma^\sco_{h_0}),
        \end{equation}
        or it satisfies~\eqref{eqn:adm2}, which is 
        \begin{equation}
          \label{eqn:resid-adm2}
          \aVal_i(\calG_{h_0}, h') =\acVal_i(\calG_{h_0},h') = \aVal_i(\calG_{h_0}, h', \sigma^\sco_{h_0}).
        \end{equation}
        We are going to show that~\eqref{eqn:adm1} or~\eqref{eqn:adm2} hold for~$\calG$ at history~$h$.
        
        Assume~\eqref{eqn:resid-adm1} holds at~$h'$ in~$\calG_{h_0}$. By Lem.~\ref{lemma:residual}, we have
        that the LHS of~\eqref{eqn:resid-adm1} is equal to $\aVal_i(\calG, h_0 \Cdot h') = \aVal_i(\calG, h)$.
        But, by the definition of~$\sigma_i$, the RHS of~\eqref{eqn:resid-adm1} is equal to
        $\cVal_i(\calG, h_0 \Cdot h', \sigma_i)$ since~$\sigma_i(h_0\Cdot g) = \sigma^\wco_{h_0}(g)$ for all
        $g \in \history_{\last(h_0)}(\calG)$.
        
        Assume~\eqref{eqn:resid-adm2} holds at~$h'$ in~$\calG_{h_0}$. 
        By Lem.~\ref{lemma:residual}, we have that $\aVal_i(\calG_{h_0},h') = \aVal_i(\calG, h)$
        and~$\acVal_i(\calG_{h_0},h') = \acVal_i(\calG, h)$. Moreover,
        $\aVal_i(\calG_{h_0}, h', \sigma^\sco_{h_0}) = \aVal_i(\calG, h, \sigma_i)$
        since~$\sigma_i(h_0 \Cdot g) = \sigma^\sco_{h_0}(g)$ for all~$g \in \history_{\last(h_0)}(\calG)$.

        Thus~\eqref{eqn:adm1} or~\eqref{eqn:adm2} hold in~$\calG$ for~$\sigma$ and~$h$.

	\item Or $\sigma$ follows $\sigma^\wco_{h_0}$ which means,
      for $k = |h_0|$,
		$(\rho,k) \models \phi_{\ref{eqn:adm2}}$, and so for some
		$q \in \mathsf{aValues}_i$ we have that $(\rho,k)
		\models \aValProp_q \land
		\acValProp_q$.  Therefore $\aVal_i(\rho_{\le k}) =
		\acVal_i(\rho_{\le k})$.  By construction of $\sigma_i$ we have
		that
		$\aVal_i(\rho_{\leq k}, \sigma_i) = \aVal_i(\rho_{\leq k},
		\sigma^\wco_{\rho_{\leq k}}) = \aVal_i(\rho_{\leq k})$. It is
		clear from the definition of $\cVal_i$ that $\aVal_i(\rho_{\leq
		k},\sigma_i) \le \cVal_i(\rho_{\leq k}, \sigma_i)$. We claim
		that, in fact, we have equality in this case. Towards, a
		contradiction, assume that this is not the case. Thus, it holds
		that $\aVal_i(\rho_{\leq k},\sigma_i) < \cVal_i(\rho_{\leq k},
		\sigma_i)$. Since we have already established that
		$\aVal_i(\rho_{\leq k},\sigma_i) = \aVal_i(\rho_{\leq k})$, this
		implies (by def. of $\acVal_i$) that $\acVal_i(\rho_{\leq k}) >
		\aVal_i(\rho_{\leq k})$.  Contradiction.  Hence \eqref{eqn:adm2}
		holds.
	\end{enumerate}

	For any history~$\rho_{\leq k}$ with~$\rho_k \in V_i$, we consider two
	cases.
	\begin{itemize}
		\item If $(\rho,k) \models \phi_{\ref{eqn:adm1}}$  then
			either $\pfunction{i}(\rho) > \aVal_i(\rho_{\leq k})$,
			and~\eqref{eqn:adm1} holds, or $(\rho,k') \models 
			\gAlt_q$ for some
			position~$k'>k$ and for $q = \aVal_i(\rho_{\geq
			k})$.  In the latter
			case,~$\rho_{k'} \in V_{-i}$ and $\cVal_i(\rho_{k'}) >
			\aVal_i(\rho_k)$, so for any history~$\rho_{\leq k}
			\cdot v$ with $v\ne \rho_{k'+1}$, by construction,
			$\sigma_i$ switches to a strongly cooperative-optimal
			strategy; thus $\cVal_i(\rho_{\leq k'} \cdot v,
			\sigma_i) > \aVal_i(h)$. It then follows
			that~$\cVal_i(h, \sigma_i) \geq \cVal_i(\rho_{\leq
			k'}\cdot v,\sigma_i) > \aVal_i(h)$, which
			satisfies~\eqref{eqn:adm1}.
	  \item Otherwise $(\rho,k) \models \phi_{\ref{eqn:adm2}} \land \lnot
		  \phi_{\ref{eqn:adm1}}$.  In
		  particular, since $\phi_2$ is satisfied, there is some $q$ such
		  that~$\pfunction{i}(\rho) = q = \aVal_i(\rho_{\leq k}) =
		  \acVal_i(\rho_{\leq k})$ and $\rho_{\geq k} \models
		  \G(\aValProp_q)$.  Hence the antagonistic values of all
		  histories $h'$ extending $\rho_{\ge k}$
		  are all equal to~$\aVal_i(\rho_{\leq
		  k})$.  Since $\sigma_i$ immediately switches to a worst-case
		  optimal strategy when we do not follow $\rho$, all outcomes
		  of~$\sigma_i$ extending~$\rho_{\leq k}$ have payoff at
		  least~$\aVal_i(\rho_{\leq k})$, so $\aVal_i(\sigma_i,
		  \rho_{\leq k}) = \aVal_i(\rho_{\leq k})$.  Moreover outcomes
		  of $\sigma_i$ from $\rho_{\leq k}$ has coinciding cooperative
		  and antagonistic values. That is, we necessarily have that
		  $\cVal_i(\rho_{\le k},\sigma_i) = \aVal_i(\rho_{\le
		  k},\sigma_i)$. Indeed, if this were not the case, then the
		  cooperative value would be strictly higher and this would
		  contradict the fact that $\aVal_i(\rho_{\leq k}) =
		  \acVal_i(\rho_{\leq k})$---this is by definition of
		  $\acVal_i$.  This shows $\aVal_i(\rho_{\le k},\sigma_i) =
		  \cVal_i(\rho_{\le k},\sigma_i) = \aVal_i(\rho_{\le k}) =
		  \acVal_i(\rho_{\le k})$, hence~\eqref{eqn:adm2} holds.\qed
	\end{itemize}
	\let\qed\relax
\end{proof}




%
\section{Making infimum and supremum functions prefix-independent}
\label{app:prefix-indep}
The payoff functions defined using $\inffun$ and $\supfun$ are not
prefix-independent.  Nevertheless, our results are also applicable to those
measures after applying a simple and classical transformation to the game
structure. This transformation ensures prefix-independence for those measures
for all plays in the new game and a bijection between the strategies in the
original game and the strategies in the modified game. For $\inffun$ (rest.
$\supfun$), the transformation is as follows: for each player $i \in \Agt$, we
record, as additional information in the vertices of the game, the minimal
(resp. maximal) value seen so far with the weight function $w_i$. We then modify
the weight function to output the recorded value if the original label of the
edge that is taken is larger (resp. smaller) than or equal to the recorded
value, and to output the original value otherwise. In the latter case, the new
minimum (resp.  maximum) is recorded. Clearly, the measure $\inffun$ (resp.
$\supfun$) is prefix-independent on all the plays of the new game structure.

\section{Computing the antagonistic-cooperative value}
\label{app:acval}
Under Assumptions~\ref{ass:optimal-strats}-\ref{ass:prefix-indep},
and assuming values~$\aVal_i$ and~$\cVal_i$ can be computed at each vertex,
we show how~$\acVal_i$ can be computed as well.
Given game~$\calG$, history~$h$, and \player{i}, let $\Gaih$ denote
the game obtained by restricting $\calG$ to vertices~$v$ such that $\aVal_i(v)
\ge \aVal_i(h)$. One can check that  all vertices in $\Gaih$ have at least one
outgoing edge and that $\acVal_i(h) = \cVal_i(\Gaih, \last(h))$. 
This is formalized and proven below.
\begin{lemma}\label{lem:acval-subgraph-def}
	 For all games defined using payoff functions considered here,
	\begin{inparaenum}[a)]
		\item any vertex in $\Gaih$ reachable from $\last(h)$ has at
			least one outgoing edge;
		\item for any \player{i} and history~$h$,
			$\cVal_i(\Gaih, \last(h)) = \acVal_i(\calG, h)$.
	\end{inparaenum}
\end{lemma}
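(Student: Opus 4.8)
The plan is to exploit the fact that, under prefix-independence (Assumption~\ref{ass:prefix-indep}), the antagonistic value is positional and satisfies the one-step optimality (Bellman) equations in~$\calG$: for every vertex~$v$, $\aVal_i(v) = \max_{v' \in E_v} \aVal_i(v')$ if $v \in V_i$, and $\aVal_i(v) = \min_{v' \in E_v} \aVal_i(v')$ if $v \in V_{-i}$. These hold for all the measures considered here (after the transformation of Appendix~\ref{app:prefix-indep} for $\inffun$ and $\supfun$), and they drive both parts. Throughout I use $\aVal_i(h) = \aVal_i(\last(h))$ and Lemma~\ref{lemma:residual}/prefix-independence to ignore prefixes freely.

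For part a), I would prove the stronger statement that \emph{every} vertex of $\Gaih$ keeps an outgoing edge. Fix $v$ with $\aVal_i(v) \geq \aVal_i(h)$. If $v \in V_i$, the Bellman equation yields a successor $v'$ with $\aVal_i(v') = \aVal_i(v) \geq \aVal_i(h)$, so the edge $(v,v')$ survives in $\Gaih$. If $v \in V_{-i}$, the equation gives $\aVal_i(v') \geq \aVal_i(v) \geq \aVal_i(h)$ for \emph{all} successors $v'$, so all of $v$'s edges survive, and $v$ has at least one by the standing assumption that every vertex has an outgoing edge.

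For part b), the key observation—which I would isolate as a small claim—is that any worst-case optimal strategy $\sigma_i$ from $h$ (with $\aVal_i(h,\sigma_i) \geq \aVal_i(h)$) produces only plays that stay inside $\Gaih$. Indeed, if a play compatible with $\sigma_i$ visited some $v$ with $\aVal_i(v) < \aVal_i(h)$, the opponent could steer the play to $v$ and then play antagonistically; by prefix-independence and the definition of $\aVal_i(v)$ this yields payoff arbitrarily close to at most $\aVal_i(v) < \aVal_i(h)$ against $\sigma_i$, contradicting worst-case optimality. With this claim the two inequalities follow. For $\acVal_i(\calG,h) \leq \cVal_i(\Gaih,\last(h))$: any $\sigma_i$ witnessing a value close to $\acVal_i(\calG,h)$ is worst-case optimal, so all its outcomes live in $\Gaih$, whence $\cVal_i(\calG,h,\sigma_i) \leq \cVal_i(\Gaih,\last(h))$; taking the supremum gives the bound. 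For the reverse, I would take a play $\rho$ in $\Gaih$ from $\last(h)$ with payoff close to $\cVal_i(\Gaih,\last(h))$ (and at least $\aVal_i(h)$) and build $\sigma_i$ in $\calG$ that follows $\rho$ and, upon any opponent deviation, switches to a worst-case optimal strategy from the current vertex. Part a) guarantees that a deviation from a $V_{-i}$-vertex on $\rho$ again lands in $\Gaih$, so the switched strategy secures $\geq \aVal_i(h)$; hence $\sigma_i$ is worst-case optimal, while cooperation along $\rho$ gives $\cVal_i(\calG,h,\sigma_i) \geq \pfunction{i}(\rho)$, so $\pfunction{i}(\rho) \leq \acVal_i(\calG,h)$ by definition of $\acVal_i$.

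The main obstacle is the isolated claim that worst-case optimal strategies remain in $\Gaih$: it is where prefix-independence is genuinely used, and where one must argue carefully that the opponent can punish a visit to a low-value vertex regardless of the traversed history. The remaining bookkeeping—handling the suprema with $\epsilon$-approximations and checking that $\pfunction{i}(\rho) \geq \aVal_i(h)$ can be assumed—is routine.
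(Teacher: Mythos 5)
Your proposal is correct and follows essentially the same route as the paper's own proof: part a) rests on the fact that worst-case-optimal moves cannot leave $\Gaih$ (you phrase this via Bellman equations for $\aVal_i$ under prefix-independence, the paper via the worst-case optimal strategy of Assumption~\ref{ass:optimal-strats}), and part b) rests on the same two ingredients as the paper, namely the observation that worst-case optimal strategies generate only plays inside $\Gaih$, and the follow-$\rho$-then-switch-to-worst-case-optimal construction for the reverse inequality. The only cosmetic differences are that you prove part a) for all vertices rather than reachable ones, and compare the two values directly instead of passing through $\acVal_i(\Gaih,\last(h))$ as the paper does.
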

\begin{proof}
\item \subparagraph{Proof of first item.}
	Let $v$ be a vertex in $\Gaih$ reachable from $\last(h)$. Because of
	Assumption~\ref{ass:optimal-strats}, we have that in $\calG$ there is a
	strategy $\sigma_i \in \Sigma_i$ which achieves a payoff of at least
	$\aVal_i(v)$. By construction of $\Gaih$, we also have that $\aVal_i(v)
	\ge \aVal_i(h)$.  Now, if $v \in V_i$, the edge $(v,\sigma_i(v))$ is
	also present in $\Gaih$. Otherwise, $\sigma_i(v)$ from $\calG$ is such
	that $\aVal_i(\sigma_i(v)) < \aVal_i(h)$ and this contradicts our choice
	of $\sigma_i$.  If $v \not\in V_i$, then, by the same argument, all
	outgoing edges from $v$ in $\calG$ should be present in $\Gaih$.

\item \subparagraph{Proof of second item.}
	We observe that for all strategies $\sigma_i \in \Sigma_i$, if $\rho$ is
	an outcome compatible with $\sigma_i$ from $\last(h)$ and $\rho$ visits
	a vertex $v$ not in $\Gaih$, then
	\(
		\aVal_i(\calG,h,\sigma_i) < \aVal_i(\calG,h).
	\)
	Indeed, since $v$ is not in $\Gaih$, we have that $\aVal_i(\calG,v) <
	\aVal_i(\calG,h)$. The claim then follows from
	Assumption~\ref{ass:prefix-indep} and the definition of $\aVal_i$.
	Thus the outcomes affecting
	the value $\acVal_i$ are only those which stay in $\Gaih$. That is, we
	have that
	\[
		\acVal_i(\calG,h)=\acVal_i(\Gaih,\last(h)).
	\]
	By definition of $\cVal_i$ we then get that 
	\begin{equation}
		\cVal_i(\Gaih,\last(h)) \ge \acVal_i(\Gaih,\last(h)) =
		\acVal_i(\calG,h).
	\end{equation}

	Consider now an outcome $\rho$ in $\Gaih$ from $\last(h)$ witnessing the
	cooperative value, that is $\pfunction{i}(\rho) =
	\cVal_i(\Gaih,\last(h))$. Recall the existence of $\rho$ is implied by
	Property~\ref{ass:optimal-coop-strats}. Consider a strategy $\sigma_i$
	for \player{i} which follows $\rho$ until the players $-i$ stops
	following it (that is, until the current history is no longer a prefix
	of $\rho$). At this point, $\sigma_i$ switches to a strategy which
	ensures at least $\aVal_i(\calG,h)$. It is easy to see that $\sigma_i$
	is such that $\cVal_i(\Gaih,\last(h),\sigma_i) =
	\cVal_i(\Gaih,\last(h))$ and, further, $\aVal_i(\Gaih,\last(h),\sigma_i)
	= \aVal_i(\Gaih,\last(h))$. The latter implies that
	\begin{equation}
		\cVal_i(\Gaih,\last(h)) \le \acVal_i(\Gaih,\last(h)) =
		\acVal_i(\calG,h)
	\end{equation}
	which concludes the proof.
\end{proof}


\section{Proof of Thm.~\ref{thm:mc-admissible}}
\begin{proof}
For each player~$i \in \Agt$, consider the formula $\Phi_{\mathsf{adm}}^i$
from Thm.~\ref{thm:ltl-characterization}, which describes
the set $\outcome(\calG, \admstratset_i(\calG))$.
To compute $\Phi_{\mathsf{adm}}^i$, 
we just need to compute $\aVal_i(v)$, $\cVal_i(v)$ and $\acVal_i(v)$ for
each vertex $v$ and player $i \in \Agt$.  This is possible due to
Lem.~\ref{lem:all-computable}.

It is easy to see that the problem now amounts to verifying if $\calG$ satisfies
the specification $\left(\bigwedge_{i \in \Agt} \Phi_{\mathsf{adm}}^i \right)
\implies \phi$.

For payoff functions $\inffun$, $\supfun$, $\lsupfun$, and $\linffun$, all
propositions of the form $\pfunction{i} \bowtie v$ are regular; in particular,
each of them can be replaced by an \LTL{} formula. The problem of model checking
under admissibility is then reduced to model checking the obtained \LTL{}
formula.

For payoff functions $\overline{\mpfun}$ and $\underline{\mpfun}$, we use
the results of~\cite{BCHK-acm14} which show that the model checking problem
against a formula in \ltlpayoff{} is decidable.
\end{proof}

\end{document}